\newtcolorbox[auto counter]{game}[2]{
	label=#1,
	title=Box \thetcbcounter #2,
	colback=blue!10,
	colframe=blue!50!black
}
\newtheorem{theorem}{Theorem}
\newtheorem{lemma}{Lemma}
\newtheorem{corollary}{Corollary}
\newtheorem{definition}{Definition}
\begin{document}
\title{Classically Replaceable Operations}
%\date{}
\author{Guoding Liu}
\orcid{0000-0003-3288-1394}
\author{Xingjian Zhang}
\orcid{0000-0003-0677-6996}
%\email{zhang-xj18@mails.tsinghua.edu.cn}
\author{Xiongfeng Ma}
\orcid{0000-0002-9441-4006}
\email{xma@tsinghua.edu.cn}
\affiliation{Center for Quantum Information, Institute for Interdisciplinary Information Sciences, Tsinghua University, Beijing, 100084 China}

\begin{abstract}
Quantum information science provides powerful technologies beyond the scope of classical physics. In practice, accurate control of quantum operations is a challenging task with current quantum devices. The implementation of high fidelity and multi-qubit quantum operations consumes massive resources and requires complicated hardware design to fight against noise. An approach to alleviating this problem is to replace quantum operations with classical processing. Despite the common practice of this approach, rigorous criteria to determine whether a given quantum operation is replaceable classically are still missing. In this work, we define the classically replaceable operations in four general scenarios. In each scenario, we provide their necessary and sufficient criteria and point out the corresponding classical processing. For a practically favorable case of unitary classically replaceable operations, we show that the replaced classical processing is deterministic. Beyond that, we regard the irreplaceability of quantum operations by classical processing as a quantum resource and relate it to the performance of a channel in a non-local game, as manifested in a robustness measure.
\end{abstract}

\maketitle

\section{Introduction}
Quantum technology realizes information processing tasks that cannot be achieved by classical means. For instance, quantum key distribution (QKD) brings information-theoretic security between two remote parties~\cite{bennett1984quantum,ekert1991Quantum}. In the field of computation, quantum computers have the potential to bring an exponential speedup in solving certain problems. The recent realizations of controllable quantum systems with dozens of qubits exhibit the ``quantum advantage'' over their classical counterparts~\cite{Arute2019,HanSenZhong2020Bosonic,YulinWu2021Superconducting}. On the other hand, quantum information processing is costly. Due to the unavoidable noise in quantum devices, current realizations of high-fidelity multi-qubit quantum operations are extremely challenging. On the contrary, classical processing is in general easier to implement and has higher accuracy.

The tractability of classical processing motivates us to replace some quantum operations in a task with classical ones. In QKD, after the quantum stage of state preparation and measurement, the users would apply certain quantum operations to distill a secret key~\cite{Lo2050Unconditional}. In general, these operations contain a mass of multi-qubit gates, like controlled-NOT and Tofolli gates, and hence are very challenging to realize. Thanks to Shor and Preskill's seminal work to reduce quantum operations to classical post-processing~\cite{Shor2000Proof}, QKD becomes the earliest practical application in quantum information science~\cite{2020QKDreview}.

The idea of replacing is also conducive to quantum computing. Within the decoherence time of a quantum system, the number of implementable operations is bounded~\cite{nielsen2002quantum}, hindering the ability of quantum computing to solve large-scale problems. If we replace some quantum operations in a quantum circuit with classical processing, we can reduce the number of quantum gates. For instance, when realizing a quantum circuit, we shall omit any operation commuting with the observable to be measured, as these operations do not change the measurement results.

These instances show that certain quantum operations under specific scenarios can be replaced by classical processing. In practice, classical replacement is flexible and can be embedded into many quantum information processing protocols. In particular, it is appealing to combine the idea with the recent quantum-classical-hybrid algorithms and data processing techniques, such as the variational quantum algorithms (VQA)~\cite{Cerezo2021Variational} and shadow tomography~\cite{huang2020shadow}, where we may further reduce the difficulty in the quantum part of the task.

Here, we are interested to see what kind of quantum operations are classically replaceable, namely, classically replaceable operations (CROs). In the study of some resource theories, such as quantum coherence of states~\cite{Eric2016DIO,LiuZiWen2017ResourceDestroying,gour2017quantum} and entanglement of quantum channels~\cite{Horodecki2003EBChannel,rosset2018resource}, the issue has been implicitly involved. For example, maximally incoherent operations (MIO)~\cite{Baumgratz2014Coherence,ABERG2004Subspace,Aberg2006BlockCoherence} cannot generate coherence from incoherent states, which means that on the set of incoherent states they act like classical processing. This implies that MIO is classically replaceable to some extent. In the entanglement theory, entanglement breaking (EB) channels destroy the entanglement of the input states. This also shows a clue that they are classically replaceable in some scenarios.

In the literature, these operations are mainly studied under the scenario of coherence manipulation or entanglement manipulation. At the moment, a systematic study of CROs is still missing. There are neither rigorous definitions nor efficient criteria for a CRO. On the other hand, there are quantum operations that cannot be replaced classically. Then, this ``irreplaceability'' implies quantum features intractable to classical means, which might exhibit quantum advantages in some information processing tasks. A mathematical characterization of the irreplaceability would reveal the boundary between quantum and classical technologies.

In this work, we provide rigorous definitions of CROs in four different scenarios, depending on whether the input and output of the operation are classical or quantum. It is worth noting that, whether a quantum operation can be precisely replaced is what we are concerned with for classical replacement. In the case where the input and output are both classical, classical replacement is essentially classical simulation and any quantum operation is a CRO. We shall focus on the remaining three cases where at least either the input or the output is quantum. We characterize the three CRO sets by showing their necessary and sufficient criteria. Furthermore, we establish a resource-theoretic framework to quantify the irreplaceability of an operation. Interestingly, we can prove the existence of a non-local game to show the quantum advantage of irreplaceable operations.

The paper is organized as follows. In Section~\ref{sc:CROdef}, we clarify the definitions of ``classical processing'' and ``classical replaceability'' in this work. In Section~\ref{sc:CROmath}, we show the main result -- providing the mathematical characterization of CROs. Furthermore, we deal with two extensions of CRO and discuss the application of CRO. In Section~\ref{sc:Resource}, we establish a channel resource theory for CRO and quantify the irreplaceability of a channel in a non-local game. Finally, we conclude in Section~\ref{sc:Conclusion} with further discussions.

\section{Classical replaceability}\label{sc:CROdef}
First, we introduce the notations. A quantum system is represented as a state in a Hilbert space $\mathcal{H}$. For simplicity, we assume the dimension of $\mathcal{H}$ is finite. Denote $d = \dim \mathcal{H}$, we could find $d$ orthonormal vectors $\{\ket{e_i}, i\in [d]\}$ as a basis of $\mathcal{H}$, where $[d] \equiv \{0,1,\cdots,d-1\}$. The basis of $\mathcal{H}$ is not unique and we usually define one basis $\{\ket{i}, i\in [d]\}$ as the computational basis of $\mathcal{H}$. If there are $n$ quantum systems $\mathcal{H}_0,\mathcal{H}_1,\cdots, \mathcal{H}_{n-1}$, the computational basis of $\bigotimes_{k=0}^{n-1} \mathcal{H}_k$ is defined as the tensor product of computational basis on each subsystem. Denote $\mathcal{D}(\mathcal{H})$ as the set of all density operators on $\mathcal{H}$. A quantum operation, $O$, is defined as a completely positive and trace preserving (CPTP) map acting on $\mathcal{D}(\mathcal{H})$. That is, $\forall \rho\in\mathcal{D}(\mathcal{H}'\otimes \mathcal{H})$, and $\forall\sigma \in \mathcal{D}(\mathcal{H})$,
\begin{equation}
\begin{split}
I'\otimes O(\rho)&\geq 0,\\
\tr(O(\sigma)) &= \tr(\sigma),
\end{split}
\end{equation}
where $\mathcal{H}'$ is an ancillary quantum system with an arbitrary dimension and $I'$ is the identity operation acting on $\mathcal{D}(\mathcal{H}')$.
In the following, we denote CPTP as the set of all quantum operations. The multiplication or composition on CPTP, is denoted as $\circ$.

A classical system can be viewed as a random variable. If a random variable has $d$ values, we call it as a dit with dimension $d$. Given a dit taking the value $i$ with probability $p_i$, $i\in[d]$, $\sum_i p_i = 1$, the state preparation under basis $\{\ket{e_i}, i\in [d]\}$ is transforming the dit into the quantum state $\sigma = \sum_{i=0}^{d-1}p_i \ketbra{e_i}$.
Reversely, given a quantum state $\rho$, we can perform a measurement under basis $\{\ket{e_i}, i\in [d]\}$ and obtain a random variable taking the value $i$ with probability $\tr(\rho \ketbra{e_i})$.

Now, let us clarify the meaning of ``classical processing''. Classical computers can be viewed as Turing machines or circuit models~\cite{turing1936computable, savage1972computational}. Any function of the form $f:[d]\rightarrow [d^{\prime}]$ can be realized with a certain number of fixed logical gates in classical circuits~\cite{nielsen2002quantum}, where $d, d'\in \mathbb{Z}_{+}$ are positive integers. If we introduce probabilistic Turing machines~\cite{Santos1969Probabilistic, Gill1977Probabilistic}, we have the ability to compute random functions. In principle, we can use a classical computer to realize any probabilistic function subject to a pre-determined probability distribution. In this work, we do not consider the computability of probabilistic Turing machine. That is, we do not consider the computational complexity to realize a random function.

Mathematically, a random function can be described by a stochastic matrix. Given a dit $s$ taking $i$ with probability $p_i$, a probabilistic map $O_c:[d]\rightarrow [d^{\prime}]$ transfers it to a new random variable $s^{\prime} = O_c(s)$. A specific result $j$ occurs with probability
\begin{equation}
\Pr(s^{\prime} = j) = \sum_i T_{ji}p_i,
\end{equation}
where the probabilistic map is described by the stochastic matrix $T$, satisfying $T_{ji}\geq 0$ and $\sum_{j}T_{ji} = 1, \forall i$. In general, $d$ and $d^{\prime}$ can be different.

Now we aim to find out all the quantum operations that can be replaced by classical processing. We distinguish CROs in four cases. To be specific, we restrict the input and output of the operation to be classical or quantum, and then find out the CROs. They are classical-quantum CROs (cqCRO), quantum-quantum CROs (qqCRO), quantum-classical CROs (qcCRO), and classical-classical CROs (ccCRO). Here, classical input means that we prepare a state based on a classical random variable and set this state as the input of the quantum operation. Classical output means that we further implement a measurement after the action of a quantum operation. The measurement result is the classical output and we do not concern about the remained state after the measurement. The replacement means that the output does not change for any input, even the input is unknown.

\begin{definition}[Classically replaceable operation]\label{def:CRO}
There are four cases that a quantum operation is replaceable with classical processing, cqCRO, qqCRO, qcCRO, and ccCRO defined as follows, as shown in Figure~\ref{fig:CROdef}. Here, we define a fixed computational basis for state preparation and measurement.

\begin{enumerate}
\item
Consider operations right after a state preparation. A cqCRO can be realized by first preprocessing the input classically and then preparing the output state.

\item
Consider operations with quantum input and quantum output. A qqCRO can be realized by first measuring the input, processing the outcomes classically, and then preparing the output state.

\item
Consider operations right before a measurement. A qcCRO can be realized by first measuring the input state and then processing the outcomes classically.

\item
Consider operations between a state preparation and a measurement. A ccCRO can be realized by processing the input classically.
\end{enumerate}
\end{definition}

\begin{figure}[!htbp]
	\centering \resizebox{10cm}{!}{\includegraphics{./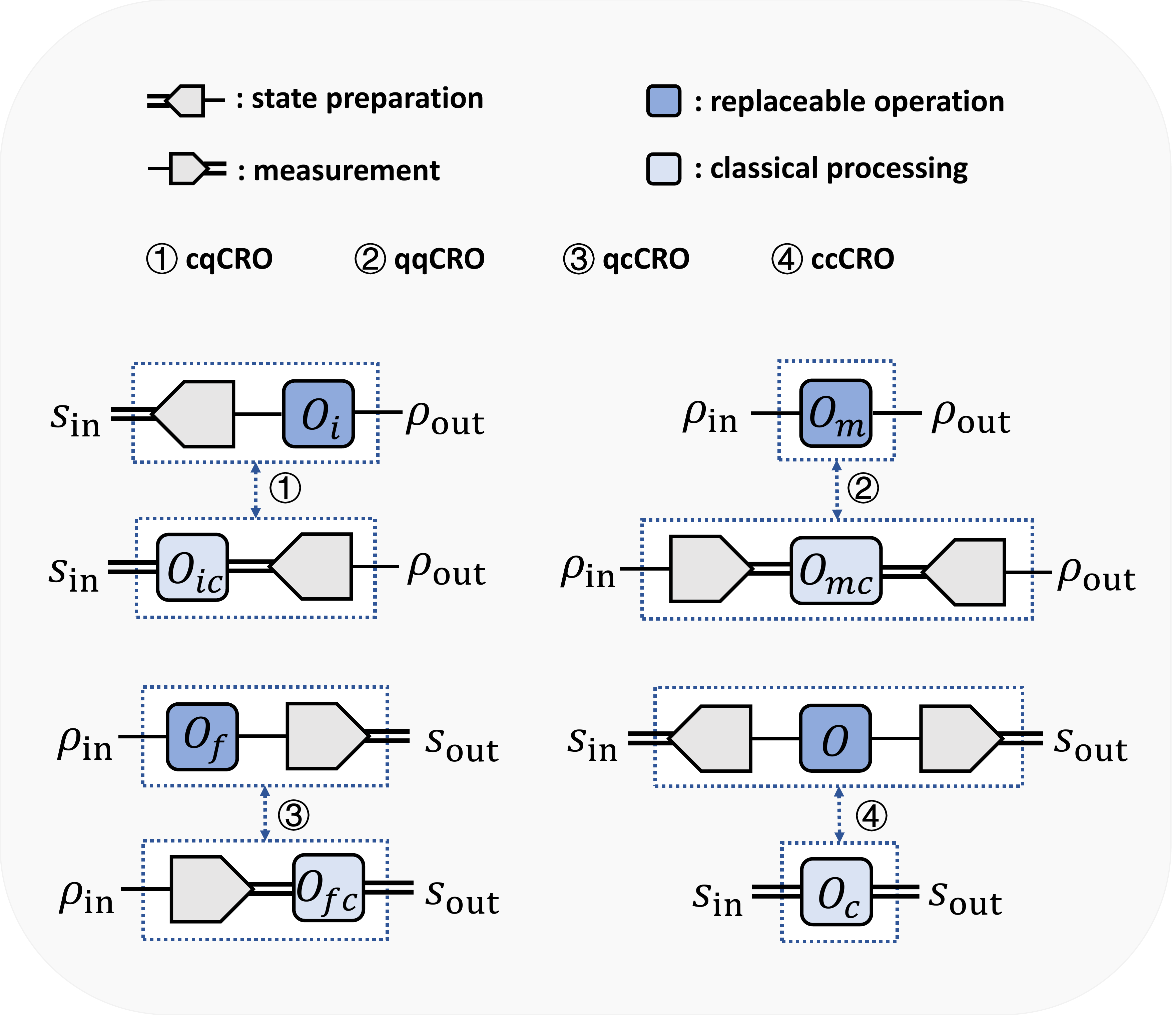}}
	\caption{Four different kinds of CROs. The CROs are the quantum operations in dark blue blocks excluding the state preparation and measurement. Here, all state preparation operations and measurements are performed on the computational basis. \textcircled{1} Given any classical input $s_{\mathrm{in}}$, the output quantum state $\rho_{\mathrm{out}}$ of a cqCRO, $O_i$, can be obtained by first preprocessing $s_{\mathrm{in}}$ with classical processing $O_{ic}$ and then preparing the output state. \textcircled{2} Given any input quantum state $\rho_{\mathrm{in}}$, the output quantum state $\rho_{\mathrm{out}}$ of a qqCRO, $O_m$, can be obtained by first measuring the state, applying classical processing $O_{mc}$ to the outcomes, and preparing the state. \textcircled{3} Given any input state $\rho_{\mathrm{in}}$, the measurement result $s_{\mathrm{out}}$ after performing a qcCRO, $O_f$, can be obtained by first measuring $\rho_{\mathrm{in}}$ and then processing the outcomes classically with $O_{fc}$. \textcircled{4} Given any classical input $s_{\mathrm{in}}$, the classical output $s_{\mathrm{out}}$ of a ccCRO, $O$, can be obtained by processing $s_{\mathrm{in}}$ classically with $O_c$.}
	\label{fig:CROdef}
\end{figure}

It is worth mentioning that in Definition~\ref{def:CRO}, the CRO is the quantum operation \textit{beside} the state preparation and measurement where the state preparation and measurement are \textit{excluded}. We require the most strict condition for qqCRO as we do not put any restriction on the input and output. Any qqCRO is a cqCRO and a qcCRO. Interestingly, in Section~\ref{sc:CROmath}, we show that the intersection of cqCRO and qcCRO is not qqCRO. Furthermore, the case of ccCRO can be viewed as the classical simulation. As all quantum operations can be simulated classically, ccCRO contains all quantum operations so that cqCRO, qqCRO, and qcCRO are all subsets of ccCRO. Note that in this work we might use CRO to represent a replaceable operation or the set of replaceable operations depending on the context.

In Definition~\ref{def:CRO}, a CRO only guarantees that the subsystem it acts on is unchanged before and after classical replacement. The state of the rest part of the system might change. Assume that a CRO on system $A$, $O_A$, originates from a unitary, $U_{AB}$, on a large system, $AB$. In general, replacing $O_A$ will destroy $U_{AB}$ and change the correlation between $A$ and $B$. On the other hand, if we consider a third system beyond $AB$, $C$, replacing $O_A$ will not change the correlation between $A$ and $C$. Strictly speaking, the evolution on system $A$ originating from $U_{AB}$ is not necessarily a CPTP map. Here, we only consider the CPTP map case. We provide more discussions in Appendix~\ref{append:PartReplace}.

Now, let us consider unitary CROs, namely, classically replaceable unitary operations (CRU). This is a special case when $B$ is a trivial system. In Section~\ref{sc:CROmath}, we prove that cqCRU, qqCRU, and qcCRU can be replaced by deterministic classical processing while this conclusion is not true for ccCRU. A counter-example is that a Hadamard gate between the state preparation and measurement on a qubit can only be replaced with probabilistic classical processing. Figure~\ref{fig:QCircuit} shows examples of CRUs in a quantum circuit. The state of the whole quantum circuit at any time does not change before and after replacement for any circuit input.

\begin{figure}[!htbp]
\centering \resizebox{15cm}{!}{\includegraphics{./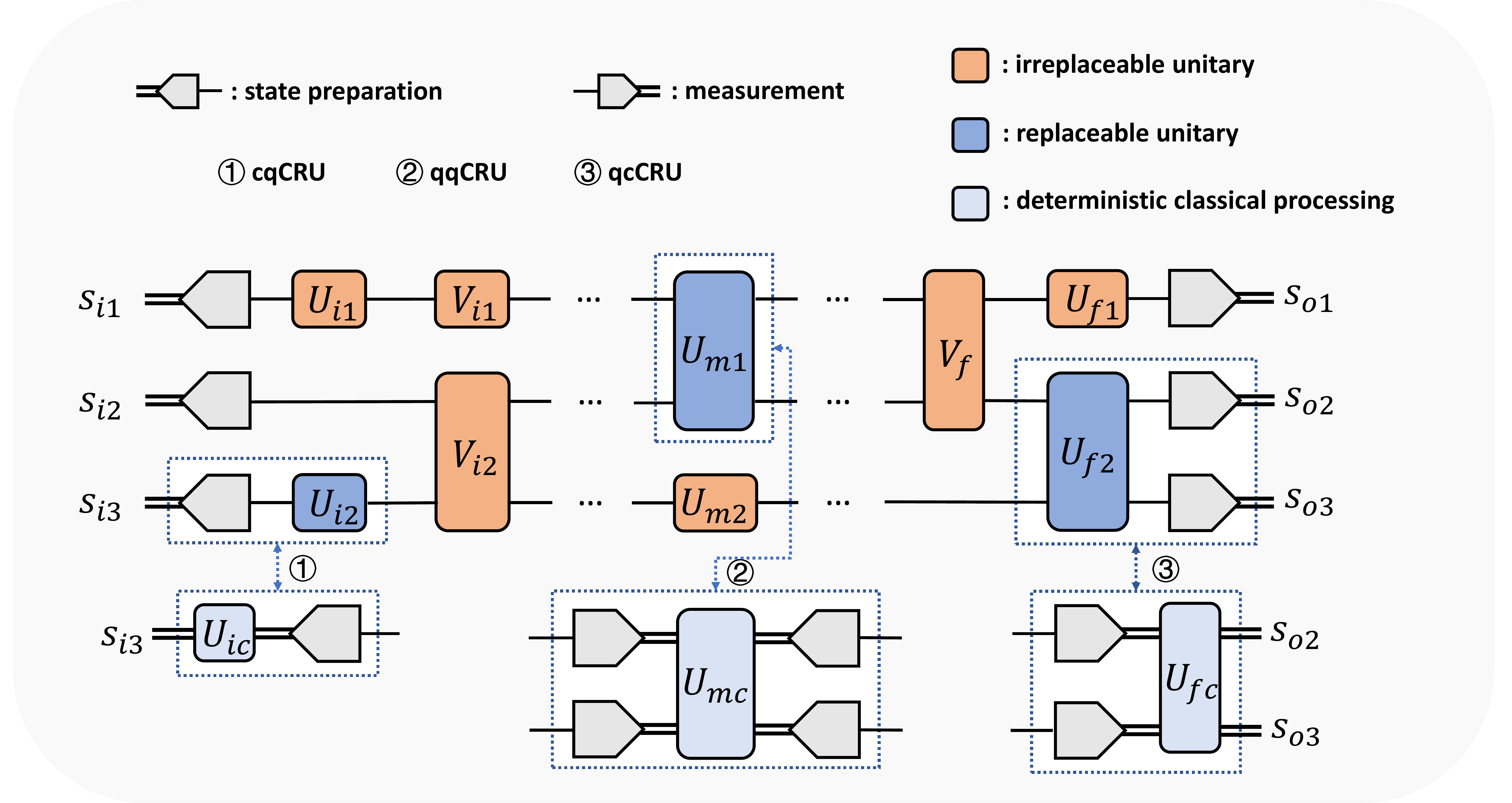}}
\caption{Within a quantum circuit, cqCRU, qqCRU, and qcCRU always lie in the beginning, middle, and ending of the circuit, respectively. All state preparation operations and measurements are performed on the computational basis. For any input statistics $(s_{i1},s_{i2},s_{i3})$, the state of the whole quantum circuit at any time does not change before and after replacing CRU $U_{i2},U_{m1}$, and $U_{f2}$ with $U_{ic},U_{mc}$, and $U_{fc}$, respectively. The statistics of output $(s_{o1},s_{o2},s_{o3})$ also remains the same. Note that the classical replacement can be performed on a part of the qubits. For example, even if $U_{f1}\otimes U_{f2}$ is not a qcCRU, we can view it as $(U_{f1}\otimes \mathbb{I})(\mathbb{I}\otimes U_{f2})$ and replace qcCRU $U_{f2}$.}
\label{fig:QCircuit}
\end{figure}

\section{Mathematical Characterization of CRO}\label{sc:CROmath}
In this section, we first provide the mathematical characterizations for the four kinds of CROs. As any quantum operation can be treated as a ccCRO, we focus on the rest three cases. Then, we discuss the extensions of CRO when the state preparation and measurement are not performed on the computational basis.

\subsection{Equivalent Definitions of CRO}
Here, we provide the mathematical characterization of the first three kinds of CROs in Definition~\ref{def:CRO}. Given a computational basis on $\mathcal{H}$, the dephasing operation is defined as a map from $\mathcal{D}(\mathcal{H})$ to $\mathcal{D}(\mathcal{H})$ that $\forall \rho\in \mathcal{D}(\mathcal{H})$,
\begin{equation}
	\Delta (\rho) =  \sum_{i=0}^{d-1} \ketbra{i}\rho\ketbra{i}.
\end{equation}
Then, we mathematically characterize the three kinds of CROs.
\begin{theorem}\label{thm:CROequiv}
For any $O\in \mathrm{CPTP}$,
\begin{itemize}
	\item $O\in\mathrm{cqCRO}\Leftrightarrow O\circ\Delta = \Delta\circ O\circ \Delta\Leftrightarrow \exists O^{\prime}\in\mathrm{CPTP}, O\circ \Delta = \Delta\circ O^{\prime}\circ \Delta$;
	\item $O\in\mathrm{qqCRO}\Leftrightarrow O = \Delta\circ O\circ \Delta\Leftrightarrow \exists O^{\prime}\in\mathrm{CPTP}, O = \Delta\circ O^{\prime}\circ \Delta$;
	\item $O\in\mathrm{qcCRO}\Leftrightarrow \Delta\circ O = \Delta\circ O\circ \Delta\Leftrightarrow \exists O^{\prime}\in\mathrm{CPTP}, \Delta\circ O = \Delta\circ O^{\prime}\circ \Delta$.
\end{itemize}
\end{theorem}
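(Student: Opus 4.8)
The plan is to establish each three-way equivalence in two stages: first identify the operational definition in Definition~\ref{def:CRO} with the middle, fixed-point-type identity, and then pass from that identity to the third, existential form using only that the dephasing map is idempotent.

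The conceptual core is a dictionary between ``classical processing'' and dephasing. The key lemma I would record is that a CPTP map $\mathcal{E}$ sending every diagonal density operator to a diagonal density operator is precisely a classical channel: the numbers $T_{ji} := \bra{j}\mathcal{E}(\ketbra{i})\ket{j}$ form a stochastic matrix (nonnegativity from positivity of $\mathcal{E}(\ketbra{i})$, and $\sum_j T_{ji} = \tr\mathcal{E}(\ketbra{i}) = 1$ from trace preservation), and by linearity $\mathcal{E}\big(\sum_i p_i\ketbra{i}\big) = \sum_j (Tp)_j\ketbra{j}$; conversely, any stochastic matrix $T$ yields such a CPTP map, e.g.\ one with Kraus operators $\sqrt{T_{ji}}\,\ketbra{j}{i}$. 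I would also record that state preparation on the computational basis turns a distribution $p$ into the diagonal state $\sum_i p_i\ketbra{i}$ (and every diagonal state arises this way), while a computational-basis measurement of $\rho$ produces the distribution $\big(\bra{i}\rho\ket{i}\big)_i$, i.e.\ the classical data carried by $\Delta(\rho)$. Throughout I will use that $\Delta\circ\Delta = \Delta$ and that $\Delta$ fixes diagonal operators.

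Granting this dictionary, each operational definition translates directly. Since the input of a cqCRO ranges over all diagonal states, $O$ is a cqCRO iff $O\circ\Delta$ sends every state to a diagonal state, i.e.\ $\Delta\circ O\circ\Delta = O\circ\Delta$, the witnessing preprocessing being $T_{ji} = \bra{j}O(\ketbra{i})\ket{j}$. The recipe for a qqCRO---measure, classically process, re-prepare---says exactly that $O$ is the classical channel attached to some stochastic $T$, which holds iff $O = \Delta\circ O\circ\Delta$. For a qcCRO, the classical output of ``apply $O$, then measure'' is the distribution $\big(\bra{j}O(\rho)\ket{j}\big)_j$ carried by $\Delta\circ O(\rho)$, and this equals a stochastic matrix applied to $\big(\bra{i}\rho\ket{i}\big)_i$ for every $\rho$ iff $\Delta\circ O$ depends on $\rho$ only through $\Delta(\rho)$, i.e.\ $\Delta\circ O\circ\Delta = \Delta\circ O$. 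Finally, the existential forms follow with no new ideas: taking $O' = O$ turns each middle identity into the corresponding existential statement, and conversely one pushes the existential identity through $\Delta$ and cancels using $\Delta^2 = \Delta$---from $O\circ\Delta = \Delta\circ O'\circ\Delta$ one gets $\Delta\circ O\circ\Delta = \Delta\circ O'\circ\Delta = O\circ\Delta$; from $O = \Delta\circ O'\circ\Delta$ one gets $\Delta\circ O\circ\Delta = \Delta\circ O'\circ\Delta = O$; and from $\Delta\circ O = \Delta\circ O'\circ\Delta$ one gets $\Delta\circ O\circ\Delta = \Delta\circ O'\circ\Delta = \Delta\circ O$.

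The one genuinely delicate point is the first half of the dictionary: one must verify that \emph{every} CPTP map preserving diagonality is automatically implemented by a stochastic matrix, so that replacing ``there exists a classical pre- or post-processing'' by the dephasing identities loses no generality. Once that is settled the rest is bookkeeping; the only thing to watch is that the input and output spaces may differ in dimension, so $\Delta$ must be interpreted as the dephasing on whichever space it acts on and the matrices $T$ are correspondingly rectangular.
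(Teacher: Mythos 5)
Your proposal is correct and follows essentially the same route as the paper's proof in Appendix~C: the existential forms are reduced to the fixed-point identities via idempotence of $\Delta$, and the operational definitions are matched to those identities by extracting the stochastic matrix $T_{ji}=\bra{j}O(\ketbra{i})\ket{j}$ in one direction and, in the other, observing that classical processing cannot distinguish $\sigma$ from $\Delta(\sigma)$. The only differences are presentational (the paper proves the PVM generalization and specializes, and handles the qqCRO converse by a case split plus $\mathrm{qqCRO}\subseteq\mathrm{qcCRO}$), not substantive.
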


We sketch the proof here and leave the details in Appendix~\ref{thmproof:CROequiv}. As the proof for three kinds of CROs are similar, we exhibit the idea with respect to qcCRO. Theorem~\ref{thm:CROequiv} contains three equivalent criteria for each kind of CRO. We first prove the equivalence of the second criterion and the third, then the first and the second. The equivalence of the second criterion and the third can be directly verified with the idempotence of the dephasing operation.

For the next proof, we first find the corresponding classical processing for each operation satisfying $\Delta\circ O = \Delta\circ O\circ \Delta$. Then we prove that if any operation does not satisfy that criterion, then it cannot be replaced. The argument for the irreplaceability of an operation comes from that it can output different results for two inputs but through classical processing we can only get the same results. From this we can conclude there exists a quantum state that the output cannot be obtained after replacing the operation with classical processing.

From Theorem~\ref{thm:CROequiv}, we can define the sets for different CROs,
\begin{align}
\label{eq:cqCRO} &\mathrm{cqCRO} = \{O\in \mathrm{CPTP}|O\circ\Delta = \Delta\circ O\circ \Delta\},\\
\label{eq:qqCRO} &\mathrm{qqCRO} = \{O\in \mathrm{CPTP}|O = \Delta\circ O\circ \Delta\},\\
\label{eq:qcCRO} &\mathrm{qcCRO} = \{O\in \mathrm{CPTP}|\Delta\circ O = \Delta\circ O\circ \Delta\}.
\end{align}

In Appendix~\ref{thmproof:CROequiv}, we show that for any qcCRO, $O$, the stochastic matrix $T$ of the corresponding classical processing satisfies $T_{ji} = \tr(\ketbra{j}O(\ketbra{i}))$. If $\exists U$, where $U$ is unitary, such that $\forall\rho\in \mathcal{D}(\mathcal{H})$, $O(\rho) = U\rho U^{\dagger}$. Then Eq.~\eqref{eq:qcCRO} requires that
\begin{equation}
\sum_k \ketbra{k} U\ketbra{i}{j} U^{\dagger} \ketbra{k} = \delta_{ij}\sum_k \ketbra{k} U\ketbra{i} U^{\dagger} \ketbra{k},
\end{equation}
where $\delta_{ij}$ equals 1 if $i = j$ and $0$ otherwise. Then for any $i,j,k$ and $i\neq j$,
\begin{equation}
U_{ki}U_{kj}^* = 0,
\end{equation}
where $U_{ki} = \bra{k}U\ket{i}$. It means for each row of $U$, there is only one element nonzero. As $T_{ji} = \tr(\ketbra{j}U\ketbra{i}U^{\dagger}) = \abs{U_{ji}}^2$, the value of $T_{ji}$ must be 0 or 1. Then the replacing classical processing is deterministic. Similar arguments apply to unitary operations in cqCRO and qqCRO,
\begin{corollary}
cqCRU, qqCRU and qcCRU can be replaced with deterministic classical processing.
\end{corollary}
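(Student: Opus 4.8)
The plan is to leverage the characterizations in Theorem~\ref{thm:CROequiv} together with the explicit computation already carried out in the excerpt for the qcCRU case, and then transfer that argument to the cqCRU and qqCRU cases. Concretely, suppose $O(\rho) = U\rho U^\dagger$ is a unitary operation. The excerpt already shows that if $O$ is a qcCRU, i.e. $\Delta\circ O = \Delta\circ O\circ\Delta$, then expanding this identity on the matrix units $\ketbra{i}{j}$ forces $U_{ki}U_{kj}^* = 0$ for all $i\neq j$ and all $k$, so each row of $U$ has at most one nonzero entry. Since $U$ is unitary its rows have unit norm, so each row has exactly one entry of modulus $1$; hence $T_{ji} = |U_{ji}|^2 \in\{0,1\}$ and the replacing stochastic matrix $T$ is a $0$–$1$ matrix, i.e. deterministic. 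So the qcCRU part is essentially done in the excerpt.

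For the cqCRU case, I would start from $O\circ\Delta = \Delta\circ O\circ\Delta$, evaluate both sides on the diagonal input $\ketbra{i}$ (the only inputs that matter, since $\Delta$ kills off-diagonal terms), and conclude $O(\ketbra{i}) = U\ketbra{i}U^\dagger = \Delta(U\ketbra{i}U^\dagger)$ for every $i$. The state $U\ketbra{i}U^\dagger$ is a rank-one projector onto the column vector $U\ket{i}$; demanding that it be invariant under dephasing forces $U\ket{i}$ to be (a phase times) a computational basis vector. Thus $U$ maps each basis vector to a basis vector up to phase, so $U$ is a (generalized) permutation matrix, and the corresponding classical processing—whose stochastic matrix is again $T_{ji} = \tr(\ketbra{j}O(\ketbra{i}))= |U_{ji}|^2$—is the deterministic permutation $i\mapsto \pi(i)$. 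The qqCRU case follows a fortiori: every qqCRU is simultaneously a cqCRU and a qcCRU (as noted after Definition~\ref{def:CRO}), so $U$ is both a generalized permutation and row-monomial, and in particular the associated $T$ is a permutation matrix; alternatively one can run the same direct computation on $U = \Delta\circ U\circ\Delta$.

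The only mildly delicate point—and the place I would be most careful—is bookkeeping about \emph{which} stochastic matrix is "the" replacing classical processing and checking that it indeed reproduces the operation in the sense of Definition~\ref{def:CRO}, rather than merely existing abstractly; but this is already pinned down in Appendix~\ref{thmproof:CROequiv} via the formula $T_{ji} = \tr(\ketbra{j}O(\ketbra{i}))$, so the corollary reduces to observing that this particular $T$ has $0$–$1$ entries in each of the three cases. There is no real obstacle: the heavy lifting is the matrix-unit expansion, which the main text has already performed for qcCRU, and the remaining cases are strictly easier structural specializations. The contrast with ccCRU (the Hadamard counterexample) needs no proof here—it is only an illustrative remark and lies outside the corollary's claim.
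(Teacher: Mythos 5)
Your proposal is correct and follows essentially the same route as the paper: the paper carries out the matrix-unit computation for qcCRU (forcing each row of $U$ to have a single nonzero entry, hence $T_{ji}=\abs{U_{ji}}^2\in\{0,1\}$) and states that ``similar arguments apply'' to cqCRU and qqCRU, which is exactly what you supply by evaluating $O\circ\Delta=\Delta\circ O\circ\Delta$ on diagonal inputs to conclude $U$ is a generalized permutation matrix, and by using $\mathrm{qqCRU}\subseteq\mathrm{cqCRU}\cap\mathrm{qcCRU}$. Your identification of the replacing stochastic matrix via $T_{ji}=\tr(\ketbra{j}O(\ketbra{i}))$ also matches the paper's Appendix.
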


It is interesting that from Theorem~\ref{thm:CROequiv}, we can obtain a relation between CRO and the resource theory of coherence~\cite{Baumgratz2014Coherence}. In the framework of coherence resource theory, the free states, also named incoherent states, are the states diagonal on the computational basis. We can define two sets of operations: MIO and coherence non-activating operations (CNAO)~\cite{LiuZiWen2017ResourceDestroying}.
The former is the maximal set of operations that cannot generate coherence from incoherent states. The latter is the maximal set of operations that cannot activate coherence from the input state. The details of coherence theory are shown in Appendix~\ref{appendsc:resource}.

In fact, Eq.~\eqref{eq:cqCRO} can be seen as the equivalent definition of MIO~\cite{LiuZiWen2017ResourceDestroying}. Similarly, Eq.~\eqref{eq:qcCRO} is a definition for CNAO. Then, we see the equivalence between cqCRO and MIO, qcCRO and CNAO. The intersection of cqCRO and qcCRO, or the intersection of MIO and CNAO, is a set of dephasing-covariant incoherent operations (DIO)~\cite{Eric2016DIO,LiuZiWen2017ResourceDestroying}, defined as
\begin{equation}\label{eq:DIO}
\mathrm{DIO} = \{O\in \mathrm{CPTP}|\Delta\circ O = O\circ \Delta\}.
\end{equation}

It is worth noting that although MIO and CNAO have been proposed before, their classical replaceability has not been discovered yet. In a sense, the coincidence of the equivalence between cqCRO and MIO, qcCRO and CNAO, respectively, endows MIO and CNAO with a new operational meaning.

From Eq.~\eqref{eq:qqCRO} and Eq.~\eqref{eq:DIO}, we can see that qqCRO is a proper subset of DIO, hence qqCRO is not the intersection of cqCRO and qcCRO. An operation belonging to both cqCRO and qcCRO might not be qqCRO. For example, CNOT gate is a DIO but not a qqCRO. At the same time, qqCRO is a subset of EB channels~\cite{Horodecki2003EBChannel}, since any qqCRO, $O$, can be given by
\begin{equation}
O(\rho) = \sum_{ij} \tr(\rho \ketbra{i}) T_{ji}\ketbra{j},
\end{equation}
where $T_{ji}$ is the stochastic matrix associated with the replaced classical processing. Furthermore, DIO is a proper subset of both cqCRO and qcCRO. For example, $\Delta\circ \text{Had}$ is a cqCRO but not a DIO, where Had represents the super-operator form of the Hadamard gate. A state preparation channel, $O(\rho) = \ketbra{+}$, is a qcCRO but not a DIO. Also, we can find an EB channel, $O(\rho) = \ketbra{0}{+}\rho \ketbra{+}{0} + \ketbra{+}{-}\rho \ketbra{-}{+}$, which is not a CRO. In summary, the relations among three kinds of CROs, MIO, CNAO and EB channels can be visualized in Figure~\ref{fig:CROVenn}.

\begin{figure}[!htbp]
	\centering \resizebox{8cm}{!}{\includegraphics{./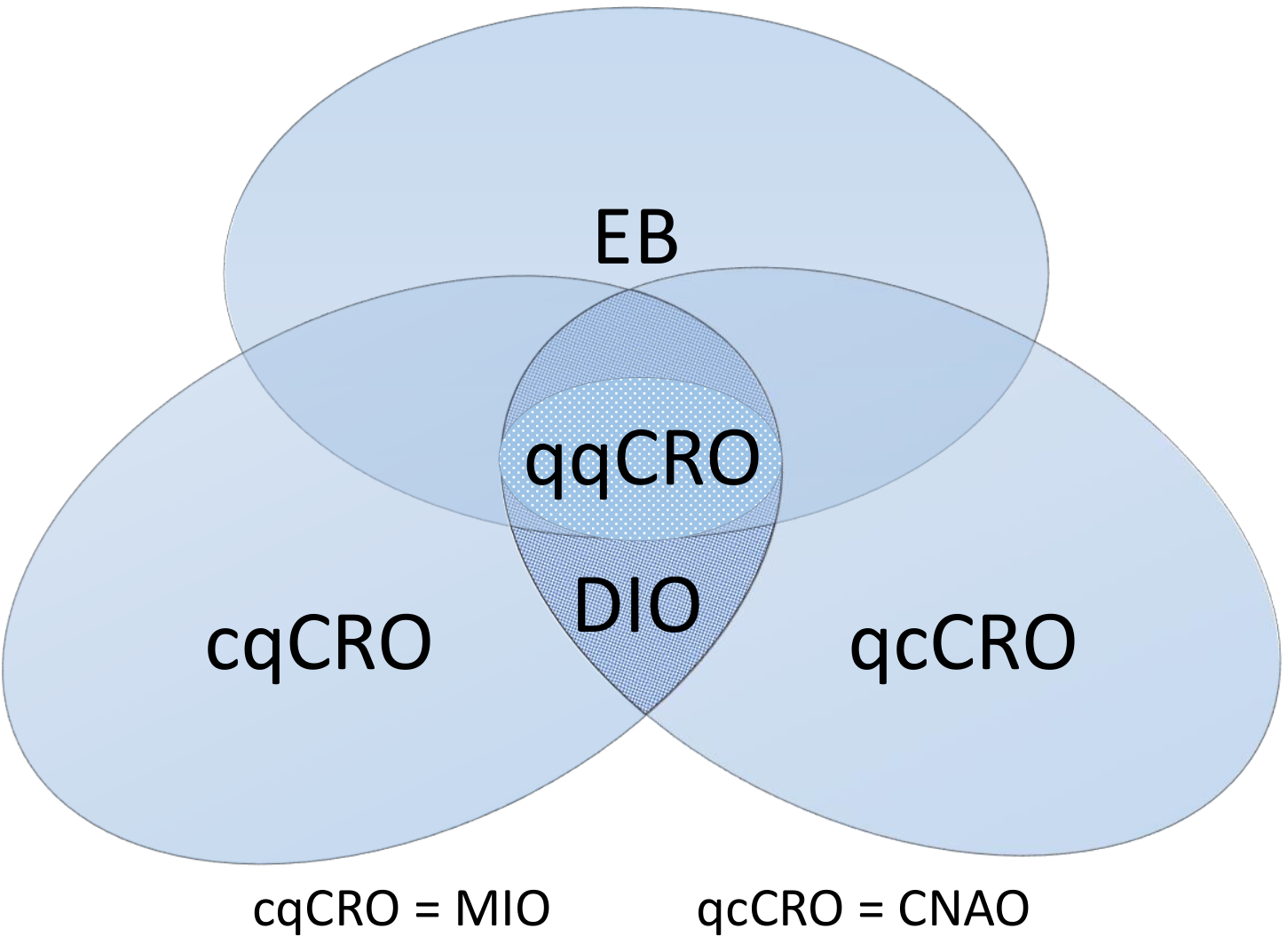}}
	\caption{Relations among three kinds of CROs, MIO, CNAO and EB channels. Here, cqCRO = MIO, qcCRO = CNAO, DIO = $\mathrm{cqCRO} \cap \mathrm{qcCRO}$, $\mathrm{qqCRO} \subset \mathrm{DIO}$, $\mathrm{qqCRO} \subset \mathrm{EB}$.}
	\label{fig:CROVenn}
\end{figure}

\subsection{Extension of CRO to Projective Measurement}\label{ssc:CROpvm}
In previous discussions, the measurements are rank-one, or, von Neumann measurements. Here, we discuss a more general case where the measurement is a projective measurement, described by the projective-valued measure (PVM). A PVM on a $d$-dimensional Hilbert space $\mathcal{H}$ is defined with a set of projectors on the space, $\{E_n| n\in[d']\}$, where $d'\leq d$, $E_nE_m = \delta_{nm}E_n$, and $\sum_n E_n = \mathbb{I}$. Generally, the measurement on a state $\rho\in \mathcal{D}(\mathcal{H})$ can be viewed as a CPTP map $\mathcal{E}$ from $\mathcal{H}$ to $\mathcal{H}\otimes \mathcal{H}'$, where $\mathcal{H}'$ is an ancillary system recording classical results. The state evolution is given by
\begin{equation}
	\mathcal{E}(\rho) = \sum_n E_n \rho E_n \otimes \ketbra{n}.
\end{equation}
Measurement result $n$ occurs with probability $\tr(\rho E_n)$.

The von Neumann measurement is a special case of PVMs. For a general PVM, several basis projectors may correspond to a same classical outcome, which represents a degeneracy phenomenon. In a similar fashion, we consider a generalized state preparation process. Given a set of projectors that span the entire Hilbert space, $\{E_n| n\in[d']\}$, we define the associated state preparation to be a map from $[d']$ to $\mathcal{D}(\mathcal{H})$, mapping $n \in [d']$ to $\frac{E_n}{\tr(E_n)}$. For $d'=d$, where the projectors form an orthonormal basis of the Hilbert space, this process is the usual state preparation.

Using the generalized measurement and state preparation, with respect to a set of projectors $\{E_n| n\in[d']\}$ on Hilbert space $\mathcal{H}$, we define quantum channel $\mathcal{T}_E$, which acts on a state $\rho\in\mathcal{D}(\mathcal{H})$ as
\begin{equation}
\mathcal{T}_E(\rho) = \sum_n \tr(\rho E_n) \frac{E_n}{\tr(E_n)}.
\end{equation}
The channel has the following operational interpretation: measure $\rho$ with $\{E_n\}$, obtain result $n$, and then prepare state $\frac{E_n}{\tr(E_n)}$ based on $n$. Note that when all of projectors $E_n$ are rank-one, that is, $\{E_n| n\in[d']\} = \{\ketbra{i}| i\in[d]\}$, $\mathcal{T}_E$ reduces to the dephasing operation on the basis $\{\ket{i}\}$. For a general case, $\mathcal{T}_E$ is different from the block-dephasing operation defined by the projectors~\cite{Aberg2006BlockCoherence}, \begin{equation}
  \Delta_E(\rho) = \sum_n E_n \rho E_n.
\end{equation}
Still, measuring the state $\mathcal{T}_E(\rho)$ with the PVM $\{E_n\}$ gives the same statistics as directly measuring $\rho$, that is, $\tr[\mathcal{T}_E(\rho) E_n] = \tr(\rho E_n)$.

Using the above ingredients, we define CRO in the extended scenarios of general PVMs. The formal definitions are the same as Definition~\ref{def:CRO} except for that the state preparation and measurement may not be rank-one. Here, we only discuss the extensions of cqCRO, qqCRO, and qcCRO, denoted as $\mathrm{cqCRO}_{\{E_n\}}$, $\mathrm{qqCRO}_{\{E_n\}}$, and $\mathrm{qcCRO}_{\{E_n\}}$ if the associated set of projectors is $\{E_n|n\in[d']\}$. By definition, we have $\mathrm{qqCRO}_{\{E_n\}}\subset \mathrm{cqCRO}_{\{E_n\}}$ and $\mathrm{qqCRO}_{\{E_n\}}\subset \mathrm{qcCRO}_{\{E_n\}}$. Similar to the dephasing operation $\Delta$, $\mathcal{T}_E$ plays an important role in expressing extensions of CROs. The result is formalized by the following theorem.

\begin{theorem}\label{thm:CROextequiv}
For any $O\in \mathrm{CPTP}$,
\begin{itemize}
\item $O\in\mathrm{cqCRO}_{\{E_n\}}\Leftrightarrow O\circ\mathcal{T}_E = \mathcal{T}_E\circ O\circ \mathcal{T}_E\Leftrightarrow \exists O^{\prime}\in\mathrm{CPTP}, O\circ \mathcal{T}_E = \mathcal{T}_E\circ O^{\prime}\circ \mathcal{T}_E$;
\item $O\in\mathrm{qqCRO}_{\{E_n\}}\Leftrightarrow O = \mathcal{T}_E\circ O\circ \mathcal{T}_E\Leftrightarrow \exists O^{\prime}\in\mathrm{CPTP}, O = \mathcal{T}_E\circ O^{\prime}\circ \mathcal{T}_E$;
\item $O\in\mathrm{qcCRO}_{\{E_n\}}\Leftrightarrow \mathcal{T}_E\circ O = \mathcal{T}_E\circ O\circ \mathcal{T}_E\Leftrightarrow \exists O^{\prime}\in\mathrm{CPTP}, \mathcal{T}_E\circ O = \mathcal{T}_E\circ O^{\prime}\circ \mathcal{T}_E$.
\end{itemize}
\end{theorem}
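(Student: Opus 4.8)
The plan is to mirror the proof of Theorem~\ref{thm:CROequiv} as closely as possible, replacing the dephasing operation $\Delta$ by the channel $\mathcal{T}_E$ throughout, and to isolate the two algebraic facts about $\mathcal{T}_E$ that make the argument go through. Since the three bullets are handled by essentially the same reasoning, I would present the $\mathrm{qcCRO}_{\{E_n\}}$ case in detail and indicate the obvious modifications for the other two. As in Theorem~\ref{thm:CROequiv}, I would first establish the equivalence of the second and third criteria, then the equivalence of the first and the second.

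First I would record the key property: $\mathcal{T}_E$ is idempotent, $\mathcal{T}_E\circ\mathcal{T}_E = \mathcal{T}_E$, which follows immediately from $\tr\!\big(\tfrac{E_m}{\tr(E_m)}E_n\big) = \delta_{mn}$, and moreover $\mathcal{T}_E$ is ``measurement-preserving'' in the sense that $\tr[\mathcal{T}_E(\rho)E_n] = \tr(\rho E_n)$ — this last fact is already stated in the excerpt. The equivalence $\mathcal{T}_E\circ O = \mathcal{T}_E\circ O\circ\mathcal{T}_E \Leftrightarrow \exists O'\in\mathrm{CPTP},\ \mathcal{T}_E\circ O = \mathcal{T}_E\circ O'\circ\mathcal{T}_E$ then follows formally: the forward direction takes $O' = O$, and the backward direction applies $\mathcal{T}_E$ on the left of the hypothesis and uses idempotence, $\mathcal{T}_E\circ O = \mathcal{T}_E\circ\mathcal{T}_E\circ O = \mathcal{T}_E\circ(\mathcal{T}_E\circ O') = \mathcal{T}_E\circ(\mathcal{T}_E\circ O'\circ\mathcal{T}_E) = \mathcal{T}_E\circ O\circ\mathcal{T}_E$, wait — more directly, from $\mathcal{T}_E\circ O = \mathcal{T}_E\circ O'\circ\mathcal{T}_E$ one gets $\mathcal{T}_E\circ O\circ\mathcal{T}_E = \mathcal{T}_E\circ O'\circ\mathcal{T}_E\circ\mathcal{T}_E = \mathcal{T}_E\circ O'\circ\mathcal{T}_E = \mathcal{T}_E\circ O$. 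The analogous formal manipulations cover the cqCRO and qqCRO bullets, using idempotence of $\mathcal{T}_E$ on the relevant side.

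Next I would connect the middle criterion to the operational definition of $\mathrm{qcCRO}_{\{E_n\}}$. For the ``if'' direction, given $\mathcal{T}_E\circ O = \mathcal{T}_E\circ O\circ\mathcal{T}_E$, I would exhibit the replacing classical processing: measure the input $\rho$ with $\{E_n\}$ to obtain $n$ with probability $\tr(\rho E_n)$, then apply the stochastic matrix $T_{mn} = \tr\!\big(E_m\, O(\tfrac{E_n}{\tr(E_n)})\big)$ to produce the final classical outcome $m$. One checks that performing $O$ and then the PVM $\{E_m\}$ on $\rho$ yields outcome $m$ with probability $\tr(E_m O(\rho)) = \tr[\mathcal{T}_E(O(\rho))E_m]$ — here using the measurement-preserving property of $\mathcal{T}_E$ — and then the hypothesis $\mathcal{T}_E\circ O = \mathcal{T}_E\circ O\circ\mathcal{T}_E$ together with linearity in the decomposition $\mathcal{T}_E(\rho) = \sum_n \tr(\rho E_n)\tfrac{E_n}{\tr(E_n)}$ rewrites this as $\sum_n \tr(\rho E_n)\,T_{mn}$, matching the classical recipe. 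For the ``only if'' direction, I would argue the contrapositive exactly as in the sketch for Theorem~\ref{thm:CROequiv}: if $\mathcal{T}_E\circ O \neq \mathcal{T}_E\circ O\circ\mathcal{T}_E$, then there is some $\rho$ with $\mathcal{T}_E(O(\rho)) \neq \mathcal{T}_E(O(\mathcal{T}_E(\rho)))$; since $\mathcal{T}_E(\rho)$ and $\rho$ are operationally indistinguishable by the PVM $\{E_n\}$ (same outcome statistics), any classical replacement — which only has access to the measurement outcome of the input — must produce identical output statistics on $\rho$ and on any state with the same $\{E_n\}$-statistics as $\rho$, in particular on a suitable state whose post-$O$-then-PVM statistics differ from those of $\rho$, a contradiction. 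The cqCRO and qqCRO bullets follow by the same template, with the output being a prepared quantum state rather than a classical outcome, using the generalized state preparation $n\mapsto E_n/\tr(E_n)$.

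The main obstacle I anticipate is the ``only if'' direction in the extended (degenerate) setting: unlike the rank-one case, distinct projectors absorbing into a single outcome means the generalized state preparation no longer inverts the measurement, so one must be careful that the state actually fed forward by the classical replacement is $E_n/\tr(E_n)$ and that $\mathcal{T}_E$ — rather than the block-dephasing $\Delta_E$ — is the right object encoding ``measure-then-reprepare.'' Concretely, I need the precise claim that two states $\rho,\sigma$ with $\tr(\rho E_n) = \tr(\sigma E_n)$ for all $n$ satisfy $\mathcal{T}_E(\rho) = \mathcal{T}_E(\sigma)$, and conversely that any CPTP map realizable in the ``measure with $\{E_n\}$, classically post-process, prepare $E_m/\tr(E_m)$'' form factors as $O = \mathcal{T}_E\circ O\circ\mathcal{T}_E$ (resp.\ the one-sided versions); verifying these factorizations carefully, and checking that no additional hypothesis on the $E_n$ is needed, is where the real work lies, though it is a routine adaptation once the role of $\mathcal{T}_E$ is pinned down. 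The remaining steps are formal and parallel the $\Delta$ case verbatim.
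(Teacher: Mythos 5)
Your proposal is correct and follows essentially the same route as the paper's proof in Appendix~\ref{thmproof:CROequiv}: idempotence of $\mathcal{T}_E$ for the second--third equivalence, the stochastic matrix $T_{mn}=\tr\bigl(E_m\,O(E_n/\tr(E_n))\bigr)$ for the ``if'' direction, and the pair $\rho$ versus $\mathcal{T}_E(\rho)$ (indistinguishable inputs with distinguishable outputs) for the ``only if'' direction, with the caveat you correctly flag that $\mathcal{T}_E$ rather than $\Delta_E$ is the right measure-and-reprepare object. The only detail left implicit in your ``same template'' remark is that the qqCRO only-if direction needs a two-case split --- either $O(\sigma)\neq\mathcal{T}_E(O(\sigma))$ so the output is not preparable, or else the failure reduces to the qcCRO case --- which the paper spells out but which fits your framework without difficulty.
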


Theorem~\ref{thm:CROextequiv} is a generalization of Theorem~\ref{thm:CROequiv} and the proof ideas of the two are the same, with details shown in Appendix~\ref{thmproof:CROequiv}. Then, we define the sets of operations for extensions of CROs,
\begin{align}
	\label{eq:cqCROext} &\mathrm{cqCRO}_{\{E_n\}} = \{O\in \mathrm{CPTP}|O\circ\mathcal{T}_E = \mathcal{T}_E\circ O\circ \mathcal{T}_E\},\\
	\label{eq:qqCROext} &\mathrm{qqCRO}_{\{E_n\}} = \{O\in \mathrm{CPTP}|O = \mathcal{T}_E\circ O\circ \mathcal{T}_E\},\\
	\label{eq:qcCROext} &\mathrm{qcCRO}_{\{E_n\}} = \{O\in \mathrm{CPTP}|\mathcal{T}_E\circ O = \mathcal{T}_E\circ O\circ \mathcal{T}_E\}.
\end{align}

In the study of block coherence, with the block-dephasing operation $\Delta_E$, one can define the set of maximally block incoherent operations~\cite{Bischof2019POVMCoherence}, $\{O\in\mathrm{CPTP} | O\circ \Delta_E = \Delta_E\circ O\circ \Delta_E\}$, as a generalization of MIO. Note that in this case, $\mathrm{cqCRO}_{\{E_n\}}$ is in general not equivalent to maximally block incoherent operations. Similarly, $\mathrm{qqCRO}_{\{E_n\}}$ and $\mathrm{qcCRO}_{\{E_n\}}$ are in general not equal to $\{O\in \mathrm{CPTP}|O = \Delta_E\circ O\circ \Delta_E\}$ and $\{O\in \mathrm{CPTP}|\Delta_E \circ O = \Delta_E\circ O\circ \Delta_E\}$, respectively.

\subsection{Extension of CRO with Unitary Transformation}
In the discussion above, we fix the state preparation and measurement. In practice, the state preparation and measurement could be selected from multiple choices. When we use classical processing to replace the quantum operation, we can prepare the state or measure the state with one of these choices instead of a fixed one. 

For simplicity, we assume the state preparation and measurement is rank-one. As we can apply unitary to change bases, the freedom to choose bases is equivalent to the freedom to apply unitary after state preparation and before measurement. The unitary is chosen from an ensemble $\mathcal{U}$, which is often limited by quantum devices in practice. The largest unitary ensemble is $\mathcal{U} = \textsf{U}_d$, where $\textsf{U}_d$ is the unitary group with dimension $d$. In reality, single-qubit operations are normally easy to implement. In this case, we are often interested in the local unitary ensemble, $\mathcal{U} = \textsf{U}^{\otimes n}_2$, where $n$ is the number of qubits in the quantum system.

Given the ensemble $\mathcal{U}$, we denote the extensions of cqCRO, qqCRO and qcCRO as $\mathrm{cqCRO}_{\mathcal{U}}$, $\mathrm{qqCRO}_{\mathcal{U}}$, and $\mathrm{qcCRO}_{\mathcal{U}}$, respectively, as shown in Figure~\ref{fig:CROu}. The three kinds of $\mathrm{CRO}_{\mathcal{U}}$ are defined as follows.
\begin{enumerate}
	\item
	Consider operations right after the state preparation, a $\mathrm{cqCRO}_{\mathcal{U}}$ can be realized by first processing the input classically and then preparing the state under a chosen basis.
	
	\item
	Consider operations with quantum input and quantum output. A $\mathrm{qqCRO}_{\mathcal{U}}$ can be realized by first measuring the input with a basis, processing the outcomes classically, and then preparing the output state under another possibly different basis.
	
	\item
	Consider operations right before measurement. A $\mathrm{qcCRO}_{\mathcal{U}}$ can be realized by measuring the quantum input and processing the outcomes classically.
\end{enumerate}
Note that the classical value can always be discriminated and copied, which means we can choose the state preparation basis based on the classical input and measurement outcomes. This is the origin of control unitary in Figure~\ref{fig:cqCROu} and~\ref{fig:qqCROu}. In our model, we suppose the quantum input is unknown and cannot be discriminated before measuring it, so the measurement basis cannot be chosen depending on the quantum input. That means in the first step of replacing $\mathrm{qqCRO}_{\mathcal{U}}$ or $\mathrm{qcCRO}_{\mathcal{U}}$, we need to select a fixed measurement independent of the input.

\begin{figure}[!htbp]
	\centering
	\subfigure[$\mathrm{cqCRO}_{\mathcal{U}}$]{
		\begin{minipage}[t]{0.4\linewidth}
			\centering
			\includegraphics[scale = 0.24]{./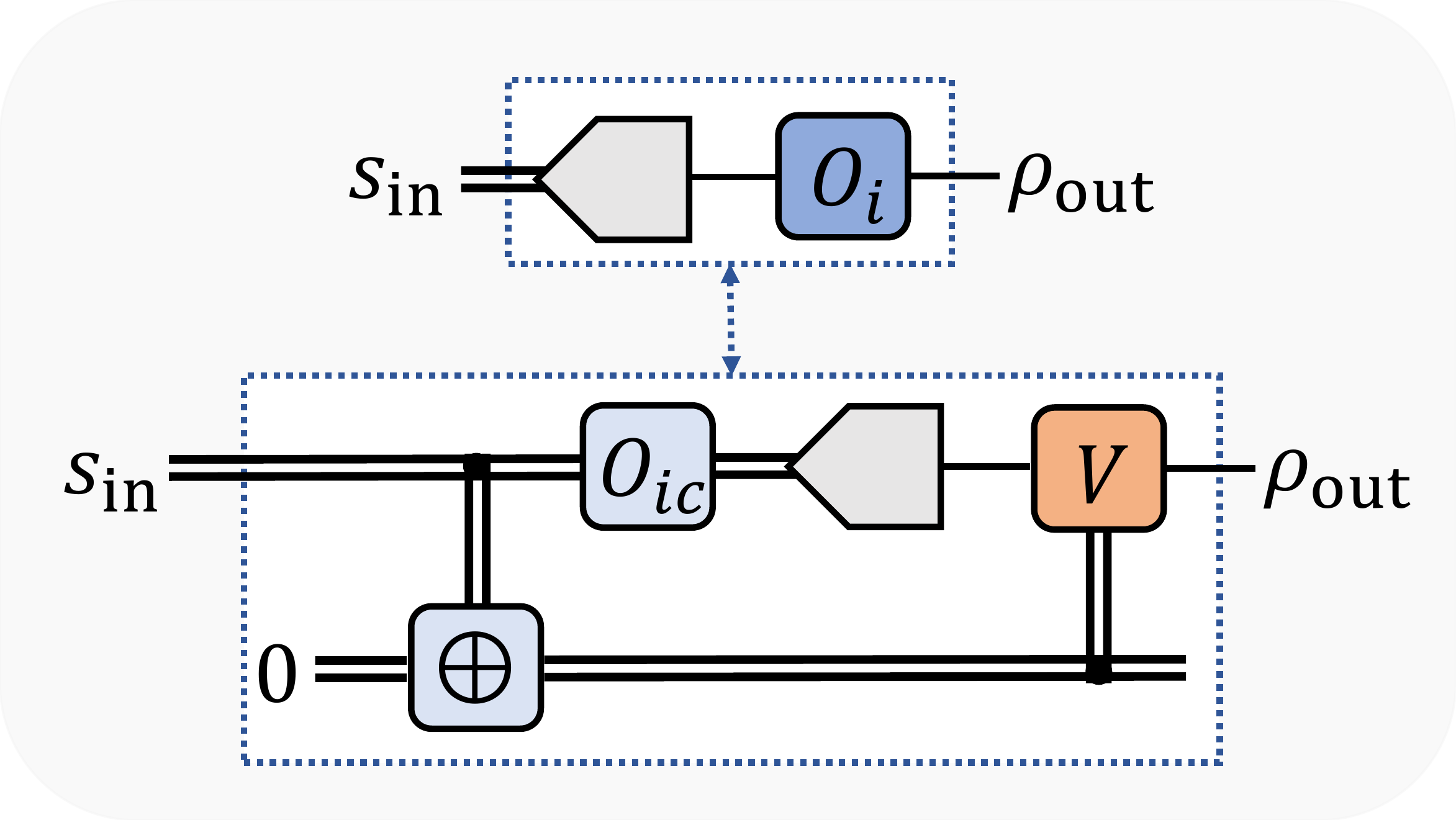}
			\label{fig:cqCROu}
%			\caption{cqCROu}
		\end{minipage}
	}
	\subfigure[$\mathrm{qqCRO}_{\mathcal{U}}$]{
		\begin{minipage}[t]{0.4\linewidth}
			\centering
			\includegraphics[scale = 0.24]{./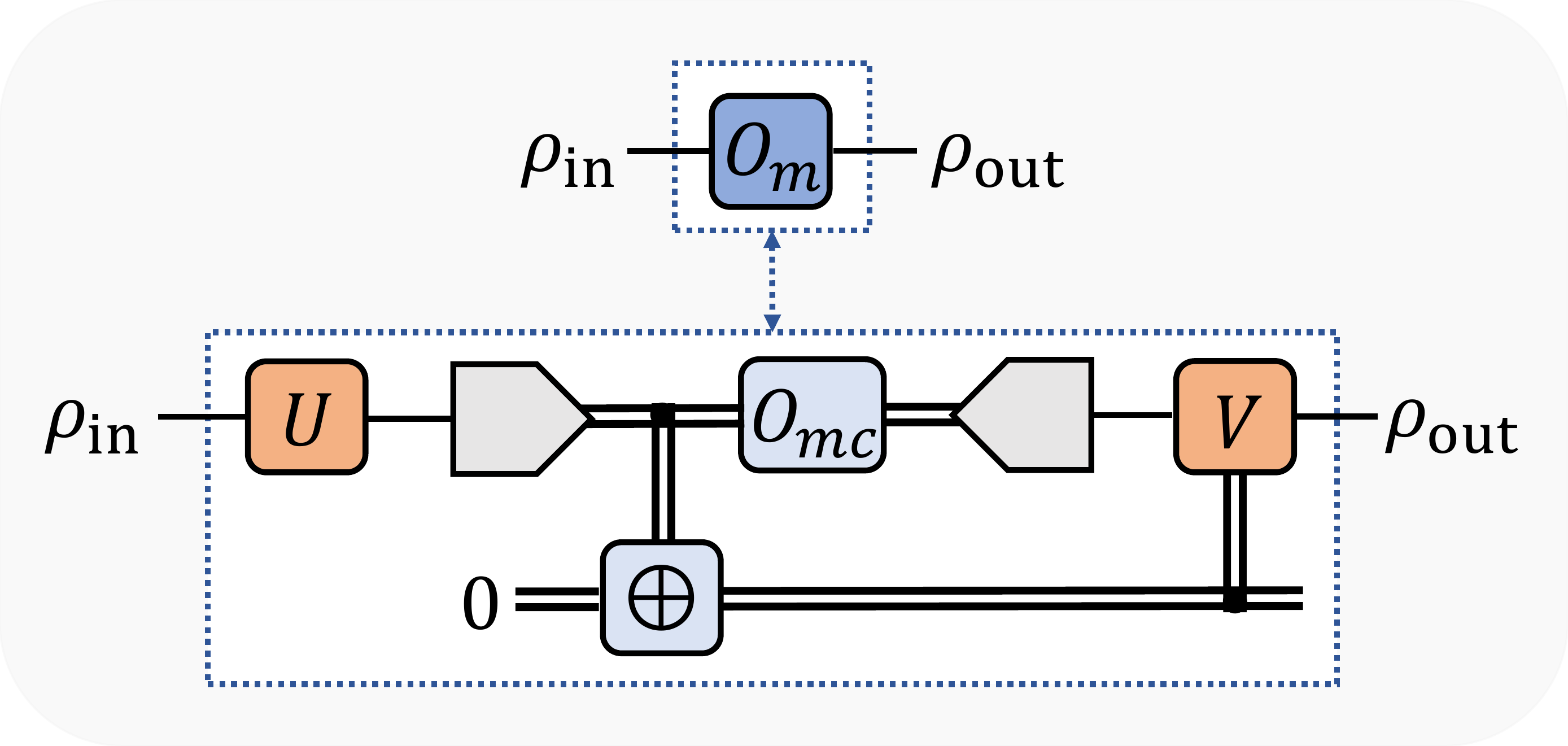}
			\label{fig:qqCROu}
%			\caption{qqCROu}
		\end{minipage}
	}
	\subfigure[$\mathrm{qcCRO}_{\mathcal{U}}$]{
		\begin{minipage}[t]{0.8\linewidth}
\centering
\includegraphics[scale = 0.26]{./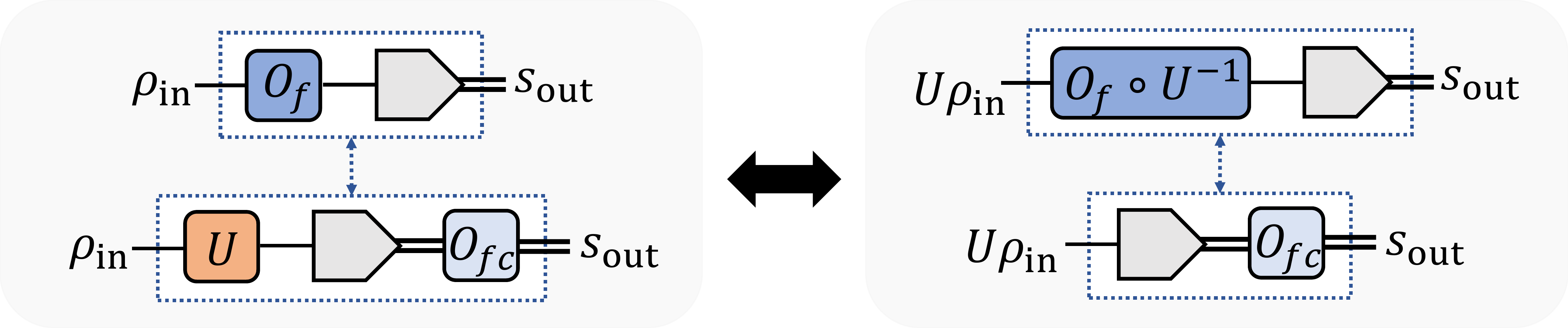}
\label{fig:qcCROu}
%			\caption{qcCROu}
\end{minipage}
}
\centering
\caption{The equivalent circuits for (a) $\mathrm{cqCRO}_{\mathcal{U}}$, (b) $\mathrm{qqCRO}_{\mathcal{U}}$, (c) $\mathrm{qcCRO}_{\mathcal{U}}$ when we relax the restriction of the state preparation and measurement bases. The notation $\oplus$ in (a) and (b) represents the module-$d$ summation, essentially a classical copy operation. (a) A $\mathrm{cqCRO}_{\mathcal{U}}$, $O_i$, can be realized by processing the input classically with $O_{ic}$, followed by the state preparation whose basis or unitary $V$ can depend on the classical input. (b) A $\mathrm{qqCRO}_{\mathcal{U}}$, $O_m$, can be realized by measuring the input, processing the measurement result classically with $O_{mc}$, and state preparation. The basis of state preparation or unitary $V$, can depend on the measurement result, while the measurement basis or unitary $U$ is independent of the quantum input. (c) A $\mathrm{qcCRO}_{\mathcal{U}}$ can be realized by measuring the input and processing the measurement result classically, where the measurement basis or unitary $U$ is independent of the quantum input. It is equivalent to the requirement of $\exists U\in\mathcal{U}$, $O_f\circ U^{-1}\in\mathrm{qcCRO}$.}\label{fig:CROu}
\end{figure}

From Figure~\ref{fig:qcCROu} we can see that an operation, $O$, is $\mathrm{qcCRO}_{\mathcal{U}}$ if and only if there exists unitary $U\in\mathcal{U}$ s.t., $O\circ U^{-1}\in \mathrm{qcCRO}$. The set of $\mathrm{qcCRO}_{\mathcal{U}}$ is the union set of $\{\mathrm{qcCRO}_{U}\}$ where $U\in\mathcal{U}$,
\begin{equation}\label{eq:qcCROu} \mathrm{qcCRO}_{\mathcal{U}} = \{O\in \mathrm{CPTP}|\exists U\in \mathcal{U}, \Delta\circ (O\circ U^{-1}) = \Delta\circ (O\circ U^{-1})\circ \Delta\}.
\end{equation}
We remark that unlike $\mathrm{qcCRO}_{\mathcal{U}}$, $\mathrm{cqCRO}_{\mathcal{U}}$ or $\mathrm{qqCRO}_{\mathcal{U}}$ is not a simple union of $\{\mathrm{cqCRO}_{U}\}$ or $\{\mathrm{qqCRO}_{U}\}$, $U\in\mathcal{U}$, due to the possible dependence of state preparation basis on the classical input or the measurement result.

From the operational meaning of $\mathrm{qqCRO}_{\mathcal{U}}$, we can find another representation. Any $\mathrm{qqCRO}_{\mathcal{U}}$ can be realized by measuring the input state $\rho$ with a basis $\{e_i\}$, transforming measurement result $i$ into $j$ with probability $T_{ji}$, and then preparing the state under another possibly different basis $\{f^i_j\}$,
\begin{equation}\label{eq:qqCROu}
	\begin{split}
		O_m(\rho) &= \sum_{i,j}T_{ji}\bra{e_i}\rho\ket{e_i} \ketbra{f^i_j}\\
		&= \sum_i \bra{e_i}\rho\ket{e_i} (\sum_j T_{ji}\ketbra{f^i_j}),
	\end{split}
\end{equation}
where $T_{ji}$ is the element of a stochastic matrix, $\{\ket{f^i_j}\}$ reflects the dependence between the state preparation basis and the measurement result $i$. From Eq.~\eqref{eq:qqCROu} we can see that any $\mathrm{qqCRO}_{\mathcal{U}}$ is an EB channel~\cite{Horodecki2003EBChannel}, which measures the state with a positive operator-valued measure (POVM) followed by state preparation. 

A classical input can always be modelled as a quantum input followed with the computational basis measurement as shown in Figure~\ref{fig:cq2qqCROu}. We can see that if an operation, $O_i$, is a $\mathrm{cqCRO}_{\mathcal{U}}$, then $O_i\circ \Delta$ is a $\mathrm{qqCRO}_{\mathcal{U}}$. From a similar argument of $\mathrm{qqCRO}_{\mathcal{U}}$, we obtain that any $\mathrm{cqCRO}_{\mathcal{U}}$, $O_i$, satisfies,
\begin{equation}\label{eq:cqCROu}
	\begin{split}
		O_i\circ \Delta(\rho) &= \sum_{k,j}T_{jk}\bra{k}\rho\ket{k} \ketbra{f^k_j}\\
		&= \sum_k \bra{k}\rho\ket{k} (\sum_j T_{jk}\ketbra{f^k_j}),
	\end{split}
\end{equation}
where $T_{jk}$ is the element of a stochastic matrix, $\{\ket{k}\}$ is the computational basis, $\{\ket{f^k_j}\}$ is a basis for state preparation.

\begin{figure}[!htbp]
	\centering
\resizebox{7cm}{!}{\includegraphics{./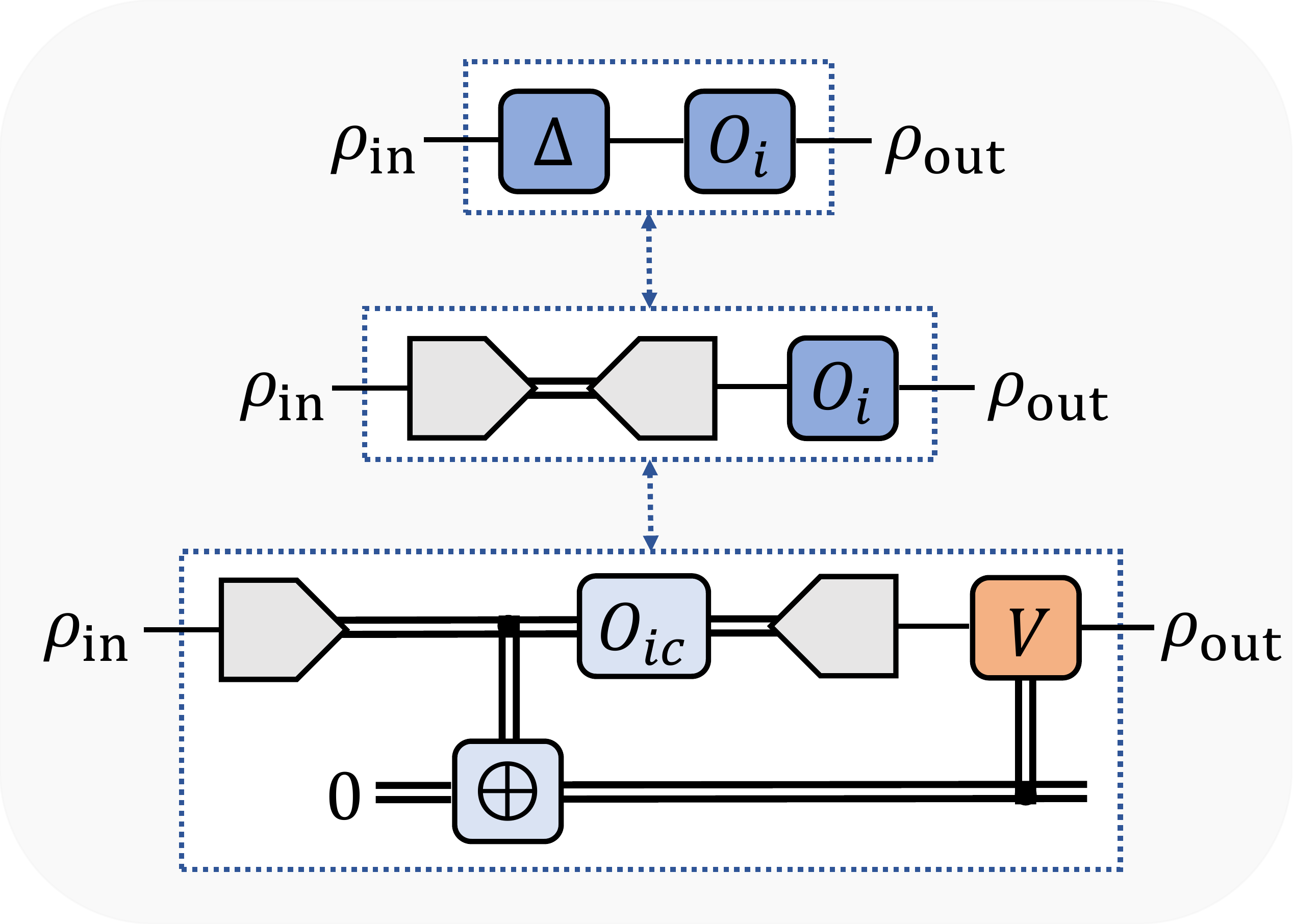}}
	\caption{A classical input $s_{\mathrm{in}}$ is equivalent to a quantum input $\rho_{\mathrm{in}}$ followed with the computational basis measurement. Here, the basis of state preparation right before $O_i$ in the second frame is set as the computational basis. Then, the equivalence in Figure~\ref{fig:cqCROu} turns into the equivalence of the second and third lines. If $O_i$ is a $\mathrm{cqCRO}_{\mathcal{U}}$, then $O_i\circ \Delta$ is a $\mathrm{qqCRO}_{\mathcal{U}}$. Note that the measurement followed by state preparation is a dephasing operation.}
	\label{fig:cq2qqCROu}
\end{figure}

Interestingly, if $\mathcal{U} = \textsf{U}_d$, then any CPTP map is a $\mathrm{cqCRO}_{\mathcal{U}}$. The reason is that we can evaluate the output quantum state $\rho_{\mathrm{out}}$ after reading the classical input $s_{\mathrm{in}}$. With the freedom to choose arbitrary unitary operations, one can prepare $\rho_{\mathrm{out}}$ directly.

\subsection{Virtual Clifford Gate from Classical Replacement}
Now, we discuss the application of classical replacement. Consider a task in VQA that one needs to estimate the lowest energy level of a given Hamiltonian, $H$~\cite{Cerezo2021Variational}. The basic idea of a VQA is preparing a parametrized state $\rho$ and estimating its energy $\tr(\rho H)$. Then, one can minimize this value to find the approximate lowest energy by adjusting the parameters of $\rho$. In general, an $n$-qubit quantum state, $\rho$, is prepared with a number of parametrized gates,
\begin{equation}
\rho = C(\alpha)U(\beta)\cdots V(\gamma) \rho_0 V^{\dagger}(\gamma)\cdots U^{\dagger}(\beta) C^{\dagger}(\alpha),
\end{equation}
where $\rho_0$ is the initial state and $\alpha, \beta, \cdots, \gamma$ are parameters for the VQA. Then, one performs the measurement on $\rho$ to evaluate $\tr(\rho H)$. In practice, Pauli measurements are often favoured and one often decomposes $H$ into the sum of $n$-qubit Pauli operators, $H = \sum_{i\in I} c_iP_i, P_i\in \textsf{P}_n$ where $\textsf{P}_n$ is the $n$-qubit Pauli group and $I \subseteq [4^n]$ is an index set for $\textsf{P}_n$. Then, estimating $\tr(\rho H)$ becomes evaluating values of $\tr(\rho P_i)$.

In general, one needs to perform a PVM associated with $P_i$ on $\rho$ to obtain $\tr(\rho P_i)$, which we call $P_i$-measurement. For some gates $C(\alpha)$ in the state preparation of $\rho$, $C(\alpha)$ followed with $P_i$-measurement can be replaced by $P_j$-measurement followed with classical processing, where $P_i,P_j\in \textsf{P}_n$. Then, we can replace $C(\alpha)$ with classical processing and skip it in the quantum circuit preparing $\rho$. Following the same arguments in previous subsections, the set of CRO in this case is given by
\begin{equation}
\mathcal{R} = \{ O\in \mathrm{CPTP} | \exists j\in [4^n], \forall i\in I, \mathcal{T}_i \circ O = \mathcal{T}_i \circ O \circ \mathcal{T}_j \}.
\end{equation}
Here, $\mathcal{T}_i(\rho) = \frac{1}{2^{n-1}}(\tr(P_i^+ \rho)P_i^+ + \tr(P_i^- \rho)P_i^-)$, where $P_i^+ = (\mathbb{I} + P_i) / 2$ and $P_i^- = (\mathbb{I} - P_i) / 2$ are the projectors to the eigenspaces of $P_i$ with eigenvalues $+1$ and $-1$, respectively. Note that $\forall i\in I$, the gates in $\mathcal{R}$ before $P_i$-measurement can be replaced classically. We provide the detailed derivation and discussion in Appendix~\ref{appendsc:vqa}. Interestingly, we can verify that the $n$-qubit Clifford group $\textsf{C}_n$ is always a subset of $\mathcal{R}$ regardless of the choice of the index set $I$. In fact, we can first find a subset of $\mathcal{R}$,
\begin{equation}
\begin{split}
\mathcal{R}' &= \{ O\in \mathrm{CPTP} | \exists C\in \textsf{C}_n, \forall i\in I, \mathcal{T}_i \circ (O \circ C^{-1}) = \mathcal{T}_i \circ (O \circ C^{-1}) \circ \mathcal{T}_i \}\\
&= \{ O\in \mathrm{CPTP} | \exists C\in \textsf{C}_n, \forall i\in I, \mathcal{T}_i \circ O  = \mathcal{T}_i \circ O \circ (C^{-1} \circ \mathcal{T}_i \circ C) \}\\
&\subseteq \mathcal{R}.
\end{split}
\end{equation}
The third line comes from the fact that any Clifford gate $C$ satisfies $C^{-1} \textsf{P}_n C = \textsf{P}_n$, which means $\exists j\in [4^n]$, $\mathcal{T}_j = C^{-1} \circ \mathcal{T}_i \circ C$. For any Clifford gate $C\in \textsf{C}_n$, $\exists C\in\textsf{C}_n$, $\mathcal{T}_i \circ (C \circ C^{-1}) = \mathcal{T}_i \circ (C \circ C^{-1}) \circ \mathcal{T}_i$. Thus, $\textsf{C}_n \subseteq \mathcal{R}' \subseteq \mathcal{R}$. Any Clifford gate before Pauli measurements can be replaced and requires no real implementation, which has been discovered and utilised from the perspective of the quantum evolution in the Heisenberg picture~\cite{ZhongXia2021SHVQA}.

Essentially, the classical replaceability of Clifford gates comes from the commutation relation in the Clifford algebra. This fact has been observed and widely used in quantum error correction. For example, in the Clifford+$T$ model~\cite{Litinski2019gameofsurfacecodes}, one would remove all Clifford gates ahead of the final measurements. If one can perform mutually commuted Pauli measurements, then the Clifford gates before measurements can be absorbed. Thus, one only needs to implement multi-qubit $\frac{\pi}{8}$-rotations in this model. Another example utilising the Clifford algebra to reduce quantum gates is ``Pauli frame''~\cite{Knill2005Pauliframe,Chamberland2018faulttolerant,Suzuki2022ErrorMitigation}. Through updating the so-called ``Pauli frame'' with classical processing, there is no need for explicitly implementing Pauli gates before a Clifford gate.

The CRO set, $\mathcal{R}$, as well as $\mathcal{R}'$ are in general larger than $\textsf{C}_n$ since the classical replacement does not necessarily rely on the Clifford algebra. Take $H = Z^{\otimes 3}$ as an example where $Z$ is the Pauli-$Z$ gate. Then, the non-Clifford gate controlled-controlled-NOT, or CCX, is also an element in $\mathcal{R}'$ since $\mathcal{T}_i \circ \mathrm{CCX} = \mathcal{T}_i \circ \mathrm{CCX} \circ \mathcal{T}_i$. Note that this can be extended to the case of the Toffoli gate with more than two control qubits. It means that we can virtually apply a number of quantum gates beyond the Clifford group. This might be an approach to improving the expressibility of the variational quantum circuit while saving the consumption of the experimental resource.

\section{Characterization of Irreplaceability}\label{sc:Resource}
The replaceability with classical processing of CRO motivates us to view the irreplaceability as a kind of quantum resource~\cite{Streltsov2017CoherenceRMP,chitambar2019quantum,liu2020operational,gour2021entropy}. We can establish a channel resource theory to quantitatively study the potential quantum advantage brought by irreplaceability. We take CRO as the set of free channels and any quantum operation outside this set contains the resource of irreplaceability. Depending on the nature of the channel input and output, we can choose different types of CROs and specify a corresponding resource theory. In this work, we take qcCRO as an example.

To establish a channel resource theory for irreplaceability, we first need to specify the free channels and free superchannels~\cite{liu2020operational}. The set of free channels is naturally given by Eq.~\eqref{eq:qcCRO}. We consider the set of resource non-generating (RNG) superchannels to be the set of free superchannels $\mathcal{F}$,
\begin{equation}
	\mathcal{F} = \mathrm{RNG} = \{\Lambda| \forall \mathcal{M}\in \mathrm{qcCRO}, \Lambda(\mathcal{M})\in \mathrm{qcCRO}\}.
\end{equation}
To quantify the amount of irreplaceability of a channel, we can utilize the Choi-state representation of the channel. Given a channel $\mathcal{N}$ acting on $\mathcal{D}(\mathcal{H})$, the corresponding Choi-state is
\begin{equation}
\Phi_{\mathcal{N}} = I\otimes \mathcal{N} (\ketbra{\Phi^+}),
\end{equation}
where $\ket{\Phi^+} = \frac{1}{\sqrt{d}}\sum_{i=0}^{d-1} \ket{ii}$ is the maximally entangled state on $\mathcal{H}\otimes \mathcal{H}$. We define two measures to characterize irreplaceability. One is relative entropy~\cite{Vedral1997Entanglement} of irreplaceability with details shown in Appendix~\ref{append:rel}. The other is robustness~\cite{Vidal1999Robustness} of irreplaceability defined as follows.

\begin{definition}[Robustness of Irreplaceability]
Given a channel $\mathcal{N}\in CPTP$, the robustness of irreplaceability of $\mathcal{N}$ is
\begin{equation}\label{eq:RobofIrDef1}
R(\mathcal{N})
= \min_{\mathcal{M}\in \mathrm{CPTP} } \left\{s\geq 0 \bigg| \frac{\Phi_{\mathcal{N}}+s\Phi_{\mathcal{M}}}{1+s}\in \mathbf{F} \right\},
\end{equation}
where $\mathbf{F} = \{\Phi_{\mathcal{M}}\big| \mathcal{M}\in \mathrm{qcCRO}\}$.
\end{definition}

The robustness of irreplaceability in Eq.~\eqref{eq:RobofIrDef1} has several equivalent definitions, as shown by the following lemma.

\begin{lemma}\label{lemma:equiv}
The robustness of irreplaceability of a channel $\mathcal{N}$ can be equivalently given by
\begin{align}
\label{eq:RobofIrDef2} R(\mathcal{N}) &= \min_{\mathcal{M}\in \mathrm{CPTP}} \left\{s\geq 0 \bigg| \frac{\Phi_{\Delta\circ\mathcal{N}}+s\Phi_{\Delta\circ\mathcal{M}}}{1+s}\in \mathbf{F}'\right\}\\
\label{eq:RobofIrDef3} &= \min_{\mathcal{M}\in \mathrm{CPTP} } \left\{s\geq 0 \bigg| \frac{\mathcal{N}+s\mathcal{M}}{1+s}\in \mathrm{qcCRO}\right\}\\
\label{eq:RobofIrDef4} &= \min_{\mathcal{M}\in \mathrm{CPTP} } \left\{s\geq 0 \bigg| \frac{\Delta\circ\mathcal{N}+s\Delta\circ\mathcal{M}}{1+s}\in \mathrm{qcCRO}\right\}\\
\label{eq:RobofIrDef5} &= \min_{\mathcal{M}\in \mathrm{CPTP} } \left\{s\geq 0 \bigg| \frac{\Delta\circ\mathcal{N}+s\mathcal{M}}{1+s}\in \mathrm{qcCRO}\right\},
\end{align}
where $\mathbf{F}'=\{\Phi_{\Delta\circ\mathcal{M}}\big| \mathcal{M}\in \mathrm{qcCRO}\}$.
\end{lemma}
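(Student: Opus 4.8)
The plan is to show that the feasible sets over $s$ in Eqs.~\eqref{eq:RobofIrDef2}--\eqref{eq:RobofIrDef5} each coincide with the feasible set of Eq.~\eqref{eq:RobofIrDef1}; since these sets are closed subsets of $[0,\infty)$ — because $\mathrm{qcCRO}$, $\mathbf F$, $\mathbf F'$ are closed, $\mathrm{CPTP}$ is compact, and one can extract convergent subsequences of the noise channels — the minima are attained and the equalities of the feasible sets pass to equalities of the optimal values. Two elementary facts are used throughout. First, the Choi map $\mathcal X\mapsto\Phi_{\mathcal X}$ is linear and injective, it sends $(\mathcal N+s\mathcal M)/(1+s)$ to $(\Phi_{\mathcal N}+s\Phi_{\mathcal M})/(1+s)$, and it sends $\mathrm{CPTP}$ maps to states. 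Second, $\Delta$ is a linear $\mathrm{CPTP}$ map with $\Delta\circ\Delta=\Delta$, and by Eq.~\eqref{eq:qcCRO} a $\mathrm{CPTP}$ map $\mathcal X$ lies in $\mathrm{qcCRO}$ exactly when $\Delta\circ\mathcal X=\Delta\circ\mathcal X\circ\Delta$, a condition depending on $\mathcal X$ only through $\Delta\circ\mathcal X$. Fact two immediately gives the \emph{reduction lemma}: for every $\mathcal X\in\mathrm{CPTP}$ one has $\mathcal X\in\mathrm{qcCRO}\Leftrightarrow\Delta\circ\mathcal X\in\mathrm{qcCRO}$, and if $\mathcal X=\Delta\circ\mathcal W$ for some $\mathcal W\in\mathrm{CPTP}$ then $\Delta\circ\mathcal X=\mathcal X$.

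First I would prove Eq.~\eqref{eq:RobofIrDef1} $=$ Eq.~\eqref{eq:RobofIrDef3}. For $s\ge 0$ and $\mathcal M\in\mathrm{CPTP}$ the map $(\mathcal N+s\mathcal M)/(1+s)$ is again $\mathrm{CPTP}$, so by fact one its Choi state is $(\Phi_{\mathcal N}+s\Phi_{\mathcal M})/(1+s)$; since $\mathbf F$ is by definition the Choi image of $\mathrm{qcCRO}$ and the Choi map is injective, this state lies in $\mathbf F$ iff $(\mathcal N+s\mathcal M)/(1+s)\in\mathrm{qcCRO}$, so the two feasible sets coincide.

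Next, Eq.~\eqref{eq:RobofIrDef3} $=$ Eq.~\eqref{eq:RobofIrDef4} $=$ Eq.~\eqref{eq:RobofIrDef5}. Linearity of $\Delta\circ(\cdot)$ gives $\Delta\circ\big((\mathcal N+s\mathcal M)/(1+s)\big)=(\Delta\circ\mathcal N+s\,\Delta\circ\mathcal M)/(1+s)$, and by the reduction lemma the inner map lies in $\mathrm{qcCRO}$ iff $(\mathcal N+s\mathcal M)/(1+s)$ does; this yields Eq.~\eqref{eq:RobofIrDef3} $=$ Eq.~\eqref{eq:RobofIrDef4}. For Eq.~\eqref{eq:RobofIrDef4} $=$ Eq.~\eqref{eq:RobofIrDef5}, the inclusion ``$\subseteq$'' is immediate since $\Delta\circ\mathcal M\in\mathrm{CPTP}$; conversely, if $(\Delta\circ\mathcal N+s\mathcal M)/(1+s)\in\mathrm{qcCRO}$, then applying $\Delta\circ(\cdot)$ and using $\Delta\circ\Delta=\Delta$ together with the reduction lemma shows $(\Delta\circ\mathcal N+s\,\Delta\circ\mathcal M)/(1+s)\in\mathrm{qcCRO}$ with noise channel $\Delta\circ\mathcal M\in\mathrm{CPTP}$. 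Finally, for Eq.~\eqref{eq:RobofIrDef2} $=$ Eq.~\eqref{eq:RobofIrDef4}, write $\mathcal Z=(\Delta\circ\mathcal N+s\,\Delta\circ\mathcal M)/(1+s)$, so by fact one $(\Phi_{\Delta\circ\mathcal N}+s\Phi_{\Delta\circ\mathcal M})/(1+s)=\Phi_{\mathcal Z}$; it remains to check $\Phi_{\mathcal Z}\in\mathbf F'\Leftrightarrow\mathcal Z\in\mathrm{qcCRO}$. If $\mathcal Z\in\mathrm{qcCRO}$, then since $\mathcal Z=\Delta\circ\mathcal W$ with $\mathcal W=(\mathcal N+s\mathcal M)/(1+s)\in\mathrm{CPTP}$ we get $\Delta\circ\mathcal Z=\mathcal Z$, hence $\Phi_{\mathcal Z}=\Phi_{\Delta\circ\mathcal Z}\in\mathbf F'$; conversely, if $\mathcal Z=\Delta\circ\mathcal M'$ with $\mathcal M'\in\mathrm{qcCRO}$, the reduction lemma gives $\Delta\circ\mathcal M'\in\mathrm{qcCRO}$, i.e.\ $\mathcal Z\in\mathrm{qcCRO}$.

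There is no deep obstacle here; the argument is pure bookkeeping. The one place deserving care is the $\mathbf F'$ step, because $\mathbf F'$ is the Choi image of $\{\Delta\circ\mathcal M':\mathcal M'\in\mathrm{qcCRO}\}$ rather than of $\mathrm{qcCRO}$ itself, so one must verify that on the relevant fixed points of $\Delta\circ(\cdot)$ these two sets have the same Choi image — which is exactly what the reduction lemma (together with $\mathcal Z=\Delta\circ\mathcal W\Rightarrow\Delta\circ\mathcal Z=\mathcal Z$) supplies. One should also state explicitly why each ``$\min$'' is attained, so that coincidence of the feasible sets indeed gives coincidence of the optimal values.
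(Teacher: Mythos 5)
Your proposal is correct and follows essentially the same route as the paper's proof: the Choi isomorphism (linearity plus injectivity) handles the passage between the Choi-state and channel-level formulations, and the idempotence $\Delta\circ\Delta=\Delta$ together with the fact that the qcCRO criterion depends on a channel $\mathcal{X}$ only through $\Delta\circ\mathcal{X}$ handles Eqs.~\eqref{eq:RobofIrDef4} and~\eqref{eq:RobofIrDef5}. The only differences are presentational: you show the feasible sets coincide pointwise (and are more explicit about the $\mathbf{F}'$ step and about attainment of the minima, which the paper dismisses as ``directly verified by the definition''), whereas the paper compares the optimal values $s_1,s_2$ by two-sided inequalities.
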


We leave the proof of the lemma in Appendix~\ref{ProofLemmaEquiv}. From Lemma~\ref{lemma:equiv}, we can see that $R(\mathcal{N})=R(\Delta\circ\mathcal{N})$. The robustness is a valid measure, which vanishes to zero only for qcCRO and enjoys the properties of monotonicity under free superchannels and convexity under channel mixing, as shown by the following lemma.

\begin{lemma}\label{lemma:robust}
The robustness of irreplaceability in Eq.~\eqref{eq:RobofIrDef1}, $R(\mathcal{N})$, has the following properties:
\begin{enumerate}
\item
\emph{Monotonicity: }
$\forall\mathcal{N}\in \mathrm{CPTP}$, $\forall\Lambda\in \mathcal{F}$,
\begin{equation}
R(\Lambda(\mathcal{N}))\leq R(\mathcal{N}).
\end{equation}

\item
\emph{Convexity: }
Given an index set $\mathcal{I}$, $\forall\mathcal{N}_{i\in\mathcal{I}}\in \mathrm{CPTP}$, $\forall\{p_i\}_{i\in\mathcal{I}}$ such that $p_i\geq0,\sum_{i\in\mathcal{I}}p_i=1$,
\begin{equation}
R\left(\sum_{i}p_i\mathcal{N}_i\right)\leq \sum_{i}p_i R(\mathcal{N}_i).
\end{equation}
\end{enumerate}
\end{lemma}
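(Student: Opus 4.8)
The plan is to prove the two properties separately, in each case reducing to the characterization of $R$ as an infimum over feasible "flattening" weights together with the equivalent form in Eq.~\eqref{eq:RobofIrDef3}, which is the most convenient one since it lives entirely at the level of channels rather than Choi states. For monotonicity, suppose $s$ is feasible for $\mathcal{N}$ in Eq.~\eqref{eq:RobofIrDef3}, so that there is $\mathcal{M}\in\mathrm{CPTP}$ with $\frac{\mathcal{N}+s\mathcal{M}}{1+s}\in\mathrm{qcCRO}$. Apply the free superchannel $\Lambda\in\mathcal{F}=\mathrm{RNG}$ to both sides. Since a superchannel is a linear map on the space of channels that sends $\mathrm{CPTP}$ to $\mathrm{CPTP}$ and, by definition of $\mathrm{RNG}$, sends $\mathrm{qcCRO}$ to $\mathrm{qcCRO}$, we get $\Lambda\!\left(\frac{\mathcal{N}+s\mathcal{M}}{1+s}\right)=\frac{\Lambda(\mathcal{N})+s\,\Lambda(\mathcal{M})}{1+s}\in\mathrm{qcCRO}$, with $\Lambda(\mathcal{M})\in\mathrm{CPTP}$. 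Hence every $s$ feasible for $\mathcal{N}$ is feasible for $\Lambda(\mathcal{N})$, so the infimum cannot increase: $R(\Lambda(\mathcal{N}))\leq R(\mathcal{N})$. The only point requiring a word of care is the linearity of $\Lambda$ on affine combinations of channels, which is the standard defining property of a superchannel and lets us split $\Lambda$ across the sum.

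For convexity, take the family $\{\mathcal{N}_i\}_{i\in\mathcal{I}}$ and weights $\{p_i\}$. For each $i$ and any $\varepsilon>0$ pick $s_i$ within $\varepsilon$ of $R(\mathcal{N}_i)$ together with a witness $\mathcal{M}_i\in\mathrm{CPTP}$ so that $\mathcal{N}_i' := \frac{\mathcal{N}_i+s_i\mathcal{M}_i}{1+s_i}\in\mathrm{qcCRO}$. Set $s=\sum_i p_i s_i$. The natural guess for a witness for the mixture is to combine the $\mathcal{M}_i$ with weights proportional to $p_i s_i$, i.e. $\mathcal{M}=\frac{1}{s}\sum_i p_i s_i \mathcal{M}_i$ (assuming $s>0$; the case $s=0$ forces every $\mathcal{N}_i\in\mathrm{qcCRO}$ and then the mixture is in $\mathrm{qcCRO}$ as well by convexity of that set, giving $R=0$). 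Then one computes
\begin{equation}
\frac{\sum_i p_i\mathcal{N}_i + s\,\mathcal{M}}{1+s}
= \frac{\sum_i p_i(\mathcal{N}_i + s_i\mathcal{M}_i)}{1+s}
= \frac{\sum_i p_i(1+s_i)\,\mathcal{N}_i'}{1+s},
\end{equation}
which is a convex combination of the $\mathcal{N}_i'\in\mathrm{qcCRO}$, since the coefficients $\frac{p_i(1+s_i)}{1+s}$ are nonnegative and sum to $\frac{\sum_i p_i(1+s_i)}{1+s}=\frac{1+s}{1+s}=1$. Because $\mathrm{qcCRO}$ is a convex set (it is defined by the linear constraint $\Delta\circ O=\Delta\circ O\circ\Delta$ inside $\mathrm{CPTP}$), the right-hand side lies in $\mathrm{qcCRO}$, so $s$ is feasible for $\sum_i p_i\mathcal{N}_i$. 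Taking $\varepsilon\to0$ yields $R(\sum_i p_i\mathcal{N}_i)\leq \sum_i p_i R(\mathcal{N}_i)$.

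The step I expect to be the most delicate is not either inequality per se but making sure the ingredients are legitimate: (i) that $\mathcal{M}$ as constructed is genuinely an element of $\mathrm{CPTP}$ — this is immediate since it is a convex combination of CPTP maps — and (ii) that the reduction to Eq.~\eqref{eq:RobofIrDef3} is sound, which is exactly the content of Lemma~\ref{lemma:equiv} and so may be invoked freely. A minor subtlety for an infinite index set $\mathcal{I}$ is whether the infima defining each $R(\mathcal{N}_i)$ are attained; the $\varepsilon$-argument above sidesteps this, and the sum $\sum_i p_i R(\mathcal{N}_i)$ is assumed finite (otherwise there is nothing to prove), so no convergence issue arises. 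One should also record that $R(\mathcal{N})=0$ iff $\mathcal{N}\in\mathrm{qcCRO}$: the "if" direction is immediate by taking $s=0$, and the "only if" direction follows because $\mathrm{qcCRO}$ is closed (the constraint is a closed linear condition) so that the infimum, if zero, is attained at $\mathcal{N}$ itself — this faithfulness claim is stated in the text just before the lemma and can be folded into the same proof.
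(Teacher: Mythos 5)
Your proof is correct and follows essentially the same route as the paper's: monotonicity from the linearity of the superchannel together with the RNG property, and convexity by mixing the witnesses $\mathcal{M}_i$ with weights $p_i s_i/s$ and invoking the convexity of $\mathrm{qcCRO}$. Your extra care with the $\varepsilon$-argument, the $s=0$ case, and writing the combined witness as $\frac{1}{s}\sum_i p_i s_i \mathcal{M}_i$ (rather than in terms of the free channels $\mathcal{M}_i'$, as the paper's appendix misprints) only tightens the same argument.
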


We leave the proof of the lemma in Appendix~\ref{ProofLemmaRobust}. Intuitively, these properties originate from the convexity of the set of free channels. The robustness can hence be viewed as a geometric measure for irreplaceability that quantifies the distance between the considered channel and the set of qcCRO.

When studying a large quantum system composed of several parts, if the action of an operation is restricted to one part of the system, we expect its robustness of irreplaceability to remain the same when considering its trivial extension to the whole system. This is indeed the case as shown in the following lemma. We leave the proof in Appendix~\ref{ProofLemmaExtension}.

\begin{lemma}\label{lemma:extension}
The robustness of irreplaceability, $R(\mathcal{N})$, satisfies $R(\mathcal{N}) = R(\mathcal{N}\otimes I)$, where $I$ is the identity operation of an ancillary system with an arbitrary finite dimension.
\end{lemma}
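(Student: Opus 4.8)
The plan is to work with the channel-level form of the robustness furnished by Lemma~\ref{lemma:equiv}, Eq.~\eqref{eq:RobofIrDef3}, namely $R(\mathcal{N}) = \min_{\mathcal{M}\in\mathrm{CPTP}}\{s\ge 0\mid (\mathcal{N}+s\mathcal{M})/(1+s)\in\mathrm{qcCRO}\}$, and to establish the two inequalities $R(\mathcal{N}\otimes I)\le R(\mathcal{N})$ and $R(\mathcal{N})\le R(\mathcal{N}\otimes I)$ separately. Write $\mathcal{H}$ for the system on which $\mathcal{N}$ acts and $\mathcal{H}_a$ for the ancilla. The ingredients I would rely on are that the dephasing on the composite factorizes, $\Delta = \Delta_{\mathcal{H}}\otimes\Delta_a$, together with idempotence $\Delta_a\circ\Delta_a = \Delta_a$, and the elementary partial-trace identities $\Delta_{\mathcal{H}}\circ\mathrm{Tr}_{a} = \mathrm{Tr}_{a}\circ(\Delta_{\mathcal{H}}\otimes I_a)$ and $\mathrm{Tr}_{a}\circ(I_{\mathcal{H}}\otimes\Delta_a) = \mathrm{Tr}_{a}$, the latter because the partial trace is insensitive to a dephasing of the system being traced out.

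For $R(\mathcal{N}\otimes I)\le R(\mathcal{N})$, the first step is to check that $\mathrm{qcCRO}$ is closed under $\cdot\otimes I$: if $\Delta_{\mathcal{H}}\circ O = \Delta_{\mathcal{H}}\circ O\circ\Delta_{\mathcal{H}}$, then factorizing $\Delta$ gives $\Delta\circ(O\otimes I) = (\Delta_{\mathcal{H}}\circ O)\otimes\Delta_a$ and $\Delta\circ(O\otimes I)\circ\Delta = (\Delta_{\mathcal{H}}\circ O\circ\Delta_{\mathcal{H}})\otimes\Delta_a$, which coincide, so $O\otimes I\in\mathrm{qcCRO}$ by the characterization in Eq.~\eqref{eq:qcCRO}. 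Taking an optimal decomposition $(\mathcal{N}+s\mathcal{M})/(1+s)\in\mathrm{qcCRO}$ at $s = R(\mathcal{N})$ and tensoring with $I$ then exhibits $\mathcal{M}\otimes I$ as a valid (CPTP) witness for $\mathcal{N}\otimes I$ at the same value of $s$, yielding the bound.

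The substantive direction is $R(\mathcal{N})\le R(\mathcal{N}\otimes I)$, where the idea is to push an optimal witness for $\mathcal{N}\otimes I$ back down to $\mathcal{H}$. Fix $\omega = \ketbra{0}{0}$, a rank-one projector onto a computational-basis vector of $\mathcal{H}_a$, so $\Delta_a(\omega)=\omega$. Given $s = R(\mathcal{N}\otimes I)$ with CPTP witness $\mathcal{M}'$ on $\mathcal{H}\otimes\mathcal{H}_a$, so that $\mathcal{E}' := ((\mathcal{N}\otimes I)+s\mathcal{M}')/(1+s)\in\mathrm{qcCRO}$, I would set $\mathcal{M}(\rho) := \mathrm{Tr}_{a}[\mathcal{M}'(\rho\otimes\omega)]$, which is CPTP on $\mathcal{H}$, and $\mathcal{E}(\rho) := \mathrm{Tr}_{a}[\mathcal{E}'(\rho\otimes\omega)]$. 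Since $\mathrm{Tr}_{a}[(\mathcal{N}\otimes I)(\rho\otimes\omega)] = \mathcal{N}(\rho)$, one gets $\mathcal{E} = (\mathcal{N}+s\mathcal{M})/(1+s)$, so it only remains to show $\mathcal{E}\in\mathrm{qcCRO}$. This is the heart of the argument: starting from $\Delta_{\mathcal{H}}\circ\mathrm{Tr}_{a}[\mathcal{E}'(\rho\otimes\omega)]$, commute $\Delta_{\mathcal{H}}$ through $\mathrm{Tr}_{a}$, insert a free $\Delta_a$ to complete the dephasing to the full $\Delta$ on $\mathcal{H}\otimes\mathcal{H}_a$, apply the qcCRO identity $\Delta\circ\mathcal{E}' = \Delta\circ\mathcal{E}'\circ\Delta$, use $\Delta(\rho\otimes\omega) = \Delta_{\mathcal{H}}(\rho)\otimes\omega$ (here the $\Delta_a$-invariance of $\omega$ enters), and then run exactly the same manipulations in reverse to land on $\Delta_{\mathcal{H}}\circ\mathrm{Tr}_{a}[\mathcal{E}'(\Delta_{\mathcal{H}}(\rho)\otimes\omega)] = \Delta_{\mathcal{H}}\circ\mathcal{E}\circ\Delta_{\mathcal{H}}(\rho)$. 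Hence $(\mathcal{N}+s\mathcal{M})/(1+s)\in\mathrm{qcCRO}$ and $R(\mathcal{N})\le s = R(\mathcal{N}\otimes I)$, which together with the first direction gives $R(\mathcal{N}) = R(\mathcal{N}\otimes I)$.

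Conceptually, the two directions say that both $\mathcal{N}\mapsto\mathcal{N}\otimes I$ and $\mathcal{N}\mapsto\mathrm{Tr}_{a}\circ\mathcal{N}\circ(\cdot\otimes\omega)$ send qcCRO channels to qcCRO channels, so—modulo the harmless point that they change the underlying dimension—they act as free superchannels, and one could alternatively invoke the monotonicity of Lemma~\ref{lemma:robust} together with $\mathrm{Tr}_{a}\circ(\mathcal{N}\otimes I)\circ(\cdot\otimes\omega) = \mathcal{N}$ to obtain both inequalities at once. I expect the only real obstacle to be the bookkeeping in the chained identities of the hard direction—tracking where each $\Delta_a$ is inserted or removed and where it commutes past $\mathrm{Tr}_{a}$—but no idea beyond the factorization of $\Delta$ and the qcCRO property of $\mathcal{E}'$ is needed.
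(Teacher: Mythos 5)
Your proof is correct and follows essentially the same route as the paper's: the easy inequality by tensoring the optimal witness with $I$ (closure of qcCRO under tensor products), and the converse by restricting the witness for $\mathcal{N}\otimes I$ back to $\mathcal{H}$ via a partial trace with a fixed $\Delta_a$-invariant ancilla input, using the factorization of the dephasing and the algebraic characterization of qcCRO. The only cosmetic difference is that the paper fixes the ancilla to the maximally mixed state $\mathbb{I}_a/d_a$ (so that the restriction coincides with the partial trace of Choi states), whereas you use $\ketbra{0}{0}$; both choices work because each is invariant under $\Delta_a$.
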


Similar to other robustness-type measures~\cite{uola2019quantifying}, the calculation of the robustness of irreplaceability can be cast into a conic programming problem and solved efficiently. We provide the detailed method for its calculation in Appendix~\ref{Append:RobustCal}. Here, we present some numerical results. Consider the family of single-qubit gates, $U(\theta) = \cos \theta Z + \sin \theta X,\theta\in[0,\pi/2]$, and assign the computational basis to be the eigenvectors of Pauli operator $Z$. In the case of qcCRO, when $\theta=0$, the gate becomes Pauli operator $Z$, which can be omitted when followed with the computational-basis measurement. When $\theta=\pi/2$, the gate becomes Pauli operator $X$, which can be replaced by a classical flip operation after the measurement. In cases other than these two extremes, the gate has non-zero irreplaceability. In Figure~\ref{fig:ResourceMeasure}, we plot the amount of irreplaceability with respect to the gate set parameter, $\theta$. In addition to the robustness measure, we also depict the relative entropy of the irreplaceability of the gates.

\begin{figure}[!htbp]
\centering
\resizebox{9cm}{!}{\includegraphics{./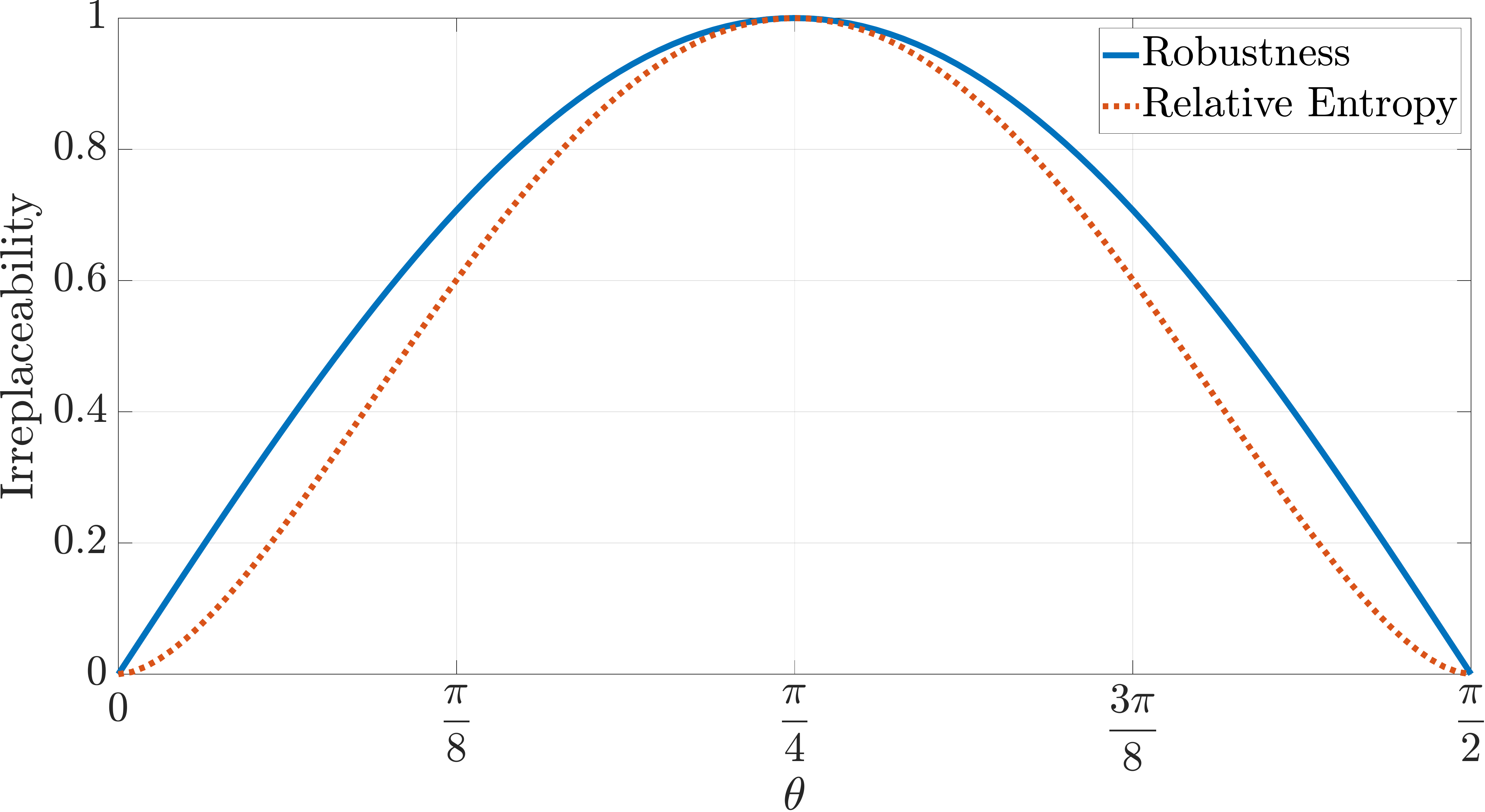}}
	\caption{Irreplaceability of the single-qubit gates family, $\{U(\theta) = \cos \theta Z + \sin \theta X|\theta\in[0,\pi/2]\}$, as manifested in robustness and relative entropy measures. The two ends, when $\theta$ equals $0$ and $\pi/2$, represent two qcCROs, $Z$, and $X$, respectively. While $0<\theta<\pi/2$, $U(\theta)$ is not classically replaceable. The amount of irreplaceability achieves the highest in the case that $\theta = \pi/4$, or equivalently, $U(\theta)$ is a Hadamard gate.}
	\label{fig:ResourceMeasure}
\end{figure}

Interestingly, the robustness of irreplaceability has an operational meaning: it measures the advantage that a channel can provide in a non-local game~\cite{Takagi2019Robustness,Yuan2021Memory} of state discrimination. To be specific, we define a bipartite non-local game in Box~\ref{box:nonlocal}.

\begin{game}{box:nonlocal}{: Non-local Game}
A non-local game $\mathcal{G} =\{ \{\alpha_{ij}\}, \{\sigma_i\}, \{\ketbra{j}\} \}$ for two parties, Alice and Bob, is composed of the following items:
\begin{itemize}
  \item $\{\alpha_{ij}\}$ --- payoff values, $\alpha_{ij}\in\mathbb{R},\forall i,j$
  \item $\{\sigma_i\}$ --- a set of quantum states Alice prepares
  \item $\{\ketbra{j}\}$ --- the computational basis measurement Bob performs
\end{itemize}

\tcblower
\begin{enumerate}
\item Alice randomly and uniformly selects a state from $\{\sigma_i\}$ and sends it to Bob via the quantum channel $\mathcal{N}$.

\item Bob performs the computational basis measurement $\{\ketbra{j}\}$ on his received state. If he obtains the result $j$, then he guesses the received state to be $\sigma_j$. If Alice sends $\sigma_i$, Alice and Bob obtain the corresponding payoff $\alpha_{ij}$.

\item Alice and Bob repeat the game for sufficiently many times and calculate the average payoff value.

\end{enumerate}
\end{game}

In the case where Alice sends the state $\sigma_i$, the probability that Bob takes a guess of $\sigma_j$ is $\tr(\mathcal{N}(\sigma_i) \ketbra{j})$. On average, the performance that Alice and Bob can obtain in this game is evaluated by the expected payoff function,
\begin{equation}
  p\left(\mathcal{N}, \mathcal{G}\right) = \sum_{i,j} \alpha_{ij} \tr(\mathcal{N}(\sigma_i) \ketbra{j}).
\end{equation}
For simplicity, we call it the performance of $\mathcal{N}$ in the non-local game $\mathcal{G}$.
Without loss of generality, we can require that $\forall\mathcal{M}\in \mathrm{qcCRO}$, the payoff values $\alpha_{ij}$ satisfy that the performance of all qcCROs is non-negative and bounded by $1$. Under this constraint, we have the following theorem showing the advantage brought by an irreplaceable quantum channel.

\begin{theorem}\label{thm:nonlocal}
The best advantage a quantum channel can provide over all qcCROs and all non-local games is given by $1 + R(\mathcal{N})$:
\begin{equation}\label{eq:advantage2robust}
\begin{split}
&\max_{\{\alpha_{ij}\}, \{\sigma_i\}} \min_{\mathcal{M}\in\mathrm{qcCRO}} \frac{p\left(\mathcal{N}, \mathcal{G}\right)}{\ p\left(\mathcal{M}, \mathcal{G}\right)} = 1 + R(\mathcal{N}),\\
&\text{s.t. } 0\leq p(\mathcal{M}, \mathcal{G})\leq 1,\forall \mathcal{M}\in \mathrm{qcCRO}.
\end{split}
\end{equation}
\end{theorem}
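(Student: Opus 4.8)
The plan is to recognize the left-hand side of \eqref{eq:advantage2robust} as a max-min over a bilinear expression and to apply a minimax/duality argument together with the conic-programming characterization of $R(\mathcal{N})$ developed for Lemma~\ref{lemma:equiv}. First I would fix the channel $\mathcal{N}$ and reorganize the payoff. Writing the performance as $p(\mathcal{N},\mathcal{G}) = \sum_{ij}\alpha_{ij}\tr(\mathcal{N}(\sigma_i)\ketbra{j})$, I would absorb the states $\sigma_i$ into a single witness operator: since Alice picks $\sigma_i$ uniformly and only the diagonal entries $\tr(\mathcal{N}(\sigma_i)\ketbra{j})$ enter, and since by Lemma~\ref{lemma:equiv} we have $R(\mathcal{N})=R(\Delta\circ\mathcal{N})$, the relevant object is the Choi state $\Phi_{\Delta\circ\mathcal{N}}$ paired against a Hermitian operator $W$ built from $\{\alpha_{ij}\}$ and $\{\sigma_i\}$. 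Concretely, $p(\mathcal{N},\mathcal{G}) = d\,\tr(W\,\Phi_{\Delta\circ\mathcal{N}})$ for an appropriate $W$, so the constraint $0\le p(\mathcal{M},\mathcal{G})\le1$ for all $\mathcal{M}\in\mathrm{qcCRO}$ becomes a linear constraint on $W$ relative to the set $\mathbf{F}'$ of qcCRO Choi states.

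Next I would recast the max-min. For a fixed game, $\min_{\mathcal{M}\in\mathrm{qcCRO}} p(\mathcal{N},\mathcal{G})/p(\mathcal{M},\mathcal{G})$ is a ratio that is awkward directly, so I would use the standard trick: the quantity $\sup\{\lambda : p(\mathcal{N},\mathcal{G}) \ge \lambda\, p(\mathcal{M},\mathcal{G})\ \forall\mathcal{M}\in\mathrm{qcCRO}\}$ equals $p(\mathcal{N},\mathcal{G})/\max_{\mathcal{M}}p(\mathcal{M},\mathcal{G})$ under the normalization $\max_{\mathcal M}p(\mathcal M,\mathcal G)\le 1$, and then maximizing over games. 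After substituting the witness form, the outer optimization over $\{\alpha_{ij}\},\{\sigma_i\}$ becomes an optimization over admissible witnesses $W$, giving something of the shape $\max_W \{\, d\,\tr(W\Phi_{\Delta\circ\mathcal{N}}) : \tr(W\,\Phi_{\Delta\circ\mathcal{M}})\le 1\ \forall\mathcal{M}\in\mathrm{qcCRO},\ \text{plus positivity}\,\}$. This is precisely the dual conic program to the robustness program in \eqref{eq:RobofIrDef4}–\eqref{eq:RobofIrDef5}, whose optimal value is $1+R(\mathcal{N})$; one direction (achievability, $\ge$) is witnessed by the optimal dual witness of the robustness SDP, and the other ($\le$) follows by weak duality applied to any feasible pair.

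The two halves I would carry out explicitly: for $1+R(\mathcal{N}) \le \text{LHS}$, take the optimal Hermitian witness $W^\star$ from the dual of the robustness program, check it satisfies the game normalization (rescaling if needed so that $p(\mathcal M,\mathcal G)\le1$ on qcCRO), translate it back into payoffs $\alpha_{ij}$ and states $\sigma_i$, and verify the resulting game achieves ratio $1+R(\mathcal N)$; here I must make sure the witness can be realized with genuinely positive payoff structure and valid density operators $\sigma_i$—this realizability step is where I would need to be careful, possibly using that any Hermitian $W$ can be shifted by a multiple of identity (which only adds a constant to all $\alpha_{ij}$, rescalable away) to land in the feasible region. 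For $\text{LHS}\le 1+R(\mathcal N)$, take the optimal $\mathcal{M}^\star$ and scalar $s^\star=R(\mathcal N)$ from the primal in \eqref{eq:RobofIrDef5}, so $(\Delta\circ\mathcal N + s^\star\mathcal M^\star)/(1+s^\star)\in\mathrm{qcCRO}$; then for any game, linearity of $p$ in the channel gives $p(\mathcal N,\mathcal G) = (1+s^\star)\,p\big(\tfrac{\Delta\mathcal N+s^\star\mathcal M^\star}{1+s^\star},\mathcal G\big) - s^\star p(\mathcal M^\star,\mathcal G)\le (1+s^\star)\cdot 1 - 0 = 1+R(\mathcal N)$, using the normalization bound on qcCROs and non-negativity of $p(\mathcal M^\star,\mathcal G)$, while the denominator $p(\mathcal M,\mathcal G)$ can be made $1$ by an appropriate choice of the minimizing $\mathcal M$ combined with feasibility. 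The main obstacle I anticipate is the realizability/normalization bookkeeping in the achievability direction—ensuring the abstract optimal witness of the conic dual corresponds to an honest non-local game with legitimate states and payoffs satisfying $0\le p(\mathcal M,\mathcal G)\le1$ on all of qcCRO—rather than the duality itself, which is routine once the problem is in conic form.
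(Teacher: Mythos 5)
Your proposal is correct and follows essentially the same route as the paper: both encode the game as a witness operator paired against the dephased Choi state, so that $p(\mathcal{N},\mathcal{G})=\tr(W\Delta_{1}(\Phi_{\mathcal{N}}))$ with $W$ built from $\{\alpha_{ij}\}$ and $\{\sigma_i\}$, and both identify the maximization over games with the conic (Lagrangian) dual of the robustness program, the paper establishing strong duality via Slater's condition while you split the argument into weak duality plus achievability by the optimal dual witness. The only point to mirror from the paper's setup is that the non-negativity of $p(\mathcal{M}^{*},\mathcal{G})$ you invoke for the \emph{general CPTP} map $\mathcal{M}^{*}$ in the robustness decomposition is exactly the dual-cone constraint $\tr(WB)\geq 0$ for all $B\in\mathbf{S}$ that the paper builds into its feasible witness set, rather than something that follows from the stated constraint on qcCROs alone.
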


The proof of Theorem~\ref{Append:ProofThmNonlocal} follows the idea of duality in conic programming~\cite{Takagi2019Robustness,uola2019quantifying}. We leave the detailed proof in Appendix~\ref{thmproof:resource}. Here, we discuss the operational meaning of the result. A qcCRO followed with a computational basis measurement is equivalent to applying a computational basis measurement followed with classical processing. As a result, Bob cannot distinguish the off-diagonal terms of $\{\sigma_i\}$ with a qcCRO. If the considered quantum channel is irreplaceable, when optimizing over all possible non-local games to maximize its advantage, we may choose a set of states $\{\sigma_i\}$ that vary mainly in the off-diagonal terms. The replaceable channels would fail in distinguishing the difference in these elements. On the other hand, a higher value of the robustness of irreplaceability brings a stronger ability of a channel to probe the off-diagonal terms of $\{\sigma_i\}$ and hence a higher performance in the non-local game.

In the non-local game as shown in Box~\ref{box:nonlocal}, Bob can only perform computational basis measurement to distinguish quantum states. We can generalize the non-local game by considering other measurement settings. In fact, for the case where Bob can perform arbitrary POVM to distinguish quantum states, the best advantage a quantum channel can provide over all qcCROs and all non-local games is also $1 + R(\mathcal{N})$. This result can be viewed as a special case of Ref.~\cite{Takagi2019Robustness}. Another interesting case is where Bob can choose arbitrary local basis measurements. This is often the scenario in practical implementation of quantum information processing protocols, like shadow tomography~\cite{huang2020shadow}. In this case, we could choose the convex hull of $\mathrm{qcCRO}_{\mathcal{U}}$ as the set of free channels, where $\mathcal{U}$ equals $\textsf{U}_2^{\otimes n}$ and represents the freedom to choose measurement bases. We expect to obtain similar results like Theorem~\ref{thm:nonlocal} when considering the advantage brought by an irreplaceable quantum channel and leave it for future works.

\section{Conclusion}\label{sc:Conclusion}
In this work, we define the concept of CROs in four scenarios. Depending on the feature of the input and output of an operation, we classify CROs into four types. Among these sets, ccCRO is the largest, since it composes of all quantum operations, while qqCRO is the smallest, being a subset of each of the other three sets. We provide necessary and sufficient criteria to determine whether an operation is a CRO and present its corresponding classical processing. Interestingly, for the special cases of unitary operations, namely, cqCRU, qqCRU, and qcCRU, one only needs deterministic classical processing for replacement. Furthermore, we discuss two extensions of CRO, where the state preparation and measurement may not be rank-one and fixed. As an application, we show that a number of quantum gates beyond the Clifford group can be replaced in VQA, unveiling an approach to enhancing the expressibility of a variational quantum circuit with classical replacement.

From a theoretical view, the clarification of replaceability and irreplaceability manifests a difference between classical and quantum operations. Focusing on the set of qcCRO, we establish a resource theory framework for the study of irreplaceability with classical operations. We propose two measures, namely, robustness and relative entropy, to quantify the resource. An interesting discovery is that the robustness measure quantifies the quantum advantage of an operation over all the qcCROs in a non-local game. Besides, we find that cqCRO and qcCRO are equivalent to MIO and CNAO in the resource theory of coherence, respectively, while qqCRO is a subset of both DIO and EB channels. The relation reveals the connection between irreplaceability, coherence, and entanglement.

Note that, unlike classical simulation, we are only concerned with whether a quantum operation can be replaced regardless of the consumption of classical computing resources. This is the case in many subjects like the security analysis of QKD. It may be practically interesting to study the circuit complexity issues~\cite{Eisert2021Complexity} in the classical replacement. Also, verifying whether a quantum gate, especially a large circuit composed of many quantum gates, is classically replaceable is important in quantum computing.

There are some other interesting future directions. One direction is extending the concept of CRO from finite dimensions to infinite dimensions. We expect such studies to benefit continuous-variable quantum information processing and inspire new perspectives on the non-classicality in the continuous-variable regime. Another topic is studying the problem of classical replacement in a non-Markovian evolution, that is, the evolution between two different times may not be a CPTP map. It is also interesting to further explore the resource theory of irreplaceability. As in other resource theories, we expect to witness more valid measures with operational meanings for irreplaceability.

\begin{acknowledgements}
We thank Pei Zeng and Junjie Chen for the helpful discussions. This work was supported by the National Natural Science Foundation of China Grants No.~11875173 and No.~12174216 and by the National Key Research and Development Program of China Grants No.~2019QY0702 and No.~2017YFA0303903.
\end{acknowledgements}

%%%%%%%%%%%%%%%%%%%%%%%%%%%%%%%%%%%%%%%%
% choose a style
%\bibliographystyle{ieeetr}
%\bibliographystyle{unsrt}
\bibliographystyle{apsrev}
%\bibliographystyle{unsrtnat}
%%%%%%%%%%%%%%%%%%%%%%%%%%%%%%%%%%%%%%%%

%%%%%%%%%%%%%%%%%%%%%%%%%%%%%%%%%%%%%%%%
% choose a .bib file
\bibliography{bibCO.bib}
%%%%%%%%%%%%%%%%%%%%%%%%%%%%%%%%%%%%%%%%

\newpage
\appendix
\section{Resource Theory}\label{appendsc:resource}
Quantum resource theory (QRT) studies how to characterize the resource stored in a quantum state, like entanglement~\cite{Vedral1997Entanglement} and coherence~\cite{Baumgratz2014Coherence}. It also provides a tool to explore the problem of the interconversion between different resources under specific restrictions. In this section we will review the framework of resource theory~\cite{Brand2015Reversible} and introduce the resource theory of coherence~\cite{Baumgratz2014Coherence} as well as the channel resource theory.

\subsection{Framework of Resource Theory}\label{appendssc:resourceframework}
For any QRT, we first point out the following as three main ingredients: \textit{resource, free states, and free operations}. The resource like coherence is the quantity we characterize in the resource theory. The term ``free'' means that this kind of states or operations can be obtained at no cost. The state not free is a resourceful state. The three ingredients are not independent to each other. They satisfy \textit{the free operations postulate (FOP)}~\cite{Brand2015Reversible}, that is, any free operation cannot transform a free state to a resourceful state. Denote the Hilbert space as $\mathcal{H}$, the states on $\mathcal{H}$ as $\mathcal{D}(\mathcal{H})$, the set of free states on $\mathcal{D}(\mathcal{H})$ as $\mathcal{F}$. Then for any free operation $\Lambda$,
\begin{equation}
	\Lambda(\sigma)\in \mathcal{F}, \forall \sigma\in \mathcal{F}.
\end{equation}
From this postulate, we could define a set of operations named resource non-generating operations (RNG), containing all the operations satisfying FOP:
\begin{equation}
\mathrm{RNG} = \{\Lambda\in \mathrm{CPTP}|\Lambda(\sigma)\in \mathcal{F}, \forall \sigma\in \mathcal{F}\}.
\end{equation}
CPTP represents the set of all completely positive channels on $\mathcal{D}(\mathcal{H})$. Any set of free operations is a subset of RNG. Different sets of free operations lead to different resource theories while the resource theories with RNG might have a universal property~\cite{Brand2015Reversible}.

After defining the three ingredients in QRT, we need to characterize the resource by providing a resource measure. The measure of the resource is a functional $M$ mapping from $\mathcal{D}(\mathcal{H})$ to non-negative real numbers. A valid measure $M$ should satisfy the following two conditions. First, it is 0 for the set of free states:
\begin{equation}\label{eq:R1}
	M(\sigma) = 0, \forall \sigma\in \mathcal{F}.
\end{equation}
In some QRTs this requirement is more strict. $M$ is 0 if and only if the state is free state:
\begin{equation}\label{eq:R1'}
	M(\sigma) = 0 \Leftrightarrow \sigma\in \mathcal{F}.
\end{equation}
Second, any proper resource measure $M$ cannot increase under the action of free operations. This is the monotone condition. Then, for any free operation $\Lambda$,
\begin{equation}\label{eq:R2}
	M(\Lambda(\rho))\leq M(\rho), \forall \rho.
\end{equation}
There are some additional requirements for the measure in different QRTs. These extra requirements could vary a lot for different QRTs. In the resource theory of coherence, it is reasonable that coherence cannot increase under mixing from a physical point of view. This leads to the convexity condition of the measure:
\begin{equation}\label{eq:convex}
M\left(\sum\limits_n c_n \rho_n\right)\leq \sum\limits_n c_nM(\rho_n),\sum\limits_n c_n = 1.
\end{equation}

Here, we introduce two kinds of measures. The first is divergence measure, that is, the measure of the resource bases on the divergence between the state and the set of free states $\mathcal{F}$. Divergence is a functional mapping two quantum states into a non-negative real number:
\begin{equation}
	D:\mathcal{D}(\mathcal{H})\times \mathcal{D}(\mathcal{H})\rightarrow \mathbb{R}^{+},
\end{equation}
requiring $D(\rho,\sigma) = 0\Leftrightarrow \rho = \sigma$. The divergence of a state $\rho$ and the set of free states $\mathcal{F}$ is defined as
\begin{equation}
D(\rho,\mathcal{F}) = \inf_{\sigma \in \mathcal{F}} D(\rho,\sigma).
\end{equation}
To get a well-defined divergence measure, we require $\mathcal{F}$ to be a convex set, i.e., for any $\rho,\sigma\in\mathcal{F}$, $t\rho+(1-t)\sigma\in\mathcal{F},\forall 0\leq t\leq 1$. Normally, we take $D$ as the K-L divergence or relative entropy $S$: $S(\rho||\sigma) = \tr(\rho\log \sigma)-\tr(\sigma\log \sigma)$. The relative entropy of the resource is $M(\rho) = S(\rho||\mathcal{F})$. We can prove that this measure meets the requirements of Eq.~\eqref{eq:R1'}, Eq.~\eqref{eq:R2} and Eq.~\eqref{eq:convex}.

It can be verified that divergence measure satisfies Eq.~\eqref{eq:R1'}. Due to the contractive property of relative entropy $S$, i.e., for any CPTP channel $\mathcal{E}$,
\begin{equation}
	S(\mathcal{E}(\rho)||\mathcal{E}(\sigma))\leq S(\rho||\sigma),
\end{equation}
the monotone condition Eq.~\eqref{eq:R2} can be fulfilled:
\begin{equation}
	\begin{split}
		M(\Lambda(\rho)) &= \inf_{\sigma\in \mathcal{F}}S(\Lambda(\rho)||\sigma)\\
		 &\leq \inf_{\sigma\in \mathcal{F}}S(\Lambda(\rho)||\Lambda(\sigma))\\
		&\leq \inf_{\sigma\in \mathcal{F}}S(\rho||\sigma)\\ &= M(\rho).
	\end{split}
\end{equation}
Moreover, relative entropy $S$ is jointly convex, then
\begin{equation}\label{eq:jointconvex}
	\begin{split}
		M\left(\sum\limits_n c_n\rho_n\right)&\leq S\left(\sum\limits_n c_n\rho_n || \sum\limits_n c_n\sigma^*_n\right)\\
		&\leq \sum\limits_n c_n S(\rho_n || \sigma^*_n)\\
		&= \sum\limits_n c_n M(\rho_n),
	\end{split}
\end{equation}
where $\sigma^*_n$ is the closest quantum state to $\rho_n$ in $\mathcal{F}$. Equation.~\eqref{eq:jointconvex} leads to the convexity of the measure.

Another widely used resource measure is robustness of the resource, that is how hard to make the state become a free state with mixing another state. For any state $\rho\in\mathcal{D}(\mathcal{H})$, the robustness of the resource is
\begin{equation}\label{eq:robustnessdef}
\begin{split}
R(\rho) &= \min_{\sigma\in \mathcal{D}(\mathcal{H})} s,\\ s.t.\ \frac{\rho+s\sigma}{1+s}&\in \mathcal{F}, s\geq 0.
\end{split}
\end{equation}
Obviously robustness measure satisfies Eq.~\eqref{eq:R1'}.
We can prove it also satisfies monotone condition. For any quantum state $\rho$, we set $r = R(\rho)$. According to the definition of robustness measure, $\exists \sigma\in\mathcal{D}(\mathcal{H})$, $\frac{\rho+r\sigma}{1+r}\in\mathcal{F}$. Then for any free operation $\Lambda$,
\begin{equation}
\Lambda\left(\frac{\rho+r\sigma}{1+r}\right) = \frac{\Lambda(\rho)+r\Lambda(\sigma)}{1+r} \in \mathcal{F}.
\end{equation}
From Eq.~\eqref{eq:robustnessdef} we can see $R(\Lambda(\rho))\leq r = R(\rho)$. That is the condition of Eq.~\eqref{eq:R2}.

Moreover, the robustness measure meets the requirement of Eq.~\eqref{eq:convex} if the set of free states $\mathcal{F}$ is convex. Denote $r_i = R(\rho_i)$ for a set of states $\{\rho_1, \rho_2, \cdots, \rho_n\}$, then $\exists \{\sigma_1, \sigma_2, \cdots, \sigma_n\}$, $\frac{\rho_i+r_i\sigma}{1+r_i}\in\mathcal{F}, \forall i$. For any convex combination of $\{\rho_1, \rho_2, \cdots, \rho_n\}$: $\rho = \sum_i c_i\rho_i$, $\sum_i c_i = 1, c_i\geq 0$, set $r = \sum_i c_i r_i$, $\sigma = \frac{\sum_i c_i r_i \sigma_i}{r}$,
\begin{equation}
\frac{\rho + r\sigma}{1+r} = \frac{\sum_i c_i(\rho_i+r_i\sigma_i)}{1+r} \in \mathcal{F}.
\end{equation}
This accounts for the convexity of the robustness measure.

\subsection{Resource Theory of Coherence}
Here we apply the framework of resource theory to characterize coherence. In the resource theory of coherence~\cite{Baumgratz2014Coherence}, the set of free states and free operations are named incoherent states and incoherent operations. We first set the computational basis of Hilbert space $\mathcal{H}$, denoted as $\{\ket{i},i\in[d]\}$. Then we define the incoherent states to be the states only with diagonal terms in computational basis, i.e.,
\begin{equation}\label{eq:incoherentstate}
\mathcal{I} = \left\{\sum_i c_i\ketbra{i}\big| \sum_i c_i = 1, c_i\geq 0\right\}.
\end{equation}
The set of incoherent operations has different choices. The largest set of incoherent operations (RNG) is called maximally incoherent operations (MIO) in the resource theory of coherence. Any operation in MIO cannot generates coherence from incoherent states. It can be proved that~\cite{LiuZiWen2017ResourceDestroying}
\begin{equation}\label{eq:MIO}
\mathrm{MIO} = \{O\in\mathrm{CPTP}|O\circ \Delta = \Delta\circ O\circ \Delta\},
\end{equation}
where $\Delta$ is dephasing operation on $\mathcal{D}(\mathcal{H})$. A smaller set of incoherent operations is dephasing-covariant incoherent operations (DIO):
\begin{equation}
	\mathrm{DIO} = \{O\in\mathrm{CPTP}|O\circ \Delta = O\circ \Delta\}.
\end{equation}
DIO is the set of operations commuting with the dephasing operation. Dephasing operation can be viewed as the resource destroying map~\cite{LiuZiWen2017ResourceDestroying} in the resource theory of coherence. From it we can define another set of operations named coherence non-activating operations (CNAO):
\begin{equation}\label{eq:CNAO}
\mathrm{CNAO} = \{ O\in\mathrm{CPTP}|\Delta \circ O = \Delta \circ O\circ \Delta \}.
\end{equation}
CNAO is not a subset of MIO so it cannot be specified as the set of free operations.

Provided the incoherent states and incoherent operations, we need to find coherence measure. We can verify that the set of incoherent states is convex. Then applying the conclusion in Subsection~\ref{appendssc:resourceframework}, we define the relative entropy of coherence and robustness of coherence. The relative entropy of coherence $C_{rel}$ is
\begin{equation}
\begin{split}
C_{rel}(\rho) &= S(\rho||\mathcal{I})\\
&= \inf_{\sigma \in \mathcal{I}} S(\rho||\sigma)\\
&= S(\rho||\Delta(\rho))\\
&= S(\Delta(\rho)) - S(\rho),
\end{split}
\end{equation}
where $S(\rho) = -\tr(\rho\log \rho)$ is the von-neumann entropy. The robustness of coherence $R$ is
\begin{equation}
\begin{split}
R(\rho) &= \min_{\sigma\in \mathcal{D}(\mathcal{H})} s,\\ s.t.\ \frac{\rho+s\sigma}{1+s}&\in \mathcal{I}, s\geq 0.
\end{split}
\end{equation}
With these two measures we can study the interconversion of states under different sets of incoherent operations and explore the quantum advantage shown in coherence.

\subsection{Channel Resource Theory}
Previous two subsections discuss the state resource theory. We can also establish a resource theory framework for the channels. Channel is the completely positive and trace preserving map on $\mathcal{D}(\mathcal{H})$. In a channel resource theory, the resource is a functional mapping a channel instead of a state to the non-negative number. We need to specify the set of free channels and the set of free superchannels as the analogy of the set of free states and the set of free operations. The superchannel transforms one channel to another just like the operation acting on the states. We can choose RNG superchannels as the free superchannels when we establish a channel resource theory to avoid providing a specific representation of superchannels.

Similar with the state resource theory, we need to provide the resource measure. One approach to discussing resource measure is transforming the channel to its Choi-state representation, then characterize the resource of Choi-state with the tool of state resource theory. Here we briefly introduce this approach. The Choi-state representation of a channel $\mathcal{N}$ is
\begin{equation}
	\begin{split}
		\Phi_{\mathcal{N}} &= I\otimes \mathcal{N} (\ketbra{\Phi^+})\\
		&= \frac{1}{d}I\otimes \mathcal{N} \sum_{ij}\ketbra{ii}{jj}\\
		&= \frac{1}{d} \sum_{ij}\ketbra{i}{j}\mathcal{N}(\ketbra{i}{j})\\
		&= \frac{1}{d} \sum_{i,k,j,l} \mathcal{N}_{ik,jl}\ketbra{i}{j}\otimes \ketbra{k}{l}\\
		&= \frac{1}{d} \sum_{i,k,j,l} \mathcal{N}_{ik,jl}\ketbra{ik}{jl},
	\end{split}
\end{equation}
where $\ket{\Phi^+}$ is the maximally entangled state $\frac{1}{\sqrt{d}}\sum_{i}\ket{ii}$ on $\mathcal{H}\otimes \mathcal{H}$, $\mathcal{N}_{ik,jl} = \tr(\ketbra{l}{k}\mathcal{N}(\ketbra{i}{j}))$. Then the set of free channels can be transformed into a set of free states. The set of free superchannels will become a set of free operations. As long as the set of free channels is convex, the corresponding set of Choi-states is convex. After that we can define the divergence measure and robustness measure in the channel resource theory.

A concrete example of channel resource theory is entanglement breaking (EB) channel~\cite{Horodecki2003EBChannel}. A channel $\mathcal{N}$ is entanglement breaking if and only if its Choi-state $\Phi_{\mathcal{N}}$ is separable. At the same time, any EB channel has the form:
\begin{equation}
\mathcal{N} = \sum_i \sigma_i \tr(M_i \rho),
\end{equation}
where $\sigma_i$ is a density operator and $\{M_i\}$ form a POVM. That is, any EB channel can be realized by measuring the state with a POVM, and then preparing a state based on the measurement result. The set of EB channels is convex. We can take this set to be free channels and quantify the resource of a channel relative to this set.

\section{Partial Replacement of Quantum Operation}\label{append:PartReplace}
Here, we discuss the scenario of replacing part of a quantum operation with classical processing in detail. For simplicity, we only consider the case of qcCRO. First, we focus on the case where the evolution of a subsystem originating from a large evolution on the overall system can be viewed as a CPTP map.

Consider a system composed of two subsystems, $\mathcal{H}_a$ and $\mathcal{H}_b$, as shown in Figure~\ref{fig:ReplaceDestroy}, where the quantum state of the whole system, $\mathcal{H}_{ab} \equiv \mathcal{H}_a\otimes \mathcal{H}_b$, is a product state. The quantum evolution on the whole system is given by a unitary operation $U$. If we focus on the subsystem $\mathcal{H}_b$ and ignore subsystem $\mathcal{H}_a$, the equivalent action on $\mathcal{H}_b$ is given by
\begin{equation}
O(\rho_b) = \tr_a(U\rho_a\otimes \rho_b U^{\dagger}).
\end{equation}
If we know the initial state of $\rho_a$ and the unitary $U$, we can obtain the form of $O$. Here, we set specific values for $\rho_a$ and $U$ so that $O$ is a qcCRO. Then, $O$ can be moved after the computational basis measurement and replaced by classical processing $O_c$.

\begin{figure}[!htbp]
	\centering \resizebox{15cm}{!}{\includegraphics{./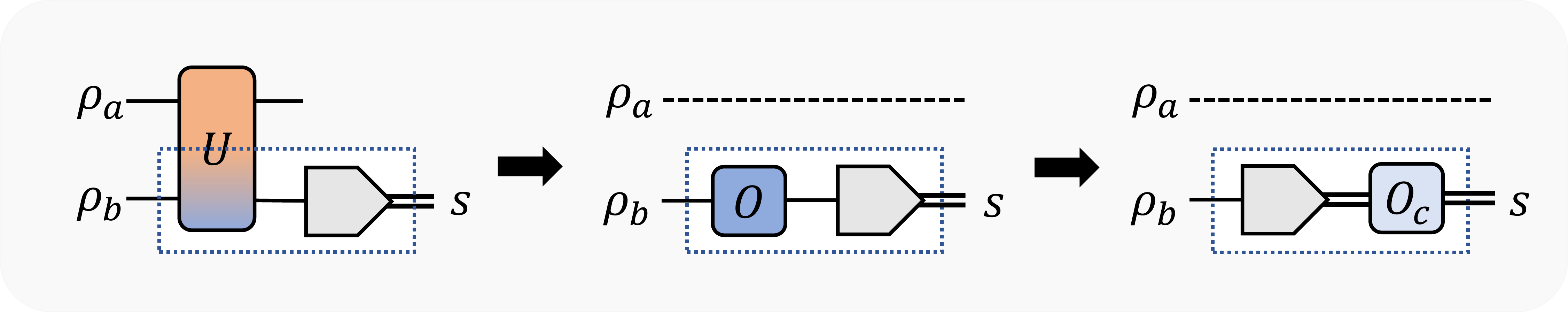}}
	\caption{Classical replacement for part of a quantum operation. The whole system $\mathcal{H}_{ab}$ is composed of two subsystems $\mathcal{H}_a$ and $\mathcal{H}_b$. The unitary $U$ is an evolution on the whole system and it becomes a qcCRO, $O$, when focusing on the evolution of the state on subsystem $\mathcal{H}_b$. The qcCRO $O$ can be moved after measurement and be replaced by classical processing $O_c$. The dashed lines in the second and third figures mean that we do not consider the evolution of $\rho_a$. In the first figure, there exists a correlation between two subsystems in the final state, while in the third figure, the correlation is destroyed.}
	\label{fig:ReplaceDestroy}
\end{figure}

Before classical replacement, the final state on system $\mathcal{H}_{ab}$ is
\begin{equation}
\rho_{ab}' = \sum_i \tr_b[U(\rho_a\otimes \rho_b)U^{\dagger} \ketbra{i}] \otimes \ketbra{i},
\end{equation}
which exhibits correlation between subsystem $\mathcal{H}_a$ and $\mathcal{H}_b$. After classical replacement, the interaction between $\mathcal{H}_a$ and $\mathcal{H}_b$ is eliminated, so there is no correlation between two subsystems in the end. In this sense, we conclude that replacing a CRO only guarantees the subsystem it acts on is unchanged while the other systems and the correlation between the subsystem and the rest might change.

In Figure~\ref{fig:ReplaceAPart}, we show an example replacing a CRO while maintaining the state of the whole system after classical replacement. The two subsystems $\mathcal{H}_a$ and $\mathcal{H}_b$ are both qubits with computational basis as $Z$ basis. The initial state on the whole system is $\rho_a\otimes \rho_b$ and the interaction is a control unitary from $\mathcal{H}_a$ to $\mathcal{H}_b$. The action on $\mathcal{H}_a$ after ignoring $\mathcal{H}_b$ is a dephasing channel, for any state $\rho_a\in \mathcal{D}(\mathcal{H}_a)$,
\begin{equation}
O(\rho_a) = (1-p)\rho_a + pZ\rho_a Z,
\end{equation}
where $p$ is dephasing rate depending on the interaction and initial state on $\mathcal{H}_b$. Here, $O$ is a qcCRO and can be moved after computational basis measurement. The corresponding classical processing $O_c$ is identity, mapping a classical bit to itself. To ensure the state on $\mathcal{H}_{ab}$ does not change before and after classical replacement, we add a classical-quantum control from subsystem $a$ to subsystem $b$. In this way, not only the subsystem the CRO acts on is unchanged, the whole system is unchanged as well.

\begin{figure}[!htbp]
	\centering \resizebox{9cm}{!}{\includegraphics{./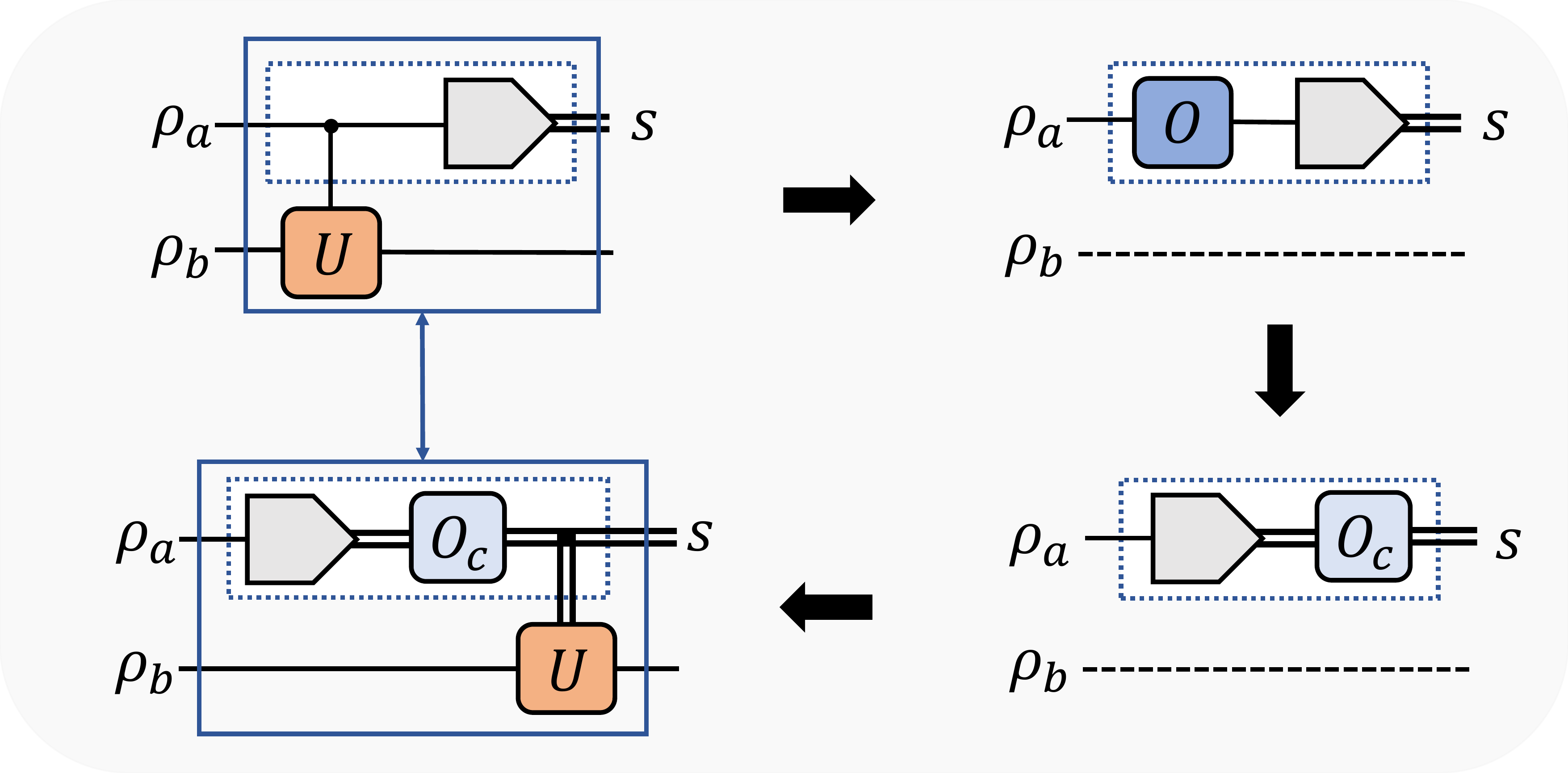}}
	\caption{Given quantum system $\mathcal{H}_a\otimes \mathcal{H}_b$ with an initial state, $\rho_a\otimes \rho_b$, control-$U$ from subsystem $a$ to subsystem $b$ would become a dephasing operation, $O$, when focusing on subsystem $a$. The dephasing operation, $O$, is a qcCRO whose form depends on $\rho_b$, while its corresponding classical processing $O_c$ is irrelevant to $\rho_b$, means that we can replace $O$ even if we do not know its concrete form. After adding classical-quantum control-$U$ to the circuit after replacement, the final state of the whole system is unchanged before and after replacement.}
	\label{fig:ReplaceAPart}
\end{figure}

It is worth noting that Figure~\ref{fig:ReplaceAPart} shows an example where we can perform classical replacement without knowing the concrete form of a CRO. In this example, the dephasing rate $p$ is unknown if the initial state on subsystem $\mathcal{H}_b$ is uncharacterized. Nevertheless, for any value of $p$, $O$ is related to a fixed classical processing so that the replacement can be implemented.

In the discussion above, we restrict the initial state to be a product state so that the evolution on the whole system reduces to a CPTP map when we look at the subsystem. However, if the initial state is entangled, the evolution on a subsystem may not be able to be expressed as a CPTP map, as shown in Figure~\ref{fig:NonMarkovianReplace1}. The classical replacement in this case is different from replacing a \textit{quantum operation}, which is another interesting problem.

\begin{figure}[!htbp]
	\centering \resizebox{10cm}{!}{\includegraphics{./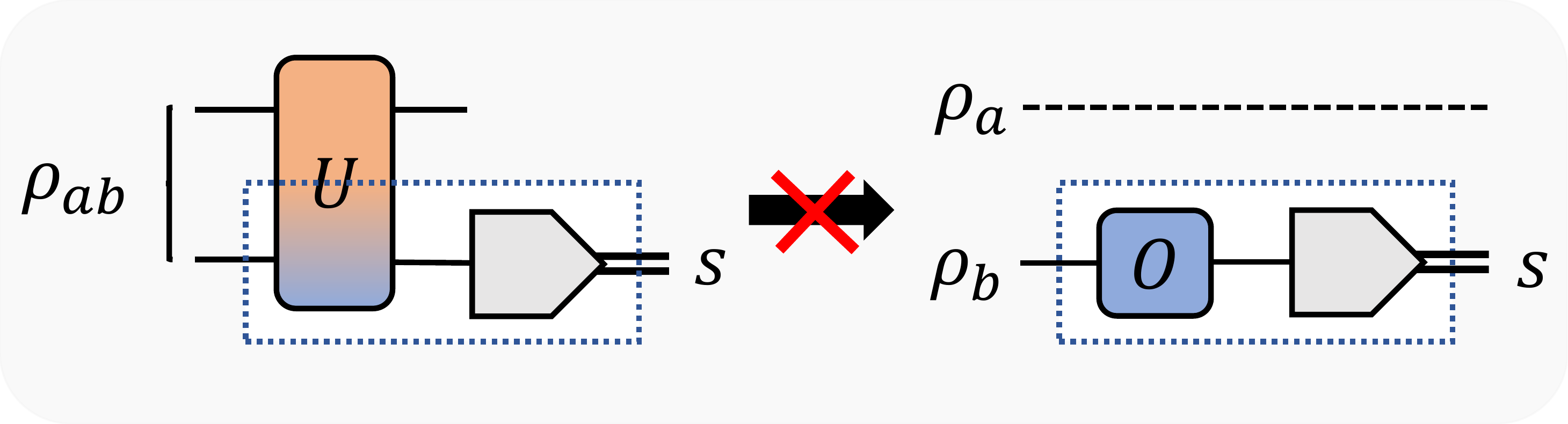}}
	\caption{Given quantum system $\mathcal{H}_{a}\otimes \mathcal{H}_b$ with an entangled initial state, $\rho_{ab}$, the interaction between two subsystems is given by $U$. Due to the quantum correlation of the initial state, $U$ cannot reduce to a CPTP map when focusing on subsystem $b$.}
	\label{fig:NonMarkovianReplace1}
\end{figure}

The situation where a quantum evolution cannot reduce to a CPTP map on a subsystem comes from the violation of the Markovian assumption. Another example of non-Markovian evolution on the subsystem is shown in Figure~\ref{fig:NonMarkovianReplace2}. With a product initial state $\rho_a\otimes \rho_b$ on system $\mathcal{H}_a\otimes \mathcal{H}_b$, two unitary evolutions on the whole system, $U_1$ and $U_2$, can reduce to two qcCROs on subsystem $b$, $O_1$ and $O_2$, respectively. However, the concatenation of two unitary evolutions, $U_2\circ U_1$, cannot reduce to $O_2\circ O_1$. An interesting question for future research is that when $U_2\circ U_1$ can reduce to a qcCRO in such a case. A positive example is when $U_1$ and $U_2$ are both CNOT gates. Here, $U_2\circ U_1$ cannot reduce to $O_2\circ O_1$ but can reduce to an identity operation, which is a qcCRO.

\begin{figure}[!htbp]
	\centering \resizebox{10cm}{!}{\includegraphics{./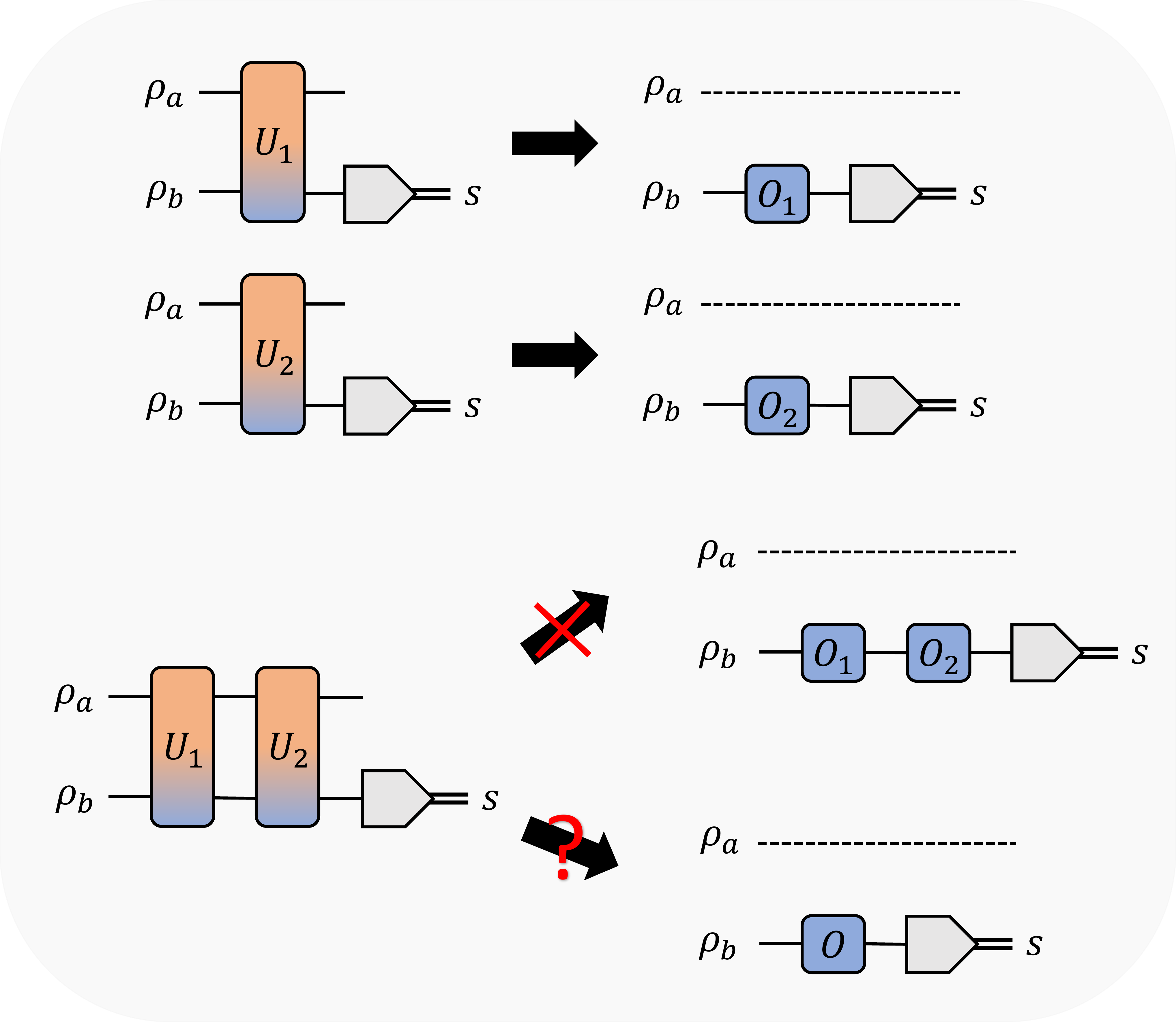}}
	\caption{Given quantum system $\mathcal{H}_a\otimes \mathcal{H}_b$ with a product initial state, $\rho_a\otimes \rho_b$, the unitary evolutions, $U_1$ and $U_2$, can both reduce to qcCROs $O_1$ and $O_2$ on subsystem $b$ respectively. The concatenation of two unitary evolutions, $U_2U_1$, cannot reduce to $O_2\circ O_1$, as in general the evolution on subsystem $b$ is non-Markovian. Interestingly, in some cases, $U_2U_1$ can reduce to a qcCRO, $O$, while the general case remains an open problem.}
	\label{fig:NonMarkovianReplace2}
\end{figure}

\section{Proof of Theorem~\ref{thm:CROequiv} and Theorem~\ref{thm:CROextequiv}}\label{thmproof:CROequiv}
Here, we provide a proof of Theorem~\ref{thm:CROequiv} and Theorem~\ref{thm:CROextequiv}. The former is a special case of the latter so we only prove Theorem~\ref{thm:CROextequiv}. We prove two lemmas to verify the equivalence of the second and the third criteria, and the equivalence of the first and the second, respectively, for each kind of CRO. The notations are the same in the main-text.
\subsection{Proof of the Equivalence of the Second and the Third Criteria}
We start with proving Lemma~\ref{lemma:qcCRO} as shown below.
\begin{lemma}\label{lemma:qcCRO}
Given a quantum operation, $O$, and a quantum channel, $\mathcal{T}_E(\rho) = \sum_n \frac{E_n}{\tr(E_n)}\tr(\rho E_n)$, defined with a set of projectors $\{E_n\}$, then
\begin{itemize}
\item
$\exists O^{\prime}\in\mathrm{CPTP}, O\circ \mathcal{T}_E = \mathcal{T}_E\circ O^{\prime}\circ \mathcal{T}_E\Rightarrow O\circ \mathcal{T}_E = \mathcal{T}_E\circ O\circ \mathcal{T}_E$;

\item
$\exists O^{\prime}\in\mathrm{CPTP}, O = \mathcal{T}_E\circ O^{\prime}\circ \mathcal{T}_E\Rightarrow O = \mathcal{T}_E\circ O\circ \mathcal{T}_E$;

\item
$\exists O^{\prime}\in\mathrm{CPTP}, \mathcal{T}_E\circ O = \mathcal{T}_E\circ O^{\prime}\circ \mathcal{T}_E\Rightarrow \mathcal{T}_E\circ O = \mathcal{T}_E\circ O\circ \mathcal{T}_E$.
\end{itemize}
\end{lemma}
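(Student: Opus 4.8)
The plan is to reduce all three implications to a single algebraic fact: the channel $\mathcal{T}_E$ is idempotent, i.e. $\mathcal{T}_E\circ\mathcal{T}_E=\mathcal{T}_E$. First I would verify this directly from the definition $\mathcal{T}_E(\rho)=\sum_n \tr(\rho E_n)\frac{E_n}{\tr(E_n)}$ using the orthogonality relation $E_nE_m=\delta_{nm}E_n$. Applying $\mathcal{T}_E$ twice, the inner trace collapses because $\tr(E_nE_m)=\delta_{nm}\tr(E_n)$, so $\tr\!\big(\mathcal{T}_E(\rho)E_m\big)=\tr(\rho E_m)$ and the double application agrees with the single one. (When all the $E_n$ are rank-one this specializes to the familiar idempotence of the dephasing map $\Delta$, which is the version needed for Theorem~\ref{thm:CROequiv}.)

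Granted idempotence, each bullet follows by composing the hypothesized identity with one extra copy of $\mathcal{T}_E$ on the appropriate side and absorbing the resulting $\mathcal{T}_E\circ\mathcal{T}_E$ back into $\mathcal{T}_E$. Concretely: for the first bullet, from $O\circ\mathcal{T}_E=\mathcal{T}_E\circ O'\circ\mathcal{T}_E$ I left-compose with $\mathcal{T}_E$ to get $\mathcal{T}_E\circ O\circ\mathcal{T}_E=\mathcal{T}_E\circ\mathcal{T}_E\circ O'\circ\mathcal{T}_E=\mathcal{T}_E\circ O'\circ\mathcal{T}_E=O\circ\mathcal{T}_E$. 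For the second, I sandwich $O=\mathcal{T}_E\circ O'\circ\mathcal{T}_E$ between copies of $\mathcal{T}_E$ on both sides, and the two absorptions give $\mathcal{T}_E\circ O\circ\mathcal{T}_E=\mathcal{T}_E\circ O'\circ\mathcal{T}_E=O$. For the third, from $\mathcal{T}_E\circ O=\mathcal{T}_E\circ O'\circ\mathcal{T}_E$ I right-compose with $\mathcal{T}_E$ and obtain $\mathcal{T}_E\circ O\circ\mathcal{T}_E=\mathcal{T}_E\circ O'\circ\mathcal{T}_E=\mathcal{T}_E\circ O$. In every case the extra factor is absorbed and the claimed identity drops out.

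I do not expect a genuine obstacle here; the proof is a short calculation once idempotence is in hand. The only points requiring a little care are checking that $\mathcal{T}_E$ is a legitimate CPTP map (so that all compositions stay within CPTP and the manipulations are meaningful) and keeping straight which side to append the extra $\mathcal{T}_E$ in each of the three cases. I would also remark that the converse implications needed to complete the full ``second criterion $\Leftrightarrow$ third criterion'' equivalence in Theorems~\ref{thm:CROequiv} and~\ref{thm:CROextequiv} are immediate by taking $O'=O$, so Lemma~\ref{lemma:qcCRO} isolates exactly the non-trivial content of that equivalence, and the remaining work for the theorems lies in the separate lemma connecting the first and second criteria (the ``operational'' direction involving the stochastic matrix $T_{ji}=\tr(\ketbra{j}O(\ketbra{i}))$).
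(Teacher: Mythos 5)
Your proposal is correct and matches the paper's own proof essentially verbatim: both establish the idempotence $\mathcal{T}_E\circ\mathcal{T}_E=\mathcal{T}_E$ from the orthogonality $\tr(E_nE_m)=\delta_{nm}\tr(E_n)$ and then absorb an extra copy of $\mathcal{T}_E$ composed on the appropriate side in each of the three cases. Your closing remark that the converse directions follow by taking $O'=O$ is also exactly how the paper completes the equivalence of the second and third criteria.
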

\begin{proof}
If operation $O$ satisfies $\exists O^{\prime}\in\mathrm{CPTP}, O\circ \mathcal{T}_E = \mathcal{T}_E\circ O^{\prime}\circ \mathcal{T}_E$, then,
\begin{equation}
\begin{split}
\mathcal{T}_E\circ O \circ \mathcal{T}_E &= \mathcal{T}_E\circ \mathcal{T}_E \circ O^{\prime} \circ \mathcal{T}_E\\
&= \mathcal{T}_E\circ O^{\prime}\circ \mathcal{T}_E\\
&= O\circ \mathcal{T}_E.
\end{split}
\end{equation}
Here, the second equality comes from the idempotence of $\mathcal{T}_E$,
\begin{equation}
\begin{split}
\mathcal{T}_E\circ \mathcal{T}_E (\rho) &= \sum_m\sum_n \frac{E_m}{\tr(E_m)} \frac{\tr(E_mE_n)}{\tr(E_n)} \tr(\rho E_n)\\
&= \sum_m\sum_n \frac{E_m}{\tr(E_m)} \frac{\delta_{nm}\tr(E_n)}{\tr(E_n)} \tr(\rho E_n)\\
&= \sum_n \frac{E_n}{\tr(E_n)} \tr(\rho E_n)\\
&= \mathcal{T}_E(\rho).
\end{split}
\end{equation}
Similarly, if operation $O$ satisfies $\exists O^{\prime}\in\mathrm{CPTP}, O = \mathcal{T}_E\circ O^{\prime}\circ \mathcal{T}_E$, then,
\begin{equation}
	\begin{split}
		\mathcal{T}_E\circ O \circ \mathcal{T}_E &= \mathcal{T}_E\circ \mathcal{T}_E\circ O^{\prime}\circ \mathcal{T}_E \circ \mathcal{T}_E\\
		&= \mathcal{T}_E\circ O^{\prime}\circ \mathcal{T}_E\\
		&= O.
	\end{split}
\end{equation}
If operation $O$ satisfies $\exists O^{\prime}\in\mathrm{CPTP}, O\circ \mathcal{T}_E = \mathcal{T}_E\circ O^{\prime}\circ \mathcal{T}_E$, then,
\begin{equation}
	\begin{split}
		\mathcal{T}_E\circ O \circ \mathcal{T}_E &= \mathcal{T}_E\circ O^{\prime}\circ \mathcal{T}_E \circ \mathcal{T}_E\\
		&= \mathcal{T}_E\circ O^{\prime}\circ \mathcal{T}_E\\
		&= \mathcal{T}_E\circ O.
	\end{split}
\end{equation}
\end{proof}
From Lemma~\ref{lemma:qcCRO} we obtain that the third criterion implies the second for $\mathrm{cqCRO}_{\{E_n\}}$, $\mathrm{qqCRO}_{\{E_n\}}$, and $\mathrm{qcCRO}_{\{E_n\}}$ respectively. The opposite is also true as the second criterion means that there exists $O' = O$, $O\circ \mathcal{T}_E= \mathcal{T}_E\circ O^{\prime}\circ \mathcal{T}_E$, $O = \mathcal{T}_E\circ O^{\prime}\circ \mathcal{T}_E$, $\mathcal{T}_E\circ O = \mathcal{T}_E\circ O^{\prime}\circ \mathcal{T}_E$, respectively. As a consequence, the two criteria are equivalent for $\mathrm{cqCRO}_{\{E_n\}}$, $\mathrm{qqCRO}_{\{E_n\}}$, and $\mathrm{qcCRO}_{\{E_n\}}$ respectively.

\subsection{Proof of the Equivalence of the First and the Second Criteria}
For the next proof, we first prove the equivalence of the first and second criteria for $\mathrm{qcCRO}_{\{E_n\}}$, and then discuss the cases of other two CROs. We begin with defining an extension of qcCRO and $\mathrm{qcCRO}_{\{E_n\}}$ and propose Lemma~\ref{lemma:qcCROext2}.

Given any input state $\rho\in \mathcal{D}(H)$, any quantum channel $O$, we perform the PVM, $M_1 = \{E_n\}$, on $O(\rho)$ to get the measurement result $s$. We call $O$ a $\mathrm{qcCRO}_{\{E_n\},\{F_m\}}$ if we can get the same result for any same input by measuring $\rho$ with PVM, $M_2 = \{F_m\}$ followed with a classical processing $O_c$ as shown in Figure~\ref{fig:CROproof}. The two measurements $M_1$ and $M_2$ are predetermined and we set $\mathcal{T}_E(\rho) = \sum_n \frac{E_n}{\tr(E_n)}\tr(\rho E_n)$ and $\mathcal{T}_F(\rho) = \sum_m \frac{F_m}{\tr(F_m)}\tr(\rho F_m)$.

\begin{figure}[!htbp]
	\centering \resizebox{6cm}{!}{\includegraphics{./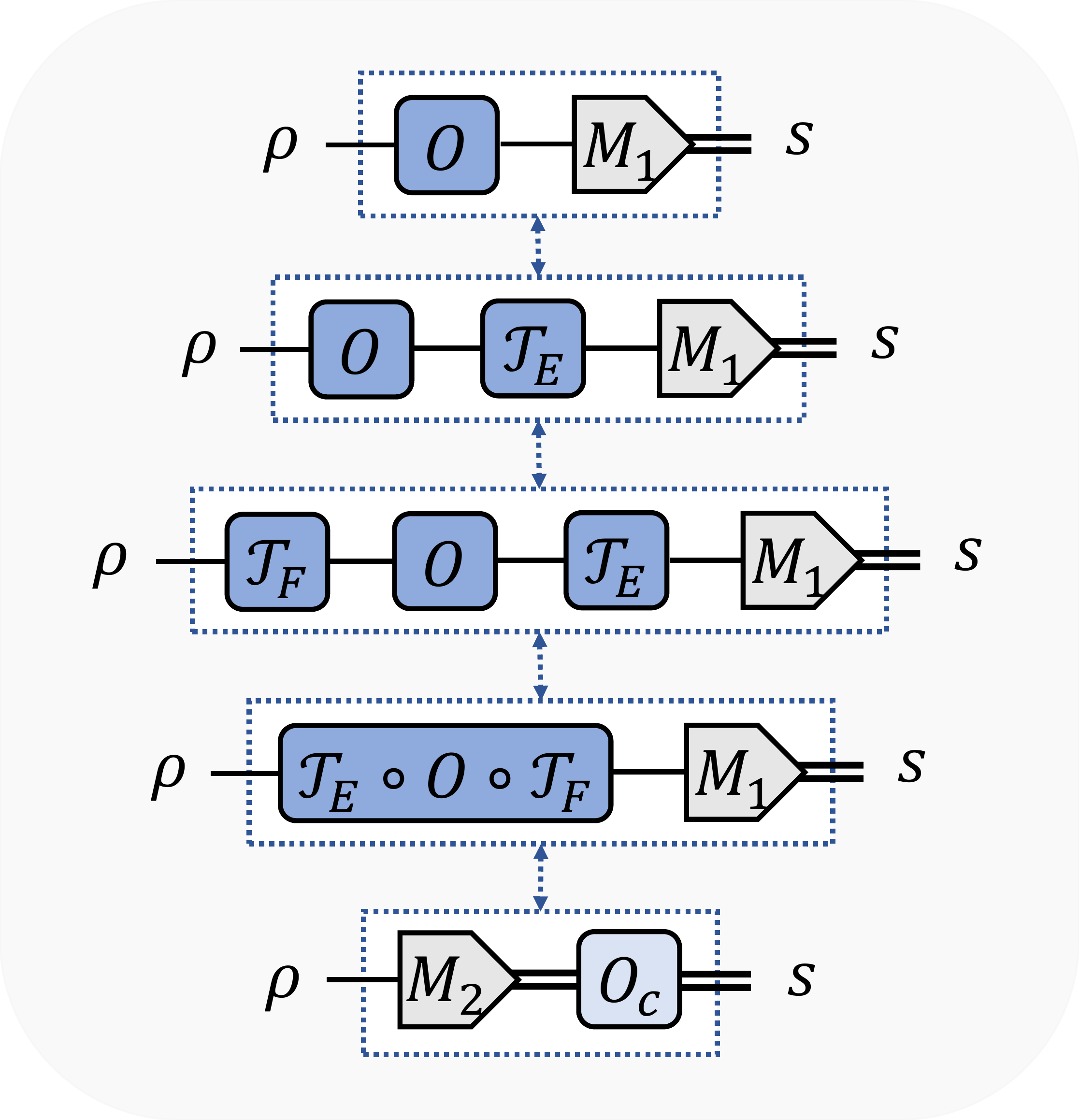}}
	\caption{Given any input state $\rho\in \mathcal{D}(H)$, for any operation, $O$, in Eq.~\eqref{eq:qcCROext2}, the measurement result of PVM, $M_1$, on $O(\rho)$ is equivalent to the measurement result of PVM, $M_2$, on $\rho$ followed by classical processing $O_c$.}
	\label{fig:CROproof}
\end{figure}

In general, the two measurements $M_1,M_2$ are different and $\mathcal{T}_F$ does not commute with $\mathcal{T}_E$. The case of $\mathrm{qcCRO}_{\{E_n\},\{F_m\}}$ reduces to $\mathrm{qcCRO}_{\{E_n\}}$ when $M_1=M_2$, and further reduces to qcCRO if $M_1$ and $M_2$ are both computational basis measurement. Similar with qcCRO and $\mathrm{qcCRO}_{\{E_n\}}$, we provide a necessary and sufficient criterion for $\mathrm{qcCRO}_{\{E_n\},\{F_m\}}$ as shown in Lemma~\ref{lemma:qcCROext2}. Then we can express the set of $\mathrm{qcCRO}_{\{E_n\},\{F_m\}}$ as
\begin{equation}\label{eq:qcCROext2}
	\mathrm{qcCRO}_{\{E_n\},\{F_m\}} = \{O\in\mathrm{CPTP}|\mathcal{T}_E\circ O = \mathcal{T}_E\circ O\circ \mathcal{T}_F\}.
\end{equation}
\begin{lemma}\label{lemma:qcCROext2}
For any $\mathrm{CPTP}$ map $O$, $O\in\mathrm{qcCRO}_{\{E_n\},\{F_m\}}\Leftrightarrow \mathcal{T}_E\circ O = \mathcal{T}_E\circ O\circ \mathcal{T}_F$.
\end{lemma}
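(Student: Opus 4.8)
The plan is to prove Lemma~\ref{lemma:qcCROext2} by translating the operational statement ``$O\in\mathrm{qcCRO}_{\{E_n\},\{F_m\}}$'' — i.e., there is a classical channel $O_c$ making the $M_1$-statistics of $O(\rho)$ equal to the $M_2$-statistics of $\rho$ post-processed by $O_c$ — into the operator identity $\mathcal{T}_E\circ O = \mathcal{T}_E\circ O\circ\mathcal{T}_F$. The bridge between the two is the elementary fact that for any operator $\sigma$ and any block projector $E_n$ of the PVM, $\tr(\mathcal{T}_E(\sigma)E_n) = \tr(\sigma E_n)$, which follows at once from $E_nE_k=\delta_{nk}E_n$ and $\sum_k E_k=\mathbb{I}$ (this is exactly the ``$\mathcal{T}_E$ does not disturb PVM statistics'' remark in the main text). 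Consequently $\tr(O(\rho)E_n)=\tr\big((\mathcal{T}_E\circ O)(\rho)E_n\big)$, so the $M_1$-statistics of $O(\rho)$ depend on $O$ only through $\mathcal{T}_E\circ O$, while the $M_2$-statistics of $\rho$ depend only through the numbers $\tr(\rho F_m)$, and $\tr(\mathcal{T}_F(\rho)F_m)=\tr(\rho F_m)$.

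For the direction $\Leftarrow$, assume $\mathcal{T}_E\circ O = \mathcal{T}_E\circ O\circ\mathcal{T}_F$; I would exhibit the replacing classical processing explicitly by setting
\[
(O_c)_{nm} = \frac{1}{\tr(F_m)}\,\tr\!\big(O(F_m)\,E_n\big).
\]
Non-negativity is immediate from complete positivity of $O$ and $E_n,F_m\geq 0$, and column-stochasticity follows from $\sum_n E_n=\mathbb{I}$ together with trace preservation of $O$, giving $\sum_n (O_c)_{nm}=\tr(O(F_m))/\tr(F_m)=1$. Then, expanding $\mathcal{T}_F(\rho)=\sum_m \tr(\rho F_m)\,F_m/\tr(F_m)$ and using the hypothesis,
\[
\tr(O(\rho)E_n) = \tr\big((\mathcal{T}_E\circ O\circ\mathcal{T}_F)(\rho)E_n\big) = \tr\big(O(\mathcal{T}_F(\rho))E_n\big) = \sum_m (O_c)_{nm}\,\tr(\rho F_m),
\]
which is precisely the claim that measuring $O(\rho)$ with $M_1$ reproduces measuring $\rho$ with $M_2$ and then applying $O_c$, so $O\in\mathrm{qcCRO}_{\{E_n\},\{F_m\}}$.

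For the direction $\Rightarrow$, assume $O$ is a $\mathrm{qcCRO}_{\{E_n\},\{F_m\}}$, so there is a stochastic matrix $T$ with $\tr(O(\rho)E_n)=\sum_m T_{nm}\tr(\rho F_m)$ for all $\rho\in\mathcal{D}(\mathcal{H})$. Applying this identity with $\rho$ replaced by the density operator $\mathcal{T}_F(\rho)$ and invoking $\tr(\mathcal{T}_F(\rho)F_m)=\tr(\rho F_m)$ gives $\tr(O(\mathcal{T}_F(\rho))E_n)=\tr(O(\rho)E_n)$ for all $n$ and all $\rho$; composing with $\mathcal{T}_E$ (multiplying by $E_n/\tr(E_n)$ and summing over $n$) yields $(\mathcal{T}_E\circ O\circ\mathcal{T}_F)(\rho)=(\mathcal{T}_E\circ O)(\rho)$ on all density operators, hence as linear maps since $\mathcal{D}(\mathcal{H})$ spans the operator space.

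I do not expect a serious obstacle; the work is bookkeeping, and the only points needing care are: (i) proving $\tr(\mathcal{T}_E(\sigma)E_n)=\tr(\sigma E_n)$ cleanly — the single place the projector structure $E_nE_k=\delta_{nk}E_n$ is used; (ii) checking that the constructed $O_c$ is genuinely a \emph{stochastic} matrix, which is where trace preservation and $\sum_n E_n=\mathbb{I}$ enter; and (iii) upgrading ``equal statistics for every density operator'' to ``equal CPTP maps'' by linearity. It is worth noting that the possible non-commutativity of $\mathcal{T}_E$ and $\mathcal{T}_F$ never causes trouble, because $\mathcal{T}_E$ always sits on the left and is used only through its idempotence (Lemma~\ref{lemma:qcCRO}) and the identity above. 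Finally, specializing to $M_1=M_2$ recovers the first$\Leftrightarrow$second criterion for $\mathrm{qcCRO}_{\{E_n\}}$ in Theorem~\ref{thm:CROextequiv}, and the same computation with $\mathcal{T}_E$ inserted or omitted on the output side handles the cqCRO and qqCRO cases.
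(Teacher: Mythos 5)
Your proposal is correct and follows essentially the same route as the paper: the $\Leftarrow$ direction uses the identical stochastic matrix $T_{nm}=\tr\bigl(E_n\,O(F_m/\tr(F_m))\bigr)$ and the same expansion of $\mathcal{T}_E\circ O\circ\mathcal{T}_F$, while your $\Rightarrow$ direction is just the direct (contrapositive-free) form of the paper's argument, both resting on the observation that $\rho$ and $\mathcal{T}_F(\rho)$ yield identical $M_2$ statistics so any classical replacement must give them the same output. All the bookkeeping you flag (positivity and stochasticity of $O_c$, $\tr(\mathcal{T}_E(\sigma)E_n)=\tr(\sigma E_n)$, and passing from density operators to linear maps by spanning) checks out.
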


\begin{proof}
We first find the corresponding classical processing for each operation satisfying Eq.~\eqref{eq:qcCROext2}, and then prove that any operation outside Eq.~\eqref{eq:qcCROext2} cannot.

For any operation $O$ in Eq.~\eqref{eq:qcCROext2}, the measurement result of $M_1$ on $O(\rho)$ is equivalent to the measurement result of $\mathcal{T}_E \circ  O\circ \mathcal{T}_F (\rho)$ as shown in Figure~\ref{fig:CROproof}. Here, we utilise the measurement result of $M_1$ does not change if inserting $\mathcal{T}_E$ before measurement and $\mathcal{T}_E \circ O = \mathcal{T}_E \circ O \circ \mathcal{T}_F$. As
\begin{equation}
	\mathcal{T}_E \circ  O\circ \mathcal{T}_F (\rho) = \sum_{n,m}  \tr(F_m\rho) \tr(E_nO(\frac{F_m}{\tr(F_m)})) \frac{E_n}{\tr(E_n)},
\end{equation}
the measurement result $s$ of $M_1$ on $O(\rho)$ takes the value $n$ with probability $\sum_m \tr(F_m\rho) \tr(E_n O(\frac{F_m}{\tr(F_m)}))$. It is equivalent to measure $\rho$ with $M_2$ to get $m$ with probability $\tr(F_m \rho)$, and then transfer $m$ to $n$ with probability $\tr(E_n O(\frac{F_m}{\tr(F_m)}))$. We can view $\tr(F_m \rho)$ as an initial probability distribution and $\tr(E_n O(\frac{F_m}{\tr(F_m)}))$ as a stochastic matrix of the classical processing. Note that the term $\tr(E_n O(\frac{F_m}{\tr(F_m)}))$ is irrelevant to the initial state $\rho$ so the classical replacement can be realized. Then we find the corresponding classical processing for any operation satisfying Eq.~\eqref{eq:qcCROext2} and complete the proof of the first step.

For the second step, we show that for any operation $O$ outside Eq.~\eqref{eq:qcCROext2}, there exists two different states, $\rho_1$ and $\rho_2$ satisfying $\mathcal{T}_F (\rho_1) = \mathcal{T}_F (\rho_2)$, while $M_1$ on $O(\rho_1)$ and $O(\rho_2)$ can output different probability distributions. Note that $\mathcal{T}_F (\rho_1) = \mathcal{T}_F (\rho_2)$ implies PVM $M_2$ cannot distinguish the classical measurement results of the two. Then the classical processing only outputs the same results for these two inputs. That means for either $\rho_1$ or $\rho_2$ the quantum operation and classical processing cannot output the same result, then we prove $O$ cannot be replaced by classical processing.

If an operation $O$ does not satisfy Eq.~\eqref{eq:qcCROext2}, then $\mathcal{T}_E \circ O \neq \mathcal{T}_E \circ O \circ \mathcal{T}_F$. That means there exists a state $\sigma$ satisfying
\begin{equation}
\mathcal{T}_E \circ O(\sigma) \neq \mathcal{T}_E \circ O ( \mathcal{T}_F(\sigma) ).
\end{equation}
Now we take $\rho_1 = \sigma$, $\rho_2 = \mathcal{T}_F(\sigma)$. Obviously, $\mathcal{T}_F (\rho_1) = \mathcal{T}_F (\rho_2)$. As $\mathcal{T}_E (O(\rho_1)) \neq \mathcal{T}_E (O(\rho_2))$, the measurement results of $M_1$ on $O(\rho_1)$ and $O(\rho_2)$ are different. Then we complete the second step of the proof.
\end{proof}
The equivalence of the first and the second criteria for qcCRO and $\mathrm{qcCRO}_{\{E_n\}}$ is proved as a special case of Lemma~\ref{lemma:qcCROext2}.

For $\mathrm{cqCRO}_{\{E_n\}}$, we turn back to the scenario where the state preparation in the classical replacement is the same as that right before the replaced quantum operation. They are both state preparation associated with projectors $\{E_n\}$. For any operation $O$ satisfying $O\circ \mathcal{T}_E = \mathcal{T}_E \circ O \circ \mathcal{T}_E$, and classical input $n$, the output state is $\sum_{m} \frac{E_m}{\tr(E_m)} \tr(E_mO(\frac{E_n}{\tr(E_n)}))$. That is first transforming $n$ into $m$ with probability $\tr(E_mO(\frac{E_n}{\tr(E_n)}))$ classically, and then preparing the state. Reversely, if $O$ does not satisfy $O\circ \mathcal{T}_E = \mathcal{T}_E \circ O \circ \mathcal{T}_E$, there exists a quantum state $\sigma$ satisfying
\begin{equation}\label{eq:cqCROcounter}
O\circ \mathcal{T}_E(\sigma) \neq \mathcal{T}_E \circ O \circ \mathcal{T}_E(\sigma).
\end{equation}
Set $\sigma_E = \mathcal{T}_E(\sigma) = \sum_n \frac{E_n}{\tr(E_n)}\tr(\sigma E_n)$, then we can see that $\sigma_E$ can be obtained from state preparation by inputting $n$ with probability $\tr(\sigma E_n)$. From Eq.~\eqref{eq:cqCROcounter}, we can see $O(\sigma_E) \neq \mathcal{T}_E(O(\sigma_E))$. However, for a state, $\rho$, obtained from state preparation, $\rho$ must satisfy $\rho = \mathcal{T}_E(\rho)$. Hence, $O(\sigma_E)$ cannot be obtained from state preparation, which means $O$ cannot be replaced by a classical processing and state preparation. Thus, we prove the equivalence of the first and second criteria for $\mathrm{qcCRO}_{\{E_n\}}$.

Similar for $\mathrm{qqCRO}_{\{E_n\}}$, given any quantum operation $O$ satisfying $O = \mathcal{T}_E \circ O \circ \mathcal{T}_E$, we can replace it with measurement, classical processing and state preparation. The stochastic matrix of the corresponding classical processing is $T_{nm} = \tr(E_nO(\frac{E_m}{\tr(E_m)}))$. If an operation $O$ does not satisfy $O = \mathcal{T}_E \circ O \circ \mathcal{T}_E$, then $\exists \sigma$, $O(\sigma)\neq \mathcal{T}_E \circ O \circ \mathcal{T}_E(\sigma)$. If $O(\sigma) \neq \mathcal{T}_E \circ O(\sigma)$, then $O(\sigma)$ cannot be obtained with state preparation, which means $O$ cannot be replaced. If $O(\sigma) = \mathcal{T}_E \circ O(\sigma)$, then $\mathcal{T}_E \circ O(\sigma) \neq \mathcal{T}_E \circ O \circ \mathcal{T}_E(\sigma)$, implying $O$ is not a $\mathrm{qcCRO}_{\{E_n\}}$. As any $\mathrm{qqCRO}_{\{E_n\}}$ is a $\mathrm{qcCRO}_{\{E_n\}}$, in this case $O$ cannot be replaced neither. Here, we  prove the equivalence of the first and second criteria for $\mathrm{qqCRO}_{\{E_n\}}$ and complete the whole proof.

\section{Application of Classical Replacement in VQA}\label{appendsc:vqa}
Here, we present the details of applying classical replacement to the variational quantum algorithm. First we introduce the $n$-qubit Pauli group and Clifford group for further elaboration.

The four Pauli matrices on a qubit are defined as
\begin{equation}
\mathbb{I}_2 = \begin{pmatrix}
1 & 0 \\
0 & 1
\end{pmatrix},
\quad
X = \begin{pmatrix}
	0 & 1 \\
	1 & 0
\end{pmatrix},
\quad
Y = \begin{pmatrix}
	0 & -i \\
	i & 0
\end{pmatrix},
\quad
Z = \begin{pmatrix}
	1 & 0 \\
	0 & -1
\end{pmatrix}.
\end{equation}
The $n$-qubit Pauli group $\textsf{P}_n$ contains all the $n$-qubit Pauli operators in the form of tensor product of $n$ Pauli matrices,
\begin{equation}
\textsf{P}_n = \bigotimes_{i \in [n]} \{\mathbb{I}_2,X,Y,Z\}.
\end{equation}
The $n$-qubit Clifford group $\textsf{C}_n$ is defined as the normalizer group of $\textsf{P}_n$, that is, $\forall C\in \textsf{C}_n$, $\forall i \in [4^n]$, $\exists j\in [4^n]$, s.t., $P_j = C^{-1} P_i C$. In another word,
\begin{equation}
C^{-1}\textsf{P}_n C = \{C^{-1}P_iC \in \textsf{P}_n, P_i \in \textsf{P}_n\} = \{P_j \in \textsf{P}_n\} = \textsf{P}_n.
\end{equation}
As a consequence, given $\mathcal{T}_i = \frac{1}{2^{n-1}}(\tr(P_i^+ \cdot)P_i^+ + \tr(P_i^- \cdot)P_i^-)$ associated with $P_i$-measurement, it will become $\mathcal{T}_j$ associated with $P_j$-measurement under the action of $C$ as shown below.
\begin{equation}
\begin{split}
C^{-1} \circ \mathcal{T}_i \circ C (\rho) &= \frac{1}{2^{n-1}}(\tr(P_i^+ C\rho C^{-1}) C^{-1} P_i^+ C + \tr(P_i^- C\rho C^{-1}) C^{-1} P_i^- C)\\
&= \frac{1}{2^{n-1}}(\tr(C^{-1} P_i^+ C\rho) C^{-1} P_i^+ C + \tr(C^{-1} P_i^- C\rho) C^{-1} P_i^- C)\\
&= \frac{1}{2^{n-1}}(\tr(P_j^+ \rho) P_j^+ + \tr(P_j^- \rho) P_j^-)\\
&= \mathcal{T}_j (\rho),
\end{split}
\end{equation}
where $P_j = C^{-1} P_i C$. Here, we utilise $C^{-1} P_i^+ C = C^{-1} (\mathbb{I} + P_i) C / 2 = (\mathbb{I} + P_j) / 2 = P_j^+$ and $C^{-1} P_i^- C = C^{-1} (\mathbb{I} - P_i) C / 2 = (\mathbb{I} - P_j) / 2 = P_j^-$, where $\mathbb{I} = \mathbb{I}_2^{\otimes n}$.

In the variational quantum algorithm, to estimate the lowest energy level of a Hamiltonian, $H$, one first prepares a parametrized state $\rho = \ketbra{\psi}$ through a structured quantum circuit, and then evaluate $E = \tr(\rho H)$ by the measurement. Here, we take the circuit ansatz to be
\begin{equation}
\ket{\psi} = C(\vec{\alpha}) U(\vec{\theta}) \ket{0}^{\otimes n},
\end{equation}
where $n$ is the number of qubits, $\vec{\alpha}$ and $\vec{\theta}$ are parameter vectors. The energy for state $\rho$ is a function of $\vec{\alpha}$ and $\vec{\theta}$, that is, $E = E(\vec{\alpha}, \vec{\theta})$. By adjusting parameters $\vec{\alpha}$ and $\vec{\theta}$ one can achieve a minimum of $E(\vec{\alpha}, \vec{\theta})$ and set it as the approximate value for the lowest energy of $H$. In practice, $E(\vec{\alpha}, \vec{\theta})$ is often obtained with Pauli measurements. One may first express Hamiltonian $H$ as the sum of Pauli operators, $H = \sum_{i \in I} c_i P_i$ where $I\subseteq [4^n]$ is an index set and $P_i\in \textsf{P}_n$. Then, one can measure $\rho$ with $P_i$ to obtain $\tr(\rho P_i)$ for any $i$ in $I$ and evaluate $E = \tr(\rho H) = \sum_i c_i \tr(\rho P_i)$.

Here, the circuit ansatz is divided into two parts as shown in Figure~\ref{fig:CROVQA}. The part $U(\vec{\theta})$ is in general irreplaceable and needs real implementation while $C(\vec{\alpha})$ is taken from the set of qcCRO and can be moved after measurement and become classical processing. Recalling Lemma~\ref{lemma:qcCROext2}, a quantum operation before $P_i$-measurement that can be replaced by $P_j$-measurement and classical processing belongs to the set
\begin{equation}
\mathcal{R}_{ij} = \{O\in \mathrm{CPTP}| \mathcal{T}_i\circ O = \mathcal{T}_i\circ O\circ \mathcal{T}_j \}.
\end{equation}
Note that, for any $i\in I$, there exists $j\in [4^n]$,  $C(\vec{\alpha})$ can be moved after $P_i$-measurement and be replaced by $P_j$-measurement followed with classical processing. Then we deduce that $C(\vec{\alpha})$ belongs to the set
\begin{equation}
\begin{split}
\mathcal{R} &= \bigcup_{j \in [4^n]} \bigcap_{i \in I} \mathcal{R}_{ij}\\
&= \{ O\in \mathrm{CPTP} | \exists j\in [4^n], \forall i\in I, \mathcal{T}_i \circ O = \mathcal{T}_i \circ O \circ \mathcal{T}_j \}.
\end{split}
\end{equation}
In fact, $\mathcal{R}$ is the maximal set of quantum operations that can be replaced with classical processing in this case. That means, the optimization of $C(\vec{\alpha})$ is confined in $\mathcal{R}$. As a consequence, one can improve the expressibility of the variational quantum circuit with the help of virtual quantum gates $C(\vec{\alpha})$ while only $U(\vec{\theta})$ needs to be really implemented.

\begin{figure}[!htbp]
	\centering
	\resizebox{12cm}{!}{\includegraphics{./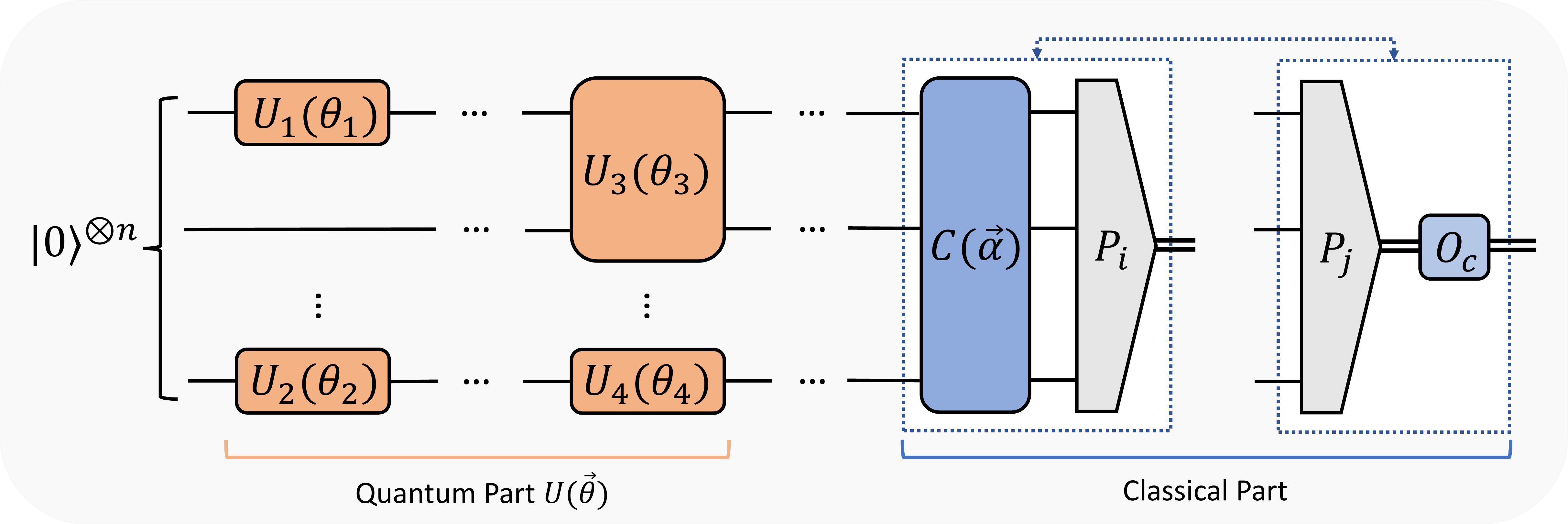}}
	\caption{The ansatz for a variational quantum circuit is set as $C(\vec{\alpha}) U(\vec{\theta}) \ket{0}^{\otimes n}$. It contains two parts. One is irreplaceable quantum operation $U(\vec{\theta})$ and the other is classically replaceable operation $C(\vec{\alpha})$. $\exists j\in [4^n]$, $\forall i\in I$, $C(\vec{\alpha})$ before $P_i$-measurement can be replaced with $P_j$-measurement and classical processing $O_c$.}
	\label{fig:CROVQA}
\end{figure}

In the main-text, we show that $\textsf{C}_n$ is a subset of $\mathcal{R}$. Moreover, replacing a Clifford gate does not acquire further classical processing after the measurement. It takes no price to implement a Clifford gate if we only concern the classical result of the Pauli measurement. In~\cite{ZhongXia2021SHVQA}, the authors propose this idea from another perspective, by viewing the evolution of $C(\vec{\alpha})$ in Heisenberg picture. The final measurement result can be given by
\begin{equation}
\begin{split}
E(\vec{\alpha}, \vec{\theta}) &= \bra{0}^{\otimes n} U^{-1}(\vec{\theta}) C^{-1}(\vec{\alpha}) H C(\vec{\alpha}) U(\vec{\theta}) \ket{0}^{\otimes n}\\
&= \sum_{i\in I} c_i \bra{0}^{\otimes n} U^{-1}(\vec{\theta}) C^{-1}(\vec{\alpha}) P_i C(\vec{\alpha}) U(\vec{\theta}) \ket{0}^{\otimes n}\\
&= \sum_{i\in I} c_i \bra{0}^{\otimes n} U^{-1}(\vec{\theta}) P_j U(\vec{\theta}) \ket{0}^{\otimes n}.
\end{split}
\end{equation}
Therefore, one can perform $P_j$-measurement on $U(\vec{\theta}) \ket{0}^{\otimes n}$ instead of performing $P_i$-measurement on $C(\vec{\alpha}) U(\vec{\theta}) \ket{0}^{\otimes n}$ to evaluating $E(\vec{\alpha}, \vec{\theta})$. The action of $C(\vec{\alpha})$ is moved from the state to Hamiltonian $H$ and hence $C(\vec{\alpha})$ becomes a virtual gate.

%\section{Proof of Theorems and Lemmas in Section~\ref{sc:Resource}}\label{thmproof:resource}
\section{Robustness of Irreplaceability}\label{thmproof:resource}
\subsection{Proof of Lemma~\ref{lemma:equiv}}\label{ProofLemmaEquiv}
Ahead of the proof, we briefly restate Lemma~\ref{lemma:equiv} for better reading. There are five equivalent definitions for the robustness of irreplaceability of a quantum channel $\mathcal{N}$ as shown below.
\begin{align}
\label{eq:RobofIrDef1'} R(\mathcal{N})
&= \min_{\mathcal{M}\in \mathrm{CPTP} } \left\{s\geq 0 \bigg| \frac{\Phi_{\mathcal{N}}+s\Phi_{\mathcal{M}}}{1+s}\in \mathbf{F} \right\}\\
\label{eq:RobofIrDef2'} &= \min_{\mathcal{M}\in \mathrm{CPTP}} \left\{s\geq 0 \bigg| \frac{\Phi_{\Delta\circ\mathcal{N}}+s\Phi_{\Delta\circ\mathcal{M}}}{1+s}\in \mathbf{F}'\right\}\\
\label{eq:RobofIrDef3'} &= \min_{\mathcal{M}\in \mathrm{CPTP} } \left\{s\geq 0 \bigg| \frac{\mathcal{N}+s\mathcal{M}}{1+s}\in \mathrm{qcCRO}\right\}\\
\label{eq:RobofIrDef4'} &= \min_{\mathcal{M}\in \mathrm{CPTP} } \left\{s\geq 0 \bigg| \frac{\Delta\circ\mathcal{N}+s\Delta\circ\mathcal{M}}{1+s}\in \mathrm{qcCRO}\right\}\\
\label{eq:RobofIrDef5'} &= \min_{\mathcal{M}\in \mathrm{CPTP} } \left\{s\geq 0 \bigg| \frac{\Delta\circ\mathcal{N}+s\mathcal{M}}{1+s}\in \mathrm{qcCRO}\right\},
\end{align}
where $\mathbf{F} = \{\Phi_{\mathcal{M}}\big| \mathcal{M}\in \mathrm{qcCRO}\}$ and $\mathbf{F}'=\{\Phi_{\Delta\circ\mathcal{M}}\big| \mathcal{M}\in \mathrm{qcCRO}\}$.

\begin{proof}
The equivalence between Eq.~\eqref{eq:RobofIrDef1'} and Eq.~\eqref{eq:RobofIrDef2'} can be directly verified by the definition. The equivalence of Eq.~\eqref{eq:RobofIrDef1'} and Eq.~\eqref{eq:RobofIrDef3'} comes from the one-to-one correspondence of a channel and its Choi-state. We further prove the equivalence between Eq.~\eqref{eq:RobofIrDef3'} and Eq.~\eqref{eq:RobofIrDef4'}, Eq.~\eqref{eq:RobofIrDef3'} and Eq.~\eqref{eq:RobofIrDef5'} respectively to prove Lemma~\ref{lemma:equiv}, which comes from the idempotence of the dephasing operation.

Notice that $\Delta\circ \Delta = \Delta$, we have
\begin{equation}\label{eq:Robustequ}
\Delta \circ \frac{\mathcal{N}+s\mathcal{M}}{1+s} = \Delta \circ \frac{\Delta \circ \mathcal{N}+s\Delta \circ \mathcal{M}}{1+s}.
\end{equation}
Assuming the minimization of Eq.~\eqref{eq:RobofIrDef3'} gives $s_1$ and that of Eq.~\eqref{eq:RobofIrDef4'} gives $s_2$, that means
\begin{equation}\label{eq:Robustequ2}
\Delta \circ \frac{\mathcal{N}+s_1\mathcal{M}}{1+s_1} = \Delta \circ \frac{\mathcal{N}+s_1\mathcal{M}}{1+s_1} \circ \Delta.
\end{equation}
Combining Eq.~\eqref{eq:Robustequ} with Eq.~\eqref{eq:Robustequ2} we get
\begin{equation}\label{eq:Robustequ3}
\Delta \circ \frac{\Delta \circ \mathcal{N}+s_1\Delta \circ \mathcal{M}}{1+s_1} = \Delta \circ \frac{\Delta \circ\mathcal{N}+s_1\Delta \circ\mathcal{M}}{1+s_1} \circ \Delta.
\end{equation}
From the minimization of robustness we know $s_2\leq s_1$. Reversely, combining Eq.~\eqref{eq:Robustequ} with Eq.~\eqref{eq:Robustequ3} while substituting $s_1$ with $s_2$, we can prove $s_1\leq s_2$. Then $s_1 = s_2$ implying the equivalence between Eq.~\eqref{eq:RobofIrDef3'} and Eq.~\eqref{eq:RobofIrDef4'}. Similarly, we assume the minimization of Eq.~\eqref{eq:RobofIrDef5'} gives $s_3$ and from $\Delta\circ \Delta = \Delta$ we have
\begin{equation}\label{eq:Robustequ4}
	\Delta \circ \frac{\mathcal{N}+s\mathcal{M}}{1+s} = \Delta \circ \frac{\Delta \circ \mathcal{N}+s\mathcal{M}}{1+s}.
\end{equation}
Combining Eq.~\eqref{eq:Robustequ2} with Eq.~\eqref{eq:Robustequ4} we get
\begin{equation}
	\Delta \circ \frac{\Delta \circ \mathcal{N}+s_1 \mathcal{M}}{1+s_1} = \Delta \circ \frac{\Delta \circ\mathcal{N}+s_1\mathcal{M}}{1+s_1} \circ \Delta.
\end{equation}
Then we get $s_3\leq s_1$. Following the similar approach we can also prove $s_1\leq s_3$ implying $s_1 = s_3$. Thus, we prove the equivalence between Eq.~\eqref{eq:RobofIrDef3'} and Eq.~\eqref{eq:RobofIrDef5'}, and complete the whole proof.
\end{proof}

\subsection{Proof of Lemma~\ref{lemma:robust}}\label{ProofLemmaRobust}
\begin{proof}
First, we prove the monotonicity of robustness of irreplaceability. For any free superchannel $\Lambda\in \mathcal{F}$, suppose $\mathcal{M}$ is the channel achieving the minimal value of $s$, s.t., $\frac{\mathcal{N}+s\mathcal{M}}{1+s}\in\mathrm{qcCRO}$. Then
\begin{equation}
	\Lambda\left(\frac{\mathcal{N}+s\mathcal{M}}{1+s}\right) = \frac{\Lambda(\mathcal{N})+s\Lambda(\mathcal{M})}{1+s}\in\mathrm{qcCRO}.
\end{equation}
As $\Lambda(\mathcal{M})$ is also a CPTP channel, from the definition we see that $R(\Lambda(\mathcal{N})) \leq s = R(\mathcal{N})$.

The convexity of robustness comes from the convexity of qcCRO. Given a set of channels $\{\mathcal{N}_i\}$ where $i$ belongs to an index set $I$, assuming $s_i = R(\mathcal{N}_i)$ and $\mathcal{M}_i$ achieves the minimal value, that is,
\begin{equation}
	\mathcal{M}_i^{\prime} = \frac{\mathcal{N}_i+s_i\mathcal{M}_i}{1+s_i} \in \mathrm{qcCRO}.
\end{equation}
We set $\mathcal{M} = \frac{\sum_i p_is_i\mathcal{M}^{\prime}_i}{s}$, $s = \sum_{i}p_is_i$. Due to the convexity of qcCRO,
\begin{equation}
	\frac{\sum_{i}p_i\mathcal{N}_i+s\mathcal{M}}{1+s} = \frac{\sum_ip_i(1+s_i)\mathcal{M}^{\prime}_i}{1+s} \in \mathrm{qcCRO}.
\end{equation}
From the definition of the robustness we get $R(\sum_{i}p_i\mathcal{N}_i)\leq s = \sum_{i}p_i R(\mathcal{N}_i)$.
\end{proof}

\subsection{Proof of Lemma~\ref{lemma:extension}}\label{ProofLemmaExtension}
Before proving Lemma~\ref{lemma:extension}, we introduce some basic properties of CROs in a large system with multiple parts. These properties are natural corollaries of Theorem~\ref{thm:CROequiv}. We first consider the joint behavior of independent CROs on different subsystems, which is described by the tensor-product operation.

\begin{corollary}\label{coro:tensor}
cqCRO, qqCRO, and qcCRO are closed under the action of tensor product.
\end{corollary}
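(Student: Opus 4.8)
The plan is to reduce the statement to the algebraic criteria of Theorem~\ref{thm:CROequiv} and exploit the fact that the dephasing operation factorizes over tensor products. Concretely, since by the convention in Section~\ref{sc:CROdef} the computational basis of $\mathcal{H}_1\otimes\mathcal{H}_2$ is the tensor product of the computational bases of $\mathcal{H}_1$ and $\mathcal{H}_2$, one has $\Delta_{\mathcal{H}_1\otimes\mathcal{H}_2} = \Delta_{\mathcal{H}_1}\otimes\Delta_{\mathcal{H}_2}$ as superoperators. First I would record this identity by checking it on the spanning set $\{\ketbra{i_1}{j_1}\otimes\ketbra{i_2}{j_2}\}$: both sides send such an operator to $\delta_{i_1 j_1}\delta_{i_2 j_2}\ketbra{i_1 i_2}$.

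Next I would invoke the mixed-product (functoriality) property of composition and tensor product of linear maps, $(A_1\otimes A_2)\circ(B_1\otimes B_2) = (A_1\circ B_1)\otimes(A_2\circ B_2)$. Combined with the factorization of $\Delta$, this immediately propagates each of the three criteria componentwise. For instance, writing $\Delta_k$ for the dephasing on $\mathcal{H}_k$ and $\Delta_{12}=\Delta_1\otimes\Delta_2$, if $O_1,O_2\in\mathrm{qcCRO}$ then
\begin{equation}
\Delta_{12}\circ(O_1\otimes O_2) = (\Delta_1\circ O_1)\otimes(\Delta_2\circ O_2) = (\Delta_1\circ O_1\circ\Delta_1)\otimes(\Delta_2\circ O_2\circ\Delta_2) = \Delta_{12}\circ(O_1\otimes O_2)\circ\Delta_{12},
\end{equation}
so $O_1\otimes O_2\in\mathrm{qcCRO}$. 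The same two-line computation, now with the criteria $O\circ\Delta = \Delta\circ O\circ\Delta$ and $O = \Delta\circ O\circ\Delta$, handles cqCRO and qqCRO respectively; and since the tensor product of CPTP maps is CPTP, $O_1\otimes O_2$ is a legitimate quantum operation. A routine induction then extends the statement from two factors to any finite number of factors.

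There is no serious obstacle here: the content is entirely the observation that $\Delta$ is multiplicative over tensor factors and that the three characterizations are pointwise in the tensor index. The only place requiring a little care is the identity $\Delta_{\mathcal{H}_1\otimes\mathcal{H}_2} = \Delta_{\mathcal{H}_1}\otimes\Delta_{\mathcal{H}_2}$, which hinges on the basis convention noted above; without it the corollary would fail in general, since a CRO is defined relative to a \emph{fixed} computational basis.
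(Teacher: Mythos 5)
Your proposal is correct and follows essentially the same route as the paper's proof: both rest on the factorization $\Delta_{\mathcal{H}_1\otimes\mathcal{H}_2}=\Delta_{\mathcal{H}_1}\otimes\Delta_{\mathcal{H}_2}$ (a consequence of the tensor-product convention for the computational basis) together with the mixed-product property, applied to the algebraic criteria of Theorem~\ref{thm:CROequiv}. The only cosmetic difference is that the paper writes the $n$-factor case directly while you treat two factors and invoke induction.
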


\begin{proof}
We prove the result for qcCRO. Similar arguments apply to cqCRO and qqCRO. Recall that the computational basis of an $n$-party system, $\bigotimes_{k=0}^{n-1} \mathcal{H}_k$, is the tensor product of subsystem computational bases. Thus, the dephasing operation on $\bigotimes_{k=0}^{n-1} \mathcal{H}_k$ is also the tensor product of subsystem dephasing operations,
\begin{equation}
\Delta_{[n]} = \bigotimes_{k=0}^{n-1} \Delta_k.
\end{equation}
For any $k\in [n]$, given qcCRO $O_k$ on subsystem $\mathcal{H}_k$, $\bigotimes_{k=0}^{n-1} O_k$ satisfies
\begin{equation}
\Delta_{[n]} \circ \bigotimes_{k=0}^{n-1} O_k = \bigotimes_{k=0}^{n-1} \Delta_k\circ O_k = \bigotimes_{k=0}^{n-1} \Delta_k\circ O_k\circ \Delta_k = \Delta_{[n]} \circ \bigotimes_{k=0}^{n-1} O_k\circ \Delta_{[n]}.
\end{equation}
It means that $\bigotimes_{k=0}^{n-1} O_k$ is a qcCRO on $\bigotimes_{k=0}^{n-1} \mathcal{H}_k$.
\end{proof}

Next, we consider the inverse operation of the tensor product. Without loss of generality, we consider a bipartite system $\mathcal{H}_0\otimes \mathcal{H}_1$. The dephasing operations on $\mathcal{H}_0$, $\mathcal{H}_1$, and $\mathcal{H}_0\otimes \mathcal{H}_1$ are denoted as $\Delta_{0}$, $\Delta_{1}$, and $\Delta_{01} = \Delta_{0}\otimes \Delta_1$, respectively. Consider a quantum operation $O_{01}$ on the bipartite system. If its effective operation on a subsystem is CPTP, then we call the operation on the subsystem as the partial trace over the other subsystem. The partial trace over $\mathcal{H}_1$ or $\mathcal{H}_0$ of $O_{01}$ is denoted as $O_0$ or $O_1$, respectively. In general, $O_{01}$ jointly acts on the two subsystems and we cannot write it as $O_{01}=O_0\otimes O_1$. While in our cases, it suffices to consider the following definitions for the partial trace of quantum operations,
\begin{equation}\label{eq:PartialOperation}
\begin{split}
    \forall \rho_0 \in \mathcal{D}(\mathcal{H}_0),\quad O_0(\rho_0) &= \tr_1\left(O_{01}\left(\rho_0\otimes \frac{\mathbb{I}_1}{d_1}\right)\right),\\
    \forall \rho_1 \in \mathcal{D}(\mathcal{H}_1),\quad O_1(\rho_1) &= \tr_0\left(O_{01}\left(\frac{\mathbb{I}_0}{d_0}\otimes \rho_1\right)\right),
\end{split}
\end{equation}
where $\mathbb{I}_0$ and $\mathbb{I}_1$ are identity operators on $\mathcal{H}_0$ and $\mathcal{H}_1$, respectively, $d_0 = \dim \mathcal{H}_0$, and $d_1 = \dim \mathcal{H}_1$. The choice of a maximally mixed state of subsystem $\mathcal{H}_1$ in the definition of $O_0$ closely relates to our definition of the irreplaceability measures, where we essentially use the Choi states. The definition in Eq.~\eqref{eq:PartialOperation} ensures that the partial trace over system $\mathcal{H}_1$ of Choi state $\Phi_{O_{01}}$ is equal to Choi state $\Phi_{O_{0}}$. Operationally, this choice can be interpreted as the average effect on subsystem $\mathcal{H}_0$ when the subsystem $\mathcal{H}_1$ is prepared in a uniformly random pure state according to the Haar measure. This is also the case for the definition of $O_1$.

With respect to the partial-trace operations in Eq.~\eqref{eq:PartialOperation}, we have the following corollary from Theorem~\ref{thm:CROequiv}.

\begin{corollary}\label{coro:partrace}
cqCRO, qqCRO, and qcCRO are closed under the action of partial trace.
\end{corollary}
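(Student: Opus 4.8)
The plan is to reduce the statement to the algebraic criteria of Theorem~\ref{thm:CROequiv} together with two elementary intertwining identities linking the partial trace, the ``append a maximally mixed ancilla'' channel, and the dephasing operations. Work on the bipartite system $\mathcal{H}_0\otimes\mathcal{H}_1$; since Eq.~\eqref{eq:PartialOperation} is symmetric under swapping the two tensor factors, it suffices to treat the partial trace $O_0$ over $\mathcal{H}_1$, and the claim for $O_1$ follows by the identical argument. Introduce the CPTP map $\iota_1\colon\rho_0\mapsto\rho_0\otimes\frac{\mathbb{I}_1}{d_1}$, so that Eq.~\eqref{eq:PartialOperation} reads compactly as $O_0=\tr_1\circ O_{01}\circ\iota_1$, manifestly a composition of CPTP maps.

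The two identities I would establish first are
\begin{equation}
\Delta_0\circ\tr_1=\tr_1\circ\Delta_{01},\qquad \Delta_{01}\circ\iota_1=\iota_1\circ\Delta_0.
\end{equation}
Both follow, as in the proof of Corollary~\ref{coro:tensor}, from the computational basis of $\mathcal{H}_0\otimes\mathcal{H}_1$ being the product basis, so that $\Delta_{01}=\Delta_0\otimes\Delta_1$: it is enough to check the identities on product operators $A_0\otimes A_1$ and extend by linearity, using $\tr(\Delta_1(A_1))=\tr(A_1)$ for the first and $\Delta_1(\mathbb{I}_1/d_1)=\mathbb{I}_1/d_1$ for the second. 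Given these, I would push the dephasing operations through $O_0=\tr_1\circ O_{01}\circ\iota_1$ to obtain
\begin{align}
\Delta_0\circ O_0&=\tr_1\circ\Delta_{01}\circ O_{01}\circ\iota_1,\\
O_0\circ\Delta_0&=\tr_1\circ O_{01}\circ\Delta_{01}\circ\iota_1,\\
\Delta_0\circ O_0\circ\Delta_0&=\tr_1\circ\Delta_{01}\circ O_{01}\circ\Delta_{01}\circ\iota_1.
\end{align}

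The three cases then follow by applying the operation $A\mapsto\tr_1\circ A\circ\iota_1$ to the corresponding defining identity of $O_{01}$ from Theorem~\ref{thm:CROequiv}: from $O_{01}\circ\Delta_{01}=\Delta_{01}\circ O_{01}\circ\Delta_{01}$ one gets $O_0\circ\Delta_0=\Delta_0\circ O_0\circ\Delta_0$, i.e.\ $O_0\in\mathrm{cqCRO}$; from $O_{01}=\Delta_{01}\circ O_{01}\circ\Delta_{01}$ one gets $O_0=\Delta_0\circ O_0\circ\Delta_0$, i.e.\ $O_0\in\mathrm{qqCRO}$; and from $\Delta_{01}\circ O_{01}=\Delta_{01}\circ O_{01}\circ\Delta_{01}$ one gets $\Delta_0\circ O_0=\Delta_0\circ O_0\circ\Delta_0$, i.e.\ $O_0\in\mathrm{qcCRO}$. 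I do not expect a genuine obstacle here; the one point worth flagging is that the whole argument hinges on the specific choice of the maximally mixed ancilla $\mathbb{I}_1/d_1$ in Eq.~\eqref{eq:PartialOperation}, which is precisely the state fixed by $\Delta_1$ and hence the reason $\iota_1$ intertwines $\Delta_0$ with $\Delta_{01}$ — a coherent ancilla state would break the second identity and, with it, the closure property.
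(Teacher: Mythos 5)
Your proposal is correct and follows essentially the same route as the paper's proof: the paper's chain of equalities for $\Delta_0\circ O_0(\rho_0)$ uses exactly your two intertwining facts (inserting $I_0\otimes\Delta_1$ before the partial trace, and $\Delta_{01}(\rho_0\otimes\mathbb{I}_1/d_1)=\Delta_0(\rho_0)\otimes\mathbb{I}_1/d_1$), merely written out pointwise for the qcCRO case with the other two cases left as ``similar arguments.'' Your packaging of these facts as operator identities and the uniform application of the sandwich $A\mapsto\tr_1\circ A\circ\iota_1$ to all three defining relations is a clean, equivalent presentation, and your closing remark about the maximally mixed ancilla being the crux is exactly right.
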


\begin{proof}
We prove the result for qcCRO. Similar arguments apply to cqCRO and qqCRO. If $O_{01}$ is a qcCRO, which means $\Delta_{01}\circ O_{01} = \Delta_{01}\circ O_{01}\circ \Delta_{01}$, then $\forall \rho_0\in \mathcal{D}(\mathcal{H}_0)$,
\begin{equation}
\begin{split}
\Delta_{0}\circ O_0(\rho_0) &= \Delta_0 \left(\tr_1\left(O_{01}\left(\rho_0\otimes \frac{\mathbb{I}_1}{d_1}\right)\right)\right)\\
&= \Delta_0 \left(\tr_1\left( (I_0\otimes\Delta_1)\circ O_{01}\left(\rho_0\otimes \frac{\mathbb{I}_1}{d_1}\right)\right)\right)\\
&= \tr_1\left( \Delta_{01}\circ O_{01}\left(\rho_0\otimes \frac{\mathbb{I}_1}{d_1}\right)\right)\\
&= \tr_1\left( \Delta_{01}\circ O_{01}\circ\Delta_{01}\left(\rho_0\otimes \frac{\mathbb{I}_1}{d_1}\right)\right)\\
&= \Delta_0\left(\tr_1\left( (I_0\otimes\Delta_1)\circ O_{01}\left(\Delta_0(\rho_0)\otimes \frac{\mathbb{I}_1}{d_1}\right)\right)\right)\\
&= \Delta_0\left(\tr_1\left( O_{01}\left(\Delta_0(\rho_0)\otimes \frac{\mathbb{I}_1}{d_1}\right)\right)\right)\\
%&= \Delta_0(O_0\Delta_0(\rho_0))\\
&= \Delta_0\circ O_0\circ \Delta_0 (\rho_0).
\end{split}
\end{equation}
This directly leads to $\Delta_{0}\circ O_0 = \Delta_{0}\circ O_0\circ \Delta_0$. Utilising the symmetry, we also obtain $\Delta_{1}\circ O_1 = \Delta_{1}\circ O_1\circ \Delta_1$. In the following, we write $I_0\otimes\Delta_1$ as $\Delta_1$ for simplicity and similarly for the alike.
\end{proof}

With the above two results, we now prove Lemma~\ref{lemma:extension}.

\begin{proof}
Given a quantum channel on system $\mathcal{H}_0$, $\mathcal{N}$, and the identity operation on ancillary system $\mathcal{H}_1$, $I$, we denote $s_1 = R(\mathcal{N})$ and $s_2 = R(\mathcal{N}\otimes I)$. Then by definition, there exists a quantum channel on $\mathcal{H}_0$, $\mathcal{M}_0$, and a quantum channel on $\mathcal{H}_0\otimes \mathcal{H}_1$, $\tilde{\mathcal{M}}_{01}$, satisfying
\begin{align}
\label{eq:subsysrobust} \Delta_0 \circ \frac{\mathcal{N}+s_1\mathcal{M}_0}{1+s_1} &= \Delta_0 \circ \frac{\mathcal{N}+s_1\mathcal{M}_0}{1+s_1}\circ \Delta_0,\\
\label{eq:wholesysrobust} \Delta_{01} \circ \frac{\mathcal{N}\otimes I+s_2\tilde{\mathcal{M}}_{01}}{1+s_2} &= \Delta_{01} \circ \frac{\mathcal{N}\otimes I+s_2\tilde{\mathcal{M}}_{01}}{1+s_2}\circ \Delta_{01}.
\end{align}
From Eq.~\eqref{eq:subsysrobust} one can deduce that
\begin{equation}
\begin{split}
\Delta_{01} \circ \frac{\mathcal{N}\otimes I+s_1\mathcal{M}_{0}\otimes I}{1+s_1} &=  \left(\Delta_0 \circ \frac{\mathcal{N}+s_1\mathcal{M}_0}{1+s_1}\right) \otimes (\Delta_1\circ I)\\
&= \left(\Delta_0 \circ \frac{\mathcal{N}+s_1\mathcal{M}_0}{1+s_1}\circ \Delta_0\right) \otimes (\Delta_1\circ I\circ \Delta_1)\\
&= \Delta_{01} \circ \frac{\mathcal{N}\otimes I+s_1\mathcal{M}_{0}\otimes I}{1+s_1}\circ \Delta_{01},
\end{split}
\end{equation}
which implies that $s_2\leq s_1$. On the other hand, one can take partial trace over system $\mathcal{H}_1$ for both sides of Eq.~\eqref{eq:wholesysrobust} and obtain that
\begin{equation}
\Delta_{0} \circ \frac{\mathcal{N}+s_2\tilde{\mathcal{M}}_{0}}{1+s_2} = \Delta_{0} \circ \frac{\mathcal{N}+s_2\tilde{\mathcal{M}}_{0}}{1+s_2}\circ \Delta_{0},
\end{equation}
which implies that $s_1\leq s_2$. Here, $\tilde{\mathcal{M}}_0$ represents the partial trace of $\tilde{\mathcal{M}}_{01}$ over $\mathcal{H}_1$ given by Eq.~\eqref{eq:PartialOperation}. Therefore, we obtain $R(\mathcal{N}) = s_1 = s_2 = R(\mathcal{N}\otimes I)$ and complete the proof.
\end{proof}

\subsection{Proof of Theorem~\ref{thm:nonlocal}}\label{Append:ProofThmNonlocal}
\begin{proof}
The proof idea is transforming the minimization problem in robustness of irreplaceability to a dual problem. The dual problem is a maximization problem, where the objective function corresponds to the performance of a channel in the non-local game. For further elaboration, we clarify the notations. The Choi-state $I\otimes \mathcal{N}(\ketbra{\Phi^+})$ of a channel $\mathcal{N}$ is a bipartite state and we label the two parties with 0 and 1, respectively. That is, $I\otimes \mathcal{N}(\ketbra{\Phi^+})\in\mathcal{H}_0\otimes \mathcal{H}_1$. The computational basis of each party $\mathcal{H}_k$ is denoted as $\{\ket{i_k}, i\in[d]\}$, $k = 0,1$. With respect to the computation basis, we define two dephasing operations on the subsystem, such that $\forall \rho\in \mathcal{D}(\mathcal{H}_0 \otimes \mathcal{H}_1)$,
\begin{align}
\label{eq:delta0}	&\Delta_{0} = \sum_{i=0}^{d-1} \ketbra{i_0} \otimes \tr_0(\ketbra{i_0} \rho),\\
\label{eq:delta1}	&\Delta_{1} = \sum_{j=0}^{d-1} \tr_1(\ketbra{j_1} \rho) \otimes \ketbra{j_1}.
\end{align}
For the joint system $\mathcal{H}_0\otimes \mathcal{H}_1$, we take the computational basis as $\{\ket{i_0j_1}, i_0,j_1\in[d]\}$ and represent the dephasing operation on $\mathcal{H}_0\otimes \mathcal{H}_1$ as
\begin{equation}\label{eq:delta01}
\Delta_{01} = \sum_{i_0,j_1=0}^{d-1}\tr(\ketbra{i_0j_1}\rho)\ketbra{i_0j_1}.
\end{equation}

Now we take the second equivalent definition of robustness of irreplaceability, i.e., Eq.~\eqref{eq:RobofIrDef2}, and express the minimization problem as a conic optimization problem~\cite{Takagi2019Robustness},
\begin{equation}
\begin{split}
&R(\mathcal{N}) = \min_{s, A} \ s, \\
s.t. \ &A - \Delta_{1}(\Phi_{\mathcal{N}}) \in \mathrm{cone}(\mathbf{S}), \\
&A\in \mathrm{cone}(\mathbf{F}'), \\
&\tr_1(A) = (1+s)\frac{\mathbb{I}_0}{d},
\end{split}
\end{equation}
where $\mathbf{S} = \{\Delta_{1}(\Phi_{\mathcal{N}})$, $\mathcal{N}\in \mathrm{CPTP}\}$, $\mathbf{F}' = \{\Delta_{1}(\Phi_{\mathcal{M}}), \mathcal{M}\in \mathrm{qcCRO}\}$, and $\mathrm{cone}(\cdot)$ represents their unnormalized versions, which forms a convex cone. The term $A$ corresponds to the numerator, $\Phi_{\Delta\circ\mathcal{N}}+s\Phi_{\Delta\circ\mathcal{M}}$, in Eq.~\eqref{eq:RobofIrDef2}. Note that $\Phi_{\Delta\circ\mathcal{N}} = \Delta_{1}(\Phi_{\mathcal{N}})$. We use $\mathbb{I}_0\in\mathcal{L}(\mathcal{H}_0)$ to denote the identity operator, where $\mathcal{L}(\mathcal{H}_0)$ represents the set of linear operators on $\mathcal{H}_0$. The Lagrangian function of this problem is
\begin{equation}
L(s, A, W, X, Y) = s - \langle W, A-\Delta_{1}(\Phi_{\mathcal{N}})\rangle  - \langle X, A\rangle  - \left\langle Y, (1+s)\frac{\mathbb{I}_0}{d}-\tr_1(A)\right\rangle.
\end{equation}
Here, $\langle \cdot,\cdot\rangle $ represents Hilbert-Schmidt inner product, with
\begin{equation}
  \langle A,B\rangle=\tr(A^\dag B),\forall A,B\in\mathcal{L}(\mathcal{H}),
\end{equation}
and $W\in (\mathrm{cone}(\mathbf{S}))^*$, $X\in (\mathrm{cone}(\mathbf{F}'))^*$, $Y\in\mathcal{L}(\mathcal{H}_0)$. Here, $(\mathrm{cone}(\mathbf{S}))^*$ represents the dual cone of $\mathrm{cone}(\mathbf{S})$, defined as $(\mathrm{cone}(\mathbf{S}))^*=\{V\in\mathcal{L}(\mathcal{H}_0)|\forall U\in\mathrm{cone}(\mathbf{S}),\langle U,V\rangle\geq0\}$, and the definition of $(\mathrm{cone}(\mathbf{F}'))^*$ is similar. The dual problem is
\begin{equation}
\begin{split}
&\max_{W, X, Y} \min_{s, A} L(s, A, W, X, Y), \\
s.t. \ &\langle W, B\rangle  \geq 0, \forall B\in \mathbf{S},\\
&\langle X, C\rangle  \geq 0, \forall C\in \mathbf{F}'.
\end{split}
\end{equation}
We can change the form of Lagrangian function:
\begin{equation}
L(s, A, W, X, Y) = s\left(1-\left\langle Y, \frac{\mathbb{I}_0}{d}\right\rangle \right) - \langle W + X - Y\otimes \mathbb{I}_1, A\rangle  + \langle W, \Delta_{1}(\Phi_{\mathcal{N}})\rangle  - \left\langle Y, \frac{\mathbb{I}_0}{d}\right\rangle.
\end{equation}
Note that $\langle Y, \tr_1(A)\rangle  = \langle Y\otimes \mathbb{I}_1, A\rangle $. Then the dual problem is equivalent to:
\begin{equation}
\begin{split}
&\max_{W, X, Y} \ \langle W, \Delta_{1}(\Phi_{\mathcal{N}})\rangle  - \left\langle Y, \frac{\mathbb{I}_0}{d}\right\rangle, \\
s.t. \ &1 - \left\langle Y, \frac{\mathbb{I}_0}{d}\right\rangle  = 0, \\
&W + X - Y\otimes \mathbb{I}_1 = 0,\\
&\langle W, B\rangle  \geq 0, \forall B\in \mathbf{S},\\
&\langle X, C\rangle  \geq 0, \forall C\in \mathbf{F}'.
\end{split}
\end{equation}
The above can also be written as
\begin{equation}
\begin{split}
&\max_{W, X, Y} \ \tr(W\Delta_{1}(\Phi_{\mathcal{N}})) - 1,\\
s.t. \ &\tr(Y) = d,\\
&W = Y\otimes \mathbb{I}_1 - X,\\
&\tr(WB) \geq 0, \forall B\in \mathbf{S},\\
&\tr(WC) \leq \tr(Y\otimes \mathbb{I}_1 C) = \tr(\frac{Y}{d}) = 1, \forall C\in \mathbf{F}'.
\end{split}
\end{equation}
By taking $X = \frac{\mathbb{I}_{01}}{2}$, $Y = \mathbb{I}_0$, $W = \frac{\mathbb{I}_{01}}{2}$, we can check that Slater's conditions are fulfilled and hence the problem satisfies the strong duality~\cite{boyd2004convex}. Note that the primal problem is minimizing a convex function $s$ and $\langle Y, (1+s)\frac{\mathbb{I}_0}{d}-\tr_1(A)\rangle $ is an affine function relative to $s$ and $A$. Also, we can see that the objective function is not related to $X$ and $Y$, hence we can rewrite the optimization as
\begin{equation}\label{robustness_opt}
	\begin{split}
		&R(\mathcal{N}) = \max_{W\in (\mathrm{cone}(\mathbf{S}))^*} \ \tr(W\Delta_{1}(\Phi_{\mathcal{N}})) - 1,\\
		s.t. \ &\tr(WB) \geq 0, \forall B\in \mathbf{S},\\
		&\tr(WC) \leq 1, \forall C\in \mathbf{F}'.
	\end{split}
\end{equation}
Note that for any qcCRO $\mathcal{M}$, $\Delta_{1}(\Phi_{\mathcal{M}})\in\mathbf{F}'$, there always exists $W = \Delta_{1}(\Phi_{\mathcal{M}})$ satisfying the restriction and achieving the maximization. The corresponding robustness is 0 which is the value for any qcCRO.

On the other side, the performance of channel $\mathcal{N}$ in the non-local game is
\begin{equation}\label{performace-robustness}
\begin{split}
p(\mathcal{N}, \{\alpha_{ij}\}, \{\sigma_i\}, \{\ketbra{j}\}) &= \sum_{i,j} \alpha_{ij} \tr(\mathcal{N}(\sigma_i) \ketbra{j})\\
&= d\sum_{i,j} \alpha_{ij} \tr(\Phi_{\mathcal{N}} \sigma_i^T \otimes \ketbra{j})\\
&= d\sum_{i,j} \alpha_{ij} \tr(\Delta_{1}(\Phi_{\mathcal{N}}) \sigma_i^T \otimes \ketbra{j})\\
&= \tr(W\Delta_{1}(\Phi_{\mathcal{N}})),
\end{split}
\end{equation}
where $W = d\sum_{i,j} \alpha_{ij} \sigma_i^T\otimes \ketbra{j}\in (\mathrm{cone}(\mathbf{S}))^*$. Therefore, by varying the strategy $\alpha_{ij}$ and $\sigma_i$ we can obtain any $W$ satisfying $\tr(WB) \geq 0, \forall B\in \mathbf{S}$ and $\tr(WC) \leq 1, \forall C\in \mathbf{F}'$. Comparing Eq.~\eqref{eq:advantage2robust}, Eq.~\eqref{robustness_opt} and Eq.~\eqref{performace-robustness}, we relate the best advantage a quantum channel can provide to the robustness measure and complete the proof,
\begin{equation}
\max_{\{\alpha_{ij}\}, \{\sigma_i\}} \frac{\tr(W\Delta_{1}(\Phi_{\mathcal{N}}))}{\max_{\mathcal{M}\in\mathrm{qcCRO}}\ \tr(W\Delta_{1}(\Phi_{\mathcal{M}}))} = 1 + R(\mathcal{N}).
\end{equation}
\end{proof}

\subsection{Calculation of the Robustness of Irreplaceability}\label{Append:RobustCal}
Given a channel $\mathcal{N}\in \mathrm{CPTP}$, evaluating its robustness of irreplaceability can be recast into a convex optimisation problem. Here, we consider the case of a qcCRO. Suppose the robustness value is given by $R(\mathcal{N})=s^*$ and the corresponding CPTP map in its evaluation is given by $\mathcal{M}^*$. That is,
\begin{equation}
	\frac{\Phi_{\mathcal{N}}+s^*\Phi_{\mathcal{M}^*}}{1+s^*}\in \mathbf{F},
\end{equation}
where $\Phi_{\mathcal{N}}$ denotes the Choi state of $\mathcal{N}$, $\Phi_{\mathcal{N}} = I\otimes \mathcal{N} (\ketbra{\Phi^+})\in\mathcal{D}(\mathcal{H}_0\otimes \mathcal{H}_1)$, and $\mathbf{F} = \{\Phi_{\mathcal{M}}\big| \mathcal{M}\in \mathrm{qcCRO}\}$.
%As in Eq.~\eqref{eq:delta0}, \eqref{eq:delta1} and \eqref{eq:delta01}, we denote the first subsystem of the Choi state as $0$ and the second subsystem as $1$.
Then, we can rewrite the definition of robustness of irreplaceability as the following optimization problem,
\begin{equation}\label{eq:RobustCal}
	\begin{split}
		1+R(\mathcal{N}) &= \inf_{\psi} \tr(\psi), \\
		s.t. \  \psi-\Phi_{\mathcal{N}} &\geq 0, \\
		\psi&\geq0, \\
		\tr_1(\psi)&=\frac{\mathbb{I}_0}{d}\tr(\psi),\\
		\psi&\in\mathrm{cone}(\mathbf{F}),
	\end{split}
\end{equation}
where $d$ is the dimension of subsystem $0$ and $\mathbb{I}_0$ is the identity operator on subsystem $0$. The second and the third constraints correspond to the requirement of a CPTP map. For the set of qcCRO, the cone of its corresponding set of Choi states can be described as
\begin{equation}\label{eq:qcCROcone}
	\mathrm{cone}(\mathbf{F})=\{\psi|\Delta_1(\psi)=\Delta_{01}(\psi),\psi\geq0\},
\end{equation}
where $\Delta_1$ and $\Delta_{01}$ are given in Eq.~\eqref{eq:delta1} and \eqref{eq:delta01}. Eq.~\eqref{eq:qcCROcone} is the Choi-state representation of Eq.~\eqref{eq:qcCRO}. Then, Eq.~\eqref{eq:RobustCal} is written in a standard semi-definite programming problem, which can be solved in polynomial time, for example, by using interior-point algorithm.

\section{Relative Entropy of Irreplaceability}\label{append:rel}
The relative entropy of irreplaceability is defined as follows.

\begin{definition}[Relative Entropy of Irreplaceability] Given a channel $\mathcal{N}\in \mathrm{CPTP}$, the relative entropy of irreplaceability of $\mathcal{N}$ is
\begin{equation}\label{eq:RelofIrDef}
C_{\mathrm{rel}}(\mathcal{N}) = \min_{\mathcal{M}\in \mathrm{qcCRO}} D( \Phi_{\Delta\circ \mathcal{N}} \| \Phi_{\Delta\circ \mathcal{M}}),
\end{equation}
where $D(\rho\|\sigma) = \tr(\rho\log \rho)-\tr(\rho\log \sigma)$ is the relative entropy from $\rho$ to $\sigma$.
\end{definition}
The set of qcCRO and $\mathbf{F}'=\{\Phi_{\Delta\circ\mathcal{M}}\big| \mathcal{M}\in \mathrm{qcCRO}\}$ are convex, hence Eq.~\eqref{eq:RelofIrDef} is well-defined. This measure is different from the normal definition of the relative entropy measure, usually defined as the minimum relative entropy of a channel to the free channels, i.e., $\min_{\mathcal{M}\in \mathrm{qcCRO}} D( \Phi_{\mathcal{N}} \| \Phi_{\mathcal{M}})$. Here, we adopt the Choi state of $\Delta\circ \mathcal{N}$ instead of $\mathcal{N}$ as we concern about whether an operation followed with a measurement can be replaced by classical operations or not. The irreplaceability of a channel $\mathcal{N}$ should be the same as $\Delta\circ \mathcal{N}$ as no measurement can distinguish them. Also, we have proven that $R(\mathcal{N}) = R(\Delta\circ \mathcal{N})$, showing that it is reasonable to consider the distance between $\Delta\circ\mathcal{N}$ and $\Delta\circ\mathcal{M}$ instead of $\mathcal{N}$ and $\mathcal{M}$.

The relative entropy of irreplaceability has an analytic expression, as given by the following lemma. The dephasing operations $\Delta_{0},\Delta_{1},\Delta_{01}$ are given by Eq.~\eqref{eq:delta0}, \eqref{eq:delta1} and \eqref{eq:delta01}, respectively.
\begin{lemma}\label{lemma:rel}
The relative entropy of irreplaceability of a channel $\mathcal{N}$ equals to
\begin{equation}\label{eq:RelofIr}
	C_{\mathrm{rel}}(\mathcal{N}) = S(\Delta_{01}(\Phi_{\mathcal{N}})) - S(\Delta_{1}(\Phi_{\mathcal{N}})),
\end{equation}
where $S(\rho) = -\tr(\rho \log \rho)$ is the von-Neumann entropy.
\end{lemma}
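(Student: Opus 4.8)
The plan is to reduce the minimization in Eq.~\eqref{eq:RelofIrDef} to the standard relative-entropy-of-coherence computation on the bipartite Choi space. I would first record two structural facts. On the one hand, $\Phi_{\Delta\circ\mathcal{N}}=\Delta_{1}(\Phi_{\mathcal{N}})$, since pre-composing a channel with $\Delta$ on its input acts as $\Delta$ on the reference leg of the Choi state (because $\Delta$ is its own transpose in the computational basis), while post-composing with $\Delta$ acts as $\Delta_{1}$ on the output leg. On the other hand, writing the qcCRO criterion of Theorem~\ref{thm:CROequiv} as $\Delta\circ\mathcal{M}=\Delta\circ\mathcal{M}\circ\Delta$ and taking Choi states gives $\Delta_{1}(\Phi_{\mathcal{M}})=\Delta_{0}\big(\Delta_{1}(\Phi_{\mathcal{M}})\big)=\Delta_{01}(\Phi_{\mathcal{M}})$; hence $\Phi_{\Delta\circ\mathcal{M}}$ is diagonal in the joint computational basis. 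Conversely, any $\sigma\geq0$ diagonal in that basis with $\tr(\sigma)=1$ and $\tr_{1}(\sigma)=\mathbb{I}_{0}/d$ is the Choi state of the qcCRO $\mathcal{M}(\rho)=\sum_{ij}\bra{i}\rho\ket{i}\,T_{ji}\ketbra{j}$ with $T_{ji}=d\,\bra{i_{0}j_{1}}\sigma\ket{i_{0}j_{1}}$. Thus $\mathbf{F}'=\{\sigma\geq0:\Delta_{01}(\sigma)=\sigma,\ \tr(\sigma)=1,\ \tr_{1}(\sigma)=\mathbb{I}_{0}/d\}$ and $C_{\mathrm{rel}}(\mathcal{N})=\min_{\sigma\in\mathbf{F}'}D\big(\Delta_{1}(\Phi_{\mathcal{N}})\,\|\,\sigma\big)$.

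Next I would locate the minimizer. Writing $\tau=\Delta_{1}(\Phi_{\mathcal{N}})$ and letting $\mathcal{I}_{01}$ denote the set of \emph{all} density operators diagonal in the joint computational basis, we have $\mathbf{F}'\subseteq\mathcal{I}_{01}$, so $\min_{\sigma\in\mathbf{F}'}D(\tau\|\sigma)\geq\min_{\sigma\in\mathcal{I}_{01}}D(\tau\|\sigma)$. For diagonal $\sigma$ one has the Pythagorean identity $D(\tau\|\sigma)=D\big(\tau\|\Delta_{01}(\tau)\big)+D\big(\Delta_{01}(\tau)\|\sigma\big)$ (both cross terms collapse because $\log\Delta_{01}(\tau)$ and $\log\sigma$ are diagonal), so the right-hand minimum is attained at $\sigma=\Delta_{01}(\tau)$. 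Since $\tau$ is already $\Delta_{1}$-invariant, $\Delta_{01}(\tau)=\Delta_{0}(\tau)=\Delta_{01}(\Phi_{\mathcal{N}})$. The key point to verify is that this candidate actually lies in $\mathbf{F}'$: it is diagonal and normalized, and $\tr_{1}\big(\Delta_{01}(\Phi_{\mathcal{N}})\big)=\Delta_{0}\big(\tr_{1}\Phi_{\mathcal{N}}\big)=\Delta_{0}(\mathbb{I}_{0}/d)=\mathbb{I}_{0}/d$ precisely because $\mathcal{N}$ is trace preserving. Hence the bound is saturated and $C_{\mathrm{rel}}(\mathcal{N})=D\big(\Delta_{1}(\Phi_{\mathcal{N}})\,\|\,\Delta_{01}(\Phi_{\mathcal{N}})\big)$.

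Finally I would evaluate this quantity. Because $\Delta_{01}(\Phi_{\mathcal{N}})=\Delta_{0}(\tau)$ is the pinching of $\tau=\Delta_{1}(\Phi_{\mathcal{N}})$ onto the computational basis of subsystem $0$, and $\log\Delta_{0}(\tau)$ commutes with $\Delta_{0}$, we get $\tr\big(\tau\log\Delta_{0}(\tau)\big)=\tr\big(\Delta_{0}(\tau)\log\Delta_{0}(\tau)\big)=-S\big(\Delta_{0}(\tau)\big)$, whence $D\big(\tau\|\Delta_{0}(\tau)\big)=\tr(\tau\log\tau)-\tr\big(\tau\log\Delta_{0}(\tau)\big)=S\big(\Delta_{0}(\tau)\big)-S(\tau)=S\big(\Delta_{01}(\Phi_{\mathcal{N}})\big)-S\big(\Delta_{1}(\Phi_{\mathcal{N}})\big)$, which is Eq.~\eqref{eq:RelofIr}. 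The main obstacle here is not depth but bookkeeping: one must keep the three pinchings $\Delta_{0},\Delta_{1},\Delta_{01}$ straight and, above all, confirm that the coherence-theoretic minimizer $\Delta_{01}(\Phi_{\mathcal{N}})$ respects the extra trace-preservation marginal constraint that distinguishes $\mathbf{F}'$ from the full set of joint-diagonal states — this is exactly where trace preservation of $\mathcal{N}$ enters and what makes the ``coherence'' answer survive the passage to the channel setting.
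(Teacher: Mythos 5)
Your proposal is correct and follows essentially the same route as the paper's proof: both reduce the minimization to a Pythagorean decomposition $D(\tau\|\sigma)=D(\tau\|\Delta_{01}(\tau))+D(\Delta_{01}(\tau)\|\sigma)$ over joint-diagonal $\sigma$, identify the minimizer as $\Delta_{01}(\Phi_{\mathcal{N}})$, and evaluate the result as the entropy difference. The one place you go beyond the paper is in explicitly checking that $\Delta_{01}(\Phi_{\mathcal{N}})$ satisfies the marginal constraint $\tr_1(\cdot)=\mathbb{I}_0/d$ and hence is genuinely the Choi state of a qcCRO — a feasibility point the paper's proof asserts without verification — which is a worthwhile addition rather than a deviation.
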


\begin{proof}
For any qcCRO, $\mathcal{M}$,
\begin{equation}
\begin{split}
\Phi_{\Delta \circ \mathcal{M}} = \Delta_{1}(\Phi_{\mathcal{M}}) &= I\otimes (\Delta\circ \mathcal{M}) (\ketbra{\Phi^+})\\
&= \frac{1}{d} \sum_{i,j} I \otimes (\Delta\circ \mathcal{M} \circ \Delta) \ketbra{ii}{jj}\\
&= \frac{1}{d} \sum_i \ketbra{i} \otimes (\Delta\circ \mathcal{M})(\ketbra{i})\\
&= \Delta_{0}\left(\frac{1}{d} \sum_i \ketbra{i} \otimes (\Delta\circ\mathcal{M})(\ketbra{i})\right)\\
&= \Delta_{0} (I\otimes (\Delta\circ \mathcal{M}) (\ketbra{\Phi^+}))\\
&= \Delta_{01}(\Phi_{\mathcal{M}}).
\end{split}
\end{equation}
Then,
\begin{equation}
\begin{split}
D(\Delta_{1}(\Phi_{\mathcal{N}})||\Delta_{1}(\Phi_{\mathcal{M}})) &= -S(\Delta_{1}(\Phi_{\mathcal{N}})) - \tr(\Delta_{1}(\Phi_{\mathcal{N}})\log \Delta_{1}(\Phi_{\mathcal{M}}))\\
&= -S(\Delta_{1}(\Phi_{\mathcal{N}})) - \tr(\Delta_{1}(\Phi_{\mathcal{N}})\log \Delta_{01}(\Phi_{\mathcal{M}}))\\
&= -S(\Delta_{1}(\Phi_{\mathcal{N}})) - \tr(\Delta_{1}(\Phi_{\mathcal{N}}) \Delta_{01}(\log \Phi_{\mathcal{M}}))\\
&= -S(\Delta_{1}(\Phi_{\mathcal{N}})) - \tr(\Delta_{01}(\Phi_{\mathcal{N}}) \Delta_{01}(\log \Phi_{\mathcal{M}}))\\
&= S(\Delta_{01}(\Phi_{\mathcal{N}})) - S(\Delta_{1}(\Phi_{\mathcal{N}})) + D(\Delta_{01}(\Phi_{\mathcal{N}})||\Delta_{01}(\Phi_{\mathcal{M}}))\\
&\geq S(\Delta_{01}(\Phi_{\mathcal{N}})) - S(\Delta_{1}(\Phi_{\mathcal{N}})).
\end{split}
\end{equation}
In the last line, the state $\Phi_{\mathcal{M}} = \Delta_{01}(\mathcal{N})$ saturates the equality. This completes the proof.
% then the proof is completed.
\end{proof}
From the proof of Lemma~\ref{lemma:rel}, we can see that for any qcCRO, $\mathcal{M}$, the state $\Delta_{1}(\Phi_{\mathcal{M}})$ is an incoherent state on $\mathcal{H}_0\otimes \mathcal{H}_1$. Correspondingly, Eq.~\eqref{eq:RelofIr} is the relative entropy of coherence of $\Delta_{1}(\Phi_{\mathcal{N}})$. From Lemma~\ref{lemma:rel}, to obtain $C_{\mathrm{rel}}(\mathcal{N})$, one only needs to get $\Phi_{\mathcal{N}}$ and calculate the von Neumann entropy of $\Delta_{01}(\Phi_{\mathcal{N}})$ as well as $\Delta_{1}(\Phi_{\mathcal{N}})$. This calculation is essentially a matrix diagonalization problem, which can be solved in polynomial time.

In the discussion below, we take a subset of RNG superchannels to be the set of free superchannels $\mathcal{F}$,
\begin{equation}
	\mathcal{F} = \{\Lambda \in \mathrm{RNG} | \Delta\circ \Lambda(\mathcal{N}) = \Lambda(\Delta\circ\mathcal{N}), \forall \mathcal{N}\in\mathrm{CPTP} \}.
\end{equation}
The set $\mathcal{F}$ is not empty as the identity superchannel, $\Lambda(\mathcal{N}) = \mathcal{N}$, belongs to $\mathcal{F}$. It is also a convex set as $\Lambda_1,\Lambda_2\in\mathcal{F}\Rightarrow p\Lambda_1+(1-p)\Lambda_2\in\mathcal{F},\forall 0\leq p\leq 1$. Now we show that the relative entropy of irreplaceability has the properties of non-increasing under mixing and monotonicity under $\mathcal{F}$, as shown by the following lemma.

\begin{lemma}
The relative entropy of irreplaceability in Eq.~\eqref{eq:RelofIrDef}, $C_{\mathrm{rel}}(\mathcal{N})$, has the following properties:
\begin{enumerate}
\item
\emph{Non-increasing under mixing: }Given an index set $\mathcal{I}$, $\forall\mathcal{N}_{i\in\mathcal{I}}\in CPTP$, $\forall\{p_i\}_{i\in\mathcal{I}}$ such that $p_i\geq0,\sum_{i\in\mathcal{I}}p_i=1$,
\begin{equation}
C_{\mathrm{rel}}\left(\sum_{i}p_i\mathcal{N}_i\right)\leq \sum_{i}p_i C_{\mathrm{rel}}(\mathcal{N}_i).
\end{equation}

\item
\emph{Monotonicity under a free superchannel: }$\forall\mathcal{N}\in CPTP$, $\forall\Lambda\in \mathcal{F}$,
\begin{equation}
	C_{\mathrm{rel}}(\Lambda(\mathcal{N}))\leq C_{\mathrm{rel}}(\mathcal{N}).
\end{equation}
\end{enumerate}
\end{lemma}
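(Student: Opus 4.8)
The plan is to treat $C_{\mathrm{rel}}$ as a generic relative-entropy-type resource measure and follow the same template as the proof of Lemma~\ref{lemma:robust} and as the divergence-measure discussion in Appendix~\ref{appendssc:resourceframework}, with the robustness-specific steps replaced by joint convexity and the data-processing inequality of the quantum relative entropy. Throughout I will use that the Choi map $\mathcal{K}\mapsto\Phi_{\mathcal{K}}$ and the post-composition $\mathcal{K}\mapsto\Delta\circ\mathcal{K}$ are linear, that $\mathrm{qcCRO}$ is convex (already invoked in the proof of Lemma~\ref{lemma:robust}), and that the minimum in $C_{\mathrm{rel}}(\mathcal{N})=\min_{\mathcal{M}\in\mathrm{qcCRO}}D(\Phi_{\Delta\circ\mathcal{N}}\|\Phi_{\Delta\circ\mathcal{M}})$ is attained since $\mathbf{F}'$ is compact.

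For the non-increasing-under-mixing property, I would pick for each $i\in\mathcal{I}$ an optimal $\mathcal{M}_i\in\mathrm{qcCRO}$ with $C_{\mathrm{rel}}(\mathcal{N}_i)=D(\Phi_{\Delta\circ\mathcal{N}_i}\|\Phi_{\Delta\circ\mathcal{M}_i})$. Then $\mathcal{M}:=\sum_i p_i\mathcal{M}_i\in\mathrm{qcCRO}$ by convexity, and by linearity $\Phi_{\Delta\circ(\sum_i p_i\mathcal{N}_i)}=\sum_i p_i\Phi_{\Delta\circ\mathcal{N}_i}$ and $\Phi_{\Delta\circ\mathcal{M}}=\sum_i p_i\Phi_{\Delta\circ\mathcal{M}_i}$. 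Using $\mathcal{M}$ as a feasible (generally sub-optimal) point for $\sum_i p_i\mathcal{N}_i$ and then the joint convexity of the relative entropy,
\begin{equation}
C_{\mathrm{rel}}\left(\sum_i p_i\mathcal{N}_i\right)\le D\left(\sum_i p_i\Phi_{\Delta\circ\mathcal{N}_i}\,\big\|\,\sum_i p_i\Phi_{\Delta\circ\mathcal{M}_i}\right)\le\sum_i p_i\,D(\Phi_{\Delta\circ\mathcal{N}_i}\|\Phi_{\Delta\circ\mathcal{M}_i})=\sum_i p_i\,C_{\mathrm{rel}}(\mathcal{N}_i).
\end{equation}
This part is entirely routine.

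For monotonicity under a free superchannel $\Lambda\in\mathcal{F}$, fix $\mathcal{N}$, let $\mathcal{M}\in\mathrm{qcCRO}$ be optimal for it, and note that $\Lambda(\mathcal{M})\in\mathrm{qcCRO}$ because $\Lambda\in\mathrm{RNG}$; hence $\Lambda(\mathcal{M})$ is feasible for $\Lambda(\mathcal{N})$ and $C_{\mathrm{rel}}(\Lambda(\mathcal{N}))\le D(\Phi_{\Delta\circ\Lambda(\mathcal{N})}\|\Phi_{\Delta\circ\Lambda(\mathcal{M})})$. The extra condition $\Delta\circ\Lambda=\Lambda\circ\Delta$ defining $\mathcal{F}$ rewrites this upper bound as $D(\Phi_{\Lambda(\Delta\circ\mathcal{N})}\|\Phi_{\Lambda(\Delta\circ\mathcal{M})})$, so it remains to show $D(\Phi_{\Lambda(\mathcal{K}_1)}\|\Phi_{\Lambda(\mathcal{K}_2)})\le D(\Phi_{\mathcal{K}_1}\|\Phi_{\mathcal{K}_2})$ for arbitrary channels $\mathcal{K}_1,\mathcal{K}_2$. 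For this I would take a physical realization $\Lambda(\mathcal{K})=\mathcal{E}_{\mathrm{post}}\circ(\mathcal{K}\otimes I_E)\circ\mathcal{E}_{\mathrm{pre}}$ with $\mathcal{E}_{\mathrm{pre}},\mathcal{E}_{\mathrm{post}}\in\mathrm{CPTP}$, express $\Phi_{\Lambda(\mathcal{K}_b)}$ as $\mathcal{E}_{\mathrm{post}}$ applied to the output of $(\mathrm{id}_R\otimes\mathcal{K}_b\otimes\mathrm{id}_E)$ acting on the fixed state $(\mathrm{id}_R\otimes\mathcal{E}_{\mathrm{pre}})(\ketbra{\Phi^+})$, and apply data processing to peel off $\mathcal{E}_{\mathrm{post}}$ and then the map that prepares this fixed state from $\ketbra{\Phi^+}$ without touching the input slot of $\mathcal{K}_b$.

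The hard part will be exactly this last reduction: in general the Choi state of $\Lambda(\mathcal{K})$ is \emph{not} a fixed CPTP image of the Choi state of $\mathcal{K}$, because $\mathcal{E}_{\mathrm{pre}}$ can feed $\mathcal{K}$ a state whose reduced input marginal is not maximally mixed, so the prepare-the-fixed-state map need not act only on the reference. To close the gap one has to either enlarge the memory $E$ so as to symmetrize that marginal before invoking data processing, or sidestep it by recalling $C_{\mathrm{rel}}(\mathcal{N})=C_{\mathrm{rel}}(\Delta\circ\mathcal{N})$ and working with the channel (amortized) relative entropy, whose monotonicity under superchannels is standard and which coincides with the Choi-state relative entropy for the classical-output channels $\Delta\circ\mathcal{M}$ that actually appear here. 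Everything else — convexity of $\mathrm{qcCRO}$, the $\mathrm{RNG}$ step, commuting $\Delta$ past $\Lambda$, and joint convexity/data processing of $D$ — is standard, so the whole argument hinges on making this Choi-level contraction precise.
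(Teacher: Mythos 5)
Your overall route is the same as the paper's. Part~1 is verbatim the paper's argument: pick optimizers $\mathcal{M}_i$, use convexity of $\mathrm{qcCRO}$ and linearity of $\mathcal{K}\mapsto\Phi_{\Delta\circ\mathcal{K}}$ to get a feasible point for the mixture, and invoke joint convexity of $D$; this part is complete and correct. For Part~2 you also reproduce the paper's chain exactly --- restrict the minimization using the RNG property, commute $\Delta$ past $\Lambda$ using the defining condition of $\mathcal{F}$ --- and then you stop at the step
$D(\Phi_{\Lambda(\mathcal{K}_1)}\|\Phi_{\Lambda(\mathcal{K}_2)})\leq D(\Phi_{\mathcal{K}_1}\|\Phi_{\mathcal{K}_2})$,
which the paper disposes of with the single phrase ``data processing inequality.''

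The obstruction you flag there is genuine, and it is present (unacknowledged) in the paper's own proof. The induced map on Choi states is not a fixed CPTP map when $\mathcal{E}_{\mathrm{pre}}$ alters the input marginal, and the bare inequality is in fact \emph{false} for general superchannels: take $\Lambda(\mathcal{K})=\mathcal{K}\circ\mathcal{P}_0$ with $\mathcal{P}_0(\rho)=\ketbra{0}$ (this $\Lambda$ even lies in the paper's $\mathcal{F}$), and $\mathcal{K}_b(\rho)=\bra{0}\rho\ket{0}\,\sigma_b+\bra{1}\rho\ket{1}\,\tau$; then $D(\Phi_{\mathcal{K}_1}\|\Phi_{\mathcal{K}_2})=\tfrac12 D(\sigma_1\|\sigma_2)$ while $D(\Phi_{\Lambda(\mathcal{K}_1)}\|\Phi_{\Lambda(\mathcal{K}_2)})=D(\sigma_1\|\sigma_2)$. (This does not contradict the lemma itself, since $\mathcal{K}_b\circ\mathcal{P}_0$ is free and the minimization over $\mathcal{M}$ rescues the conclusion in this instance, but it kills any proof that relies on the Choi-level contraction for arbitrary $\Lambda\in\mathcal{F}$.) However, your proposal does not actually close this gap: the ``enlarge the memory to symmetrize the marginal'' idea cannot work in general, as the same counterexample shows, and the claim that the channel (amortized) relative entropy coincides with the Choi-state relative entropy for classical-output channels is false as stated --- for a classical-to-classical channel the former maximizes over input symbols while the latter averages them uniformly. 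So your Part~2, like the paper's, is incomplete at exactly this step; the difference is that you correctly identify the missing lemma, whereas the paper asserts it. To actually finish, one would need either to restrict $\mathcal{F}$ further (e.g.\ to superchannels whose pre-processing preserves the maximally mixed input marginal, for which the Choi-level DPI does hold), or to prove the minimized inequality directly rather than term by term.
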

\begin{proof}
\begin{equation}
	\begin{split}
		C_{\mathrm{rel}}\left(\sum_{i}p_i\mathcal{N}_i\right) &= \min_{\mathcal{M}\in \mathrm{qcCRO}} D\left(\sum_{i}p_i\Delta_{1}(\Phi_{\mathcal{N}_i})||\Delta_{1}(\Phi_{\mathcal{M}})\right)\\
		&\leq D\left(\sum_{i}p_i\Delta_{1}(\Phi_{\mathcal{N}_i})||\sum_{i}p_i\Delta_{1}(\Phi_{\mathcal{M}_i})\right)\\
		&\leq \sum_ip_i D(\Delta_{1}(\Phi_{\mathcal{N}_i})||\Delta_{1}(\Phi_{\mathcal{M}_i}))\\
		&= \sum_{i}p_i C_{\mathrm{rel}}(\mathcal{N}_i),
	\end{split}
\end{equation}
where $\mathcal{M}_i$ is the CPTP channel minimizing the value $D(\Delta_{1}(\Phi_{\mathcal{N}_i})||\Delta_{1}(\Phi_{\mathcal{M}}))$. The inequality of the third line comes from the joint convexity of relative entropy. For any free superchannel $\Lambda\in \mathcal{F}$,
\begin{equation}
	\begin{split}
		C_{\mathrm{rel}}(\Lambda(\mathcal{N})) &= \min_{\mathcal{M}\in\mathrm{qcCRO}} D(\Phi_{\Delta\circ \Lambda(\mathcal{N})}||\Phi_{\Delta\circ \mathcal{M}})\\
		&\leq \min_{\mathcal{M}\in\mathrm{qcCRO}} D(\Phi_{\Delta\circ \Lambda(\mathcal{N})}||\Phi_{\Delta\circ \Lambda(\mathcal{M})})\\
		&= \min_{\mathcal{M}\in\mathrm{qcCRO}} D(\Phi_{ \Lambda(\Delta\circ\mathcal{N})}||\Phi_{\Lambda(\Delta\circ\mathcal{M})})\\
		&\leq \min_{\mathcal{M}\in\mathrm{qcCRO}} D(\Phi_{\Delta\circ\mathcal{N}}||\Phi_{\Delta\circ \mathcal{M}})\\
		&= C_{\mathrm{rel}}(\mathcal{N}).
	\end{split}
\end{equation}
The inequality of the fourth line comes from the data processing inequality.
\end{proof}

Similar with the robustness of irreplaceability, the relative entropy of irreplaceability also enjoys the stability under extensions to large systems as shown in the following lemma.
\begin{lemma}\label{lemma:extensionrel}
The relative entropy of irreplaceability, $C_{\mathrm{rel}}(\mathcal{N})$, satisfies $C_{\mathrm{rel}}(\mathcal{N}) = C_{\mathrm{rel}}(\mathcal{N}\otimes I)$, where $I$ is the identity operation of an ancillary system with an arbitrary finite dimension.
\end{lemma}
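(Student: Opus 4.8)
The plan is to reduce everything to the closed-form expression for the relative entropy of irreplaceability proved in Lemma~\ref{lemma:rel}, namely $C_{\mathrm{rel}}(\mathcal{N}) = S(\Delta_{01}(\Phi_{\mathcal{N}})) - S(\Delta_{1}(\Phi_{\mathcal{N}}))$, and then compute the right-hand side for $\mathcal{N}\otimes I$ explicitly. First I would fix notation: write $\mathcal{N}$ as a channel on $\mathcal{D}(\mathcal{H})$, let $I$ be the identity on an ancilla $\mathcal{H}_a$ with $d_a = \dim\mathcal{H}_a$, so that the Choi state of $\mathcal{N}\otimes I$ lives on $(\mathcal{H}\otimes\mathcal{H}_a)_{\mathrm{ref}}\otimes(\mathcal{H}\otimes\mathcal{H}_a)_{\mathrm{out}}$. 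Since $\ket{\Phi^+}_{\mathcal{H}\otimes\mathcal{H}_a} = \ket{\Phi^+}_{\mathcal{H}}\otimes\ket{\Phi^+}_{\mathcal{H}_a}$ up to a reordering of tensor factors, one gets $\Phi_{\mathcal{N}\otimes I}\cong\Phi_{\mathcal{N}}\otimes\Phi_{I}$, where $\Phi_{I} = \ketbra{\Phi^+}$ on $\mathcal{H}_a\otimes\mathcal{H}_a$ is the Choi state of the identity channel. Because the computational basis of a composite system is the tensor product of the subsystem bases, the relevant dephasing maps factor accordingly, $\Delta_{01} = \Delta_{01}^{\mathcal{H}}\otimes\Delta_{01}^{a}$ and $\Delta_{1} = \Delta_{1}^{\mathcal{H}}\otimes\Delta_{1}^{a}$, under the same reordering.

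Next I would isolate the ancilla contribution. Both the full dephasing $\Delta_{01}^{a}$ and the output-only dephasing $\Delta_{1}^{a}$ map $\Phi_{I} = \tfrac{1}{d_a}\sum_{ij}\ketbra{ii}{jj}$ to the same maximally correlated classical state $\tfrac{1}{d_a}\sum_i\ketbra{ii}$, which has von Neumann entropy $\log d_a$. Using additivity of the von Neumann entropy under tensor products,
\[
S(\Delta_{01}(\Phi_{\mathcal{N}\otimes I})) = S(\Delta_{01}^{\mathcal{H}}(\Phi_{\mathcal{N}})) + \log d_a, \qquad S(\Delta_{1}(\Phi_{\mathcal{N}\otimes I})) = S(\Delta_{1}^{\mathcal{H}}(\Phi_{\mathcal{N}})) + \log d_a .
\]
Subtracting, the $\log d_a$ terms cancel and Lemma~\ref{lemma:rel} (applied once for $\mathcal{N}\otimes I$ and once for $\mathcal{N}$) yields $C_{\mathrm{rel}}(\mathcal{N}\otimes I) = S(\Delta_{01}^{\mathcal{H}}(\Phi_{\mathcal{N}})) - S(\Delta_{1}^{\mathcal{H}}(\Phi_{\mathcal{N}})) = C_{\mathrm{rel}}(\mathcal{N})$, which is the claim.

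The only delicate point — and it is really just careful bookkeeping rather than a genuine obstacle — is to keep track of the permutation of tensor factors so that $\Phi_{\mathcal{N}\otimes I}$ honestly splits as $\Phi_{\mathcal{N}}\otimes\Phi_{I}$ and the dephasing operations $\Delta_{01},\Delta_{1}$ respect that splitting; once this is in place the computation is immediate. As an alternative route I could mirror the proof of Lemma~\ref{lemma:extension}: the bound $C_{\mathrm{rel}}(\mathcal{N}\otimes I)\le C_{\mathrm{rel}}(\mathcal{N})$ follows by feeding $\mathcal{M}\otimes I$ — a qcCRO by Corollary~\ref{coro:tensor}, since $I$ is a qcCRO — into the definition and invoking additivity of the relative entropy, while the reverse bound follows from monotonicity under the free superchannel that appends a maximally mixed ancilla and traces it out afterwards. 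The analytic-formula argument above is the shorter one, so I would present that as the main proof.
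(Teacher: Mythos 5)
Your proposal is correct and follows essentially the same route as the paper's proof: both factor the Choi state as $\Phi_{\mathcal{N}\otimes I}\cong\Phi_{\mathcal{N}}\otimes\ketbra{\Phi^+}$ on the ancilla, apply the closed-form expression of Lemma~\ref{lemma:rel}, use additivity of the von Neumann entropy, and cancel the ancilla contribution via $\Delta_{1}(\ketbra{\Phi^+})=\Delta_{01}(\ketbra{\Phi^+})$ (you additionally evaluate this common entropy as $\log d_a$, which the paper does not need to do). The bookkeeping of tensor factors you flag is indeed the only delicate point, and your treatment of it is sound.
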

\begin{proof}
Given a quantum channel, $\mathcal{N}$, on Hilbert space $\mathcal{H}_0$ and the identity operation $I_{0'}$ on an ancillary system $\mathcal{H}_{0'}$, the Choi-state of $\mathcal{N}$ and $\mathcal{N}\otimes I_{0'}$ are shown below.
\begin{align}
\Phi_{\mathcal{N}} &= I_1 \otimes \mathcal{N} (\ketbra{\Phi^+}_{10}),\\
\Phi_{\mathcal{N}\otimes I_{0'}} &= (I_1 \otimes \mathcal{N})\otimes (I_{1'}\otimes I_{0'}) (\ketbra{\Phi^+}_{10}\otimes \ketbra{\Phi^+}_{1'0'}) = \Phi_{\mathcal{N}}\otimes \ketbra{\Phi^+}_{1'0'}.
\end{align}
Here, $\ketbra{\Phi^+}_{10}$ and $\ketbra{\Phi^+}_{1'0'}$ are maximally entangled state on Hilbert space $\mathcal{H}_{1}\otimes \mathcal{H}_{0}$ and $\mathcal{H}_{1'}\otimes \mathcal{H}_{0'}$ respectively. $\mathcal{H}_{1}$ and $\mathcal{H}_{1'}$ are another two ancillary systems for defining Choi-states. According to Lemma~\ref{lemma:rel}, the relative entropy of irreplaceability of $\mathcal{N}\otimes I_{0'}$ is given by
\begin{equation}
\begin{split}
C_{\mathrm{rel}}(\mathcal{N}\otimes I_{0'}) &= S(\Delta_{00'11'}(\Phi_{\mathcal{N}\otimes I_{0'}})) - S(\Delta_{11'}(\Phi_{\mathcal{N}}\otimes I_{0'}))\\
&= S(\Delta_{01}(\Phi_{\mathcal{N}})\otimes \Delta_{0'1'}(\ketbra{\Phi^+}_{1'0'})) - S(\Delta_{1}(\Phi_{\mathcal{N}})\otimes \Delta_{1'}(\ketbra{\Phi^+}_{1'0'}))\\
&= ( S(\Delta_{01}(\Phi_{\mathcal{N}})) - S(\Delta_{1}(\Phi_{\mathcal{N}})) ) + ( S(\Delta_{0'1'}(\ketbra{\Phi^+}_{1'0'})) - S(\Delta_{1'}(\ketbra{\Phi^+}_{1'0'})) )\\
&= C_{\mathrm{rel}}(\mathcal{N}) + ( S(\Delta_{0'1'}(\ketbra{\Phi^+}_{1'0'})) - S(\Delta_{0'1'}(\ketbra{\Phi^+}_{1'0'})) )\\
&= C_{\mathrm{rel}}(\mathcal{N}).
\end{split}
\end{equation}
Here, the line 3 utilises $S(\rho\otimes \sigma) = S(\rho) + S(\sigma)$ and the line 4 uses the equality $\Delta_{1'}(\ketbra{\Phi^+}_{1'0'}) = \Delta_{0'1'}(\ketbra{\Phi^+}_{1'0'})$. Now we obtain $C_{\mathrm{rel}}(\mathcal{N}\otimes I_{0'}) = C_{\mathrm{rel}}(\mathcal{N})$ and complete the proof.
\end{proof}

\end{document}